\theoremstyle{plain}
\newtheorem{theorem}{Theorem}[section]
\newtheorem{corollary}[theorem]{Corollary}
\newtheorem{assumption}[theorem]{Assumption}
\newtheorem{lemma}[theorem]{Lemma}
\newtheorem{proposition}[theorem]{Proposition}
\theoremstyle{definition}
\newtheorem{definition}{Definition}
\theoremstyle{remark}
\newtheorem{remark}{Remark}
\newcommand{\cA}{\mathcal{A}}
\newcommand{\cB}{\mathcal{B}}
\newcommand{\cC}{\mathcal{C}}
\newcommand{\cF}{\mathcal{F}}
\newcommand{\cW}{\mathcal{W}}
\newcommand\abs[1]{\left|#1\right|}
\newcommand\Ep[1]{\mathbb{E}_{\P} \left[#1\right]}
\newcommand\normLip[1]{\|#1\|_{\rm Lip}}
\def \E{\mathbb{E}}
\def \N{\mathbb{N}}
\def \P{\mathbb{P}}
\def \Q{\mathbb{Q}}
\def \R{\mathbb{R}}
\def \X{\mathbb{X}}
\newcommand{\norm}[1]{\left\Vert#1\right\Vert}
\newcommand{\cNN}{\mathcal{NN}}
\newcommand{\fn}{r}
\newcommand{\normf}[1]{\norm{#1}_{\fn}}
\def \hX{\widehat{X}}
\newcommand{\hp}[2]{\widehat{\P}}
\def \hF{\widehat{F}}
\def \hP{\widehat{\P}}
\def \ind{\mathbf{1}}
\def \ide{\mathrm{Id}}
\def \ind{\mathbf{1}}
\def \lipone{\cA^0_{\rm Lip}}
\def \hphi{\widehat{\varphi}}
\def \argmin{\mathrm{argmin}}
\newcommand{\ES}{\mathop {\rm ES}\nolimits}
\newcommand{\entr}{\mathop {\rm Entr}\nolimits}
\newcommand{\trho}[2]{\widetilde{\rho}\left(#1|#2\right)}
\newcommand{\trhoone}[2]{\widetilde{\rho}_1\left(#1|#2\right)}
\newcommand{\trhotwo}[2]{\widetilde{\rho}_2\left(#1|#2\right)}
\begin{document}
\title{Risk Sharing with Deep Neural Networks}
\author[1]{M. Burzoni\thanks{
Email: matteo.burzoni@unimi.it.}}
\author[1]{A. Doldi\thanks{
Email: alessandro.doldi@unimi.it.}}
\author[2]{E. Monzio Compagnoni\thanks{
Email: enea.monziocompagnoni@unibas.ch.\\ All authors contributed equally to this paper.}}

\affil[1]{Universit\`a degli Studi di Milano}
\affil[2]{University of Basel}

\maketitle

\abstract{\noindent We consider the problem of optimally sharing a financial position among agents with potentially different reference risk measures. The problem is equivalent to computing the infimal convolution of the risk metrics and finding the so-called optimal allocations. We propose a neural network-based framework to solve the problem and we prove the convergence of the approximated inf-convolution, as well as the approximated optimal allocations, to the corresponding theoretical values. We support our findings with several numerical experiments.\\

\noindent \textbf{Keywords:} Risk Sharing; Deep Neural Networks; Risk Allocation; Inf-convolution; Universal Approximation Theorem.\\
\textbf{JEL classification:} C45, C63, G32.
}

\section{Introduction}
We consider the problem faced by $n$ economic agents, with reference risk measures $\rho_1,\ldots,\rho_n$, who want to share the risk carried by a certain financial position, represented by a random variable $X$. The goal is to write $X$ as the sum of $n$ random variables $X_1,\ldots,X_n$ so that the sum of the risk of the single agents, $\rho_1(X_1)+\cdots+\rho_n(X_n)$, is minimized. The problem is well known in the mathematical finance literature under the name of \emph{risk sharing}, and it amounts to the calculation of the infimal convolution (\emph{inf-convolution}) defined as follows:
\begin{equation}\label{eq:inf_conv_n}
		\rho_1\square\cdots\square \rho_n(X):=\inf\bigg\{\sum_{i=1}^n\rho_i(X_i)\ :\ \sum_{i=1}^nX_i=X\bigg\}.
\end{equation}
The seminal paper \citep{BE05}, which introduced inf-convolutions in the context of (convex) risk measures, originated a vast offspring of literature. \citep{Acciaio} and  \citep{FilipovicSvindland08} studied the case without monotonicity assumptions on the $(\rho_i)_{i=1,\ldots,n}$, \citep{MRG14} considered the case of cash-subadditive and quasi-convex functionals, while multivariate risks are treated in
 \citep{Carlier1,Carlier2}. We also mention 
\citep{Dana_LeVan},  \citep{Heath_Ku},  \citep{Tsanakas},
 \citep{Weber17}, \citep{LS18} and \citep{ELW18,ELW182} for further extensions and we refer to \citep{Ruschendorf13} for a comprehensive overview of the topic.

The most relevant results for our analysis were established by \citep{FilipovicSvindland08,JST07} in the study of the so-called \emph{optimal allocations} for $\rho_1\square\cdots\square \rho_n$, namely, the minimizers of the right-hand side of \eqref{eq:inf_conv_n}. For the case of law-invariant risk measures, it was demonstrated that comonotonicity plays a key role. In fact, optimal allocations can be found in the form $f_1(X),\ldots, f_n(X)$ for some non-decreasing, real-valued, maps $f_1,\ldots,f_n$, which sum up to the identity.
This key aspect inspires the numerical framework that we propose in this paper. Indeed, it can be shown that the functions $f_i$, for $i=1,\ldots,n$, are Lipschitz continuous functions, thus, they can be very well approximated by neural networks. Despite the abundance of theoretical results on the risk-sharing topic, we are not aware of a general framework for the numerical computation of the solutions which works under very little assumptions, such as law-invariance and convexity. Indeed, the aforementioned literature usually focused on the explicit (or semi-explicit) computation of the optimal allocations in some special cases. This operation obviously requires an exact computation of the inf-convolution and their minimizers, which needs to be worked out case-by-case. Some risk measures of interest, for which explicit solutions can be provided, are the \emph{entropic risk measure} and \emph{expected shortfall} among the family of convex risk measures and, more recently, the (Range) Value at Risk among the non-convex ones, see \citep{ELW18,ELW182}.   

Using a suitable version of the Universal Approximation Theorem, we prove in Section \ref{sec:framework} that
\[
		\rho_1\square \rho_2(X)=\inf\bigg\{\rho_1(f(X))+\rho_2(X-f(X)) : f\in {\cNN}\bigg\},
\]
where $\cNN$ is a suitable class of feed-forward neural networks. 
Deep neural networks (DNNs) have been proven to be very effective in solving a great variety of problems and in this paper, we show that this is the case also in a risk sharing context. The precise results are stated in Theorem \ref{thm:approx1} and in Theorem \ref{thm:approx2}, which constitute the main results of the section. Note that the restriction to $n=2$ agents is dictated by the convenience of exposition, but the case $n\geq2$ can be covered similarly. Of course, we do not exclude that other methods could be successfully applied. In Appendix \ref{app:AltMod}, we discuss some possible alternative approaches.

In view of establishing a rigorous framework for our numerical experiments, we devote Section \ref{sec:conv} to the convergence analysis of the historical estimators of the inf-convolution, as well as of their corresponding optimal allocations. Such estimators are constructed simply by applying the risk measures of the agents to the empirical distribution of a large sample of $X$ (see e.g. \citep{ContDeguestScandolo10} for an overview). The main convergence result of the section is Theorem \ref{thm:main_conv} which provides the theoretical justification of the experiments of Section \ref{sec:numerics}. We test our findings in a series of numerical experiments with different risk measures, different architectures, and different distributions for $X$ (see Section \ref{subs:algo} for the details about the framework) obtaining consistent results. As for the risk measures, we use the following: 
\begin{enumerate}
\item Entropic risk measure with parameter $\beta>0$: \begin{equation}\label{def:entropic}
    \entr_{\beta}(X):=\beta\log\E\big[{e^{- X/\beta}}\big];
\end{equation}
\item Expected Shortfall (ES) at level $\alpha\in(0,1)$:
\begin{equation*}
    \ES_\alpha(X)= \frac{1}{\alpha} \int_{0}^{\alpha}V@R_u(X)\mathrm{d}u,\qquad V@R_u(X):=\inf\{m\in\R \ : \ \P(X+m<0)\leq u\};
\end{equation*}
\item Distortion risk measure for $\mu\in \mathrm{Prob}([0,1])$, see \cite[Section 4.6]{FollmerSchied2}:
\begin{equation}\label{def:distRM}
\rho_\mu(X):=\int_0^1 \ES_\alpha(X) \mu(d\alpha).
\end{equation}
\end{enumerate}
The situation where all agents adopt an entropic risk measure, respectively ES, admits an explicit and simple solution for both the value of the inf-convolution and the optimal allocations. We test our numerical approximation in these cases only to confirm that the trained DNNs are converging to the known solutions. We then proceed in testing our algorithms in more complex situations. We cover the case of risk sharing between agents with distortion-type risk measures and between heterogeneous agents, that is, two agents using risk measures of different types --- one has the entropic and the other adopts either the expected shortfall or a distortion risk measure. In all such cases, we confirm that the trained network is able to recover the expected form of the optimal allocations, known from \citep{ELW18} and \citep{JST07}. In the last experiment, where we consider the convolution of an entropic risk measure with a distortion risk measure, we do not have any information about the solution.
 
We conclude this introduction with the frequently used notation.
For a metric space $\X$, $\cB(\X)$ denotes the Borel 
$\sigma $-algebra and $m\cB(\X)$ denotes the class of
real-valued, Borel-measurable functions on $\X$. We define the
following sets: 
\begin{align*}
	\mathrm{ca}(\X)& :=\left\{ \gamma :\cB(\X%
	)\rightarrow (-\infty ,+\infty )\ : \ \gamma \text{ is finite signed Borel
		measure on }\X\right\}; \\
	\mathrm{Meas}(\X)& :=\{\mu :\cB(\X)\rightarrow
	\lbrack 0,+\infty )\ : \ \mu \text{ is a non negative finite Borel measure on 
	}\X\}; \\
	\mathrm{Prob}(\X)& :=\{Q:\cB(\X)\rightarrow \lbrack
	0,1]\ : \ Q\text{ is a probability Borel measure on }\X\}; \\
	\cC(\X)& :=\{\varphi :\X\rightarrow {\mathbb{R}}\ : \
	\varphi \text{ is continuous on }\X\}; \\
	\cC_{b}(\X)& :=\{\varphi :\X\rightarrow {\mathbb{R}};
	\ : \ \varphi \text{ is bounded and continuous on }\X\};\\
	\mathrm{Prob}^p(\R)& :=\{\Q\in\mathrm{Prob}(\R) \ : \ \int_{\R}|x|^p\mathrm{d}\Q(x)<+\infty\},\,\,p\in[1,+\infty);\\
	\mathrm{Prob}^\infty_K(\R)& :=\{\Q\in\mathrm{Prob}(\R) \ : \ \Q([-K,K])=1\},\ K>0;\\
 \mathrm{Prob}^\infty(\R)& :=\bigcup_{K>0}\mathrm{Prob}^\infty_K(\R).
\end{align*}

\section{The theoretical framework}\label{sec:framework}
Let $(\Omega,\cF,\P)$ be  a standard non-atomic probability space (see e.g. \citep{Sv10} for details about the possibility of dropping the standardness assumption). The Banach space $L^p(\Omega,\cF,\P)$ for $p\in[1,\infty)$ is the set of $p$-integrable random variables on $(\Omega,\cF,\P)$, endowed with the norm $\|\cdot\|_p:=(\E[|\cdot|^p])^{1/p}$. The Banach space $L^\infty$ is the set of essentially bounded random variables, endowed with the supremum norm $\|\cdot\|_\infty$. The order relation $\le$ on such spaces is the one induced by the $\P$-a.s.\ ordering. We first recall the definition of monetary risk measures and some of their standard properties. We refer to the book \citep{FollmerSchied2} for a thorough presentation of the topic.
\begin{definition}
	\label{def:propertiesrm}Let $p\in[1,\infty]$ and $\rho:L^p(\Omega,\cF,\P)\to(-\infty,\infty]$ a functional. 
	\begin{itemize}
		\item $\rho$ is \emph{normalized} if $\rho(0)=0$;
		\item $\rho$ is \emph{finite} if $\rho(X)<\infty$ for every $X\in L^p(\Omega,\cF,\P)$;
		\item $\rho$ is \emph{monotone} if $\rho(X)\le\rho(Y)$, whenever $X\ge Y$ $\P$-a.s.;
		\item $\rho$ is \emph{cash additive} if $\rho(X+c)=\rho(X)-c$, for every $X\in L^p(\Omega,\cF,\P)$ and $c\in\R$;
		\item $\rho$ is \emph{convex} if $\rho(\lambda X+(1-\lambda)Y)\le \lambda\rho(X)+(1-\lambda)\rho(Y)$, for every $X,Y\in L^p(\Omega,\cF,\P)$ and $\lambda \in[0,1]$.
	\end{itemize}
	Any normalized, finite, monotone, and cash-additive $\rho$ is called a \emph{monetary risk measure}. If $\rho$ is also convex, it is called a \emph{convex risk measure}.
	\begin{itemize}
		\item $\rho$ is \emph{law-invariant} if $\rho(X)=\rho(Y)$ whenever $X\sim Y$;
		\item $\rho$ satisfies the \emph{Lebesgue Property} if $\rho(X)=\lim_{n\to+\infty}\rho(X_n)$ for any sequence $(X_n)_{n\in\N}\subseteq L^p(\Omega,\cF,\P)$ and $X\in L^p(\Omega,\cF,\P)$ such that: there exists $Z\in L^p(\Omega,\cF,\P)$ with $\abs{X_n}\leq Z\,\P$-a.s.\ for all $n\in\N$ and $\lim_{n\to+\infty}X_n=X\,\P$-a.s.\ holds.
	\end{itemize}
\end{definition}

We next introduce the concept of infimal convolution (inf-convolution in short) of convex risk measures $\rho_1\ldots,\rho_n$.
For ease of exposition, we restrict ourselves to the case of $n=2$, however, all the results generalize
to the case of an arbitrary $n\in\N$.
\begin{definition} Let $p\in[1,\infty]$. Given two functionals $\rho_1,\rho_2: L^p(\Omega,\cF,\P)\to(-\infty,\infty]$, their \emph{infimal convolution} is defined as:
	\begin{equation}\label{def:infconv}
		\rho_1\square \rho_2(X):=\inf\bigg\{\rho_1(X_1)+\rho_2(X_2)\ :\ X_1,X_2\in L^p,\  X_1+X_2=X\bigg\},\quad X\in L^p.
	\end{equation}
	Every couple $(X_1,X_2)\subseteq L^p$ such that $X_1+X_2=X$ is called an \emph{allocation} for $X$. Additionally, we say that an allocation is
	\begin{itemize}
		\item An \emph{optimal} allocation if it is a minimizer of the right-hand side of \eqref{def:infconv}; 
		\item A \emph{comonotonic} allocation if it is of the form $(f_1(X),f_2(X))$ for some increasing\footnote{Increasing is understood in the non-strict sense.} functions $f_1,f_2:\R\to\R$ such that $f_1+f_2=\ide$, where $\ide:\R\rightarrow\R$ denotes the identity function $\ide(x)=x$. 
	\end{itemize}
\end{definition}
The following well-known result, see  \citep{FilipovicSvindland08} Theorem 2.5, shows that for lower semi-continuous (l.s.c.) law-invariant convex risk measures, optimal allocations can be found among the class of comonotonic allocations. 
\begin{theorem}
	\label{thm:FS08}
	Let $p\in[1,\infty]$ and $\rho_1,\rho_2:L^p\rightarrow (-\infty,\infty]$ be l.s.c.\ law-invariant convex cash additive functions. Then $\rho_1\square \rho_2:L^p(\Omega,\cF,\P)\rightarrow [-\infty,\infty]$ is a l.s.c.\ law-invariant convex cash additive function. Moreover, there exist increasing functions $f_1,f_2:\R\rightarrow \R$ such that $f_1+f_2=\ide$ and 
	\[\rho_1\square \rho_2(X)=\rho_1(f_1(X))+\rho_2(f_2(X)).\]
\end{theorem}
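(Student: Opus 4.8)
The plan is to separate the purely algebraic conclusions, which are elementary, from the comonotone representation, where law invariance and non-atomicity enter decisively.

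\emph{Convexity and cash additivity.} These follow directly from \eqref{def:infconv} and need neither law invariance nor non-atomicity. For convexity, given allocations $X=X_1+X_2$ and $Y=Y_1+Y_2$ and $\lambda\in[0,1]$, the pair $(\lambda X_1+(1-\lambda)Y_1,\,\lambda X_2+(1-\lambda)Y_2)$ is an allocation of $\lambda X+(1-\lambda)Y$; applying convexity of $\rho_1,\rho_2$ and then taking the infimum over the two allocations separately yields the convexity inequality (in the extended reals, with the usual conventions since the value may be $-\infty$). For cash additivity, adding a constant $c$ to the first component of any allocation of $X$ gives an allocation of $X+c$, so cash additivity of $\rho_1$ gives $\rho_1\square\rho_2(X+c)\le\rho_1\square\rho_2(X)-c$, and the reverse inequality is symmetric.

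\emph{Reduction to comonotone allocations.} The core is the \emph{comonotone improvement} principle: for any allocation $X=X_1+X_2$ there exist increasing $f_1,f_2:\R\to\R$ with $f_1+f_2=\ide$ such that $f_i(X)$ is dominated by $X_i$ in the convex order, $i=1,2$. This is where non-atomicity is used, to realize the required rearrangements. Since a law-invariant convex cash-additive functional is consistent with the convex order — $\rho_i(f_i(X))\le\rho_i(X_i)$ whenever $f_i(X)$ is convex-order dominated by $X_i$ (via dilatation monotonicity, which does not require monotonicity of $\rho_i$) — the comonotone allocation never increases the total cost, giving
\[
\rho_1\square\rho_2(X)=\inf\Big\{\rho_1(f_1(X))+\rho_2(f_2(X)):f_1,f_2\ \text{increasing},\ f_1+f_2=\ide\Big\}.
\]
Law invariance of $\rho_1\square\rho_2$ is now immediate: if $X\sim Y$ then $f_i(X)\sim f_i(Y)$, so each admissible comonotone allocation has the same cost for $X$ and $Y$, forcing the two infima to coincide.

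\emph{Compactness, existence of a minimizer, and lower semicontinuity.} If $f_1,f_2$ are increasing with $f_1+f_2=\ide$, then for $x<y$ both increments $f_i(y)-f_i(x)$ lie in $[0,y-x]$, so each $f_i$ is $1$-Lipschitz. The cost is invariant under the shift $(f_1,f_2)\mapsto(f_1+c,f_2-c)$ by cash additivity, so we may normalize $f_1(0)=0$; then $|f_i(X)|\le|X|$, whence $f_i(X)\in L^p$ and the family is uniformly bounded on compact sets. For fixed $X$, take a minimizing sequence $(f_1^{(k)},f_2^{(k)})$: Arzelà–Ascoli with a diagonal argument extracts a locally uniformly convergent subsequence $f_i^{(k)}\to f_i$, so $f_i^{(k)}(X)\to f_i(X)$ almost surely with $|f_i^{(k)}(X)|\le|X|$, hence in $L^p$ (for $p<\infty$ by dominated convergence; for $p=\infty$ one reads l.s.c.\ in the Fatou sense, i.e.\ under bounded a.s.\ convergence). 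Lower semicontinuity of $\rho_1,\rho_2$ then gives $\rho_1(f_1(X))+\rho_2(f_2(X))\le\liminf_k[\rho_1(f_1^{(k)}(X))+\rho_2(f_2^{(k)}(X))]=\rho_1\square\rho_2(X)$, so $(f_1(X),f_2(X))$ is optimal, proving the representation. The same scheme yields l.s.c.\ of the inf-convolution: for $X_n\to X$ in $L^p$, take optimal comonotone allocations realizing $\liminf_n\rho_1\square\rho_2(X_n)$, normalize and extract $f_i^{(n)}\to f_i$ locally uniformly; the Lipschitz bound gives $\|f_i^{(n)}(X_n)-f_i(X)\|_p\le\|X_n-X\|_p+\|f_i^{(n)}(X)-f_i(X)\|_p\to0$, and l.s.c.\ of the $\rho_i$ gives $\rho_1\square\rho_2(X)\le\rho_1(f_1(X))+\rho_2(f_2(X))\le\liminf_n\rho_1\square\rho_2(X_n)$.

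\emph{Main obstacle.} The decisive and least elementary ingredient is the comonotone improvement principle: it is what turns an infimum over the vast set of $L^p$-allocations into one over the compact, essentially finite-dimensional family of $1$-Lipschitz comonotone maps, and it is the only place where non-atomicity is genuinely needed. A secondary delicate point is passing to the limit under merely \emph{lower semicontinuous} $\rho_i$: one must first secure $L^p$ (or, for $p=\infty$, bounded a.s.) convergence of $f_i^{(k)}(X)$ before invoking l.s.c., which is precisely what the uniform Lipschitz bound and the normalization $f_1(0)=0$ deliver.
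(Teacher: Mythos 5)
The paper offers no proof of this statement at all — it is quoted directly from \citep{FilipovicSvindland08} (their Theorem 2.5) — and your argument reconstructs essentially the strategy of that cited proof: the algebraic properties from the definition, comonotone improvement together with convex-order (dilatation) monotonicity of law-invariant convex l.s.c.\ functionals to reduce the infimum to the family of $1$-Lipschitz comonotone allocations normalized by $f_1(0)=0$, and then Arzel\`a--Ascoli compactness plus lower semicontinuity to get attainment and l.s.c.\ of the convolution, the same compactness device the paper itself later isolates in Lemma \ref{lem:ascoliarze} and the same Lipschitz/normalization observations it records after the theorem and in Corollary \ref{cor:existoptimuminpipine}. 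The one point you should make explicit rather than build into the hypothesis is the case $p=\infty$: reading l.s.c.\ ``in the Fatou sense'' is a priori stronger than norm-l.s.c., and is legitimate here only because on a standard non-atomic space law-invariant convex norm-l.s.c.\ functionals on $L^\infty$ automatically have the Fatou property (Jouini--Schachermayer--Touzi), a nontrivial result that deserves citation at that step.
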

It is not difficult to see that the functions $f_1, f_2$ are necessarily Lipschitz continuous. Indeed, for $x\ge y$, we can write $f_1(x)-f_1(y)+f_2(x)-f_2(y)=x-y$ and, using that both functions are increasing, we obtain the inequality $|f_i(x)-f_i(y)|\le |x-y|$ for $i=1,2$. For the case  $x\le y$, the argument is analogous. In particular, it holds $\normLip{f_i}\le 1$ for $i=1,2$, where 
\[\normLip{f}:=\inf\{L>0\ :\ |f(x)-f(y)|\le L |x-y|,\ \forall x,y\in\R\}.
\]
We also observe that, for monetary risk measures, if $(f_1(X),f_2(X))$ is an optimal allocation, the same is true for $(f_1(X)-c,f_2(X)+c)$ for an arbitrary $c\in\R$. This is also called \emph{rebalancing of cash}.  Without loss of generality, the function $f_1$ can be therefore chosen to satisfy $f_1(0)=0$, while still preserving the Lipschitz property.  
Combining these two observations we obtain the following corollary to Theorem \ref{thm:FS08}.
\begin{corollary}
	\label{cor:existoptimuminpipine}
	Under the assumptions of Theorem \ref{thm:FS08}, we have 
	\begin{equation}
		\label{eq:definfconv}
		\rho_1\square \rho_2(X)=\min\bigg\{\rho_1(f(X))+\rho_2(X-f(X))\ :\ f\in \lipone\bigg\},
	\end{equation}
	where 
	\begin{equation}
		\label{lip1}
		\lipone:=\left\{\ f:\R\to\R\ : f(0)=0,\ \normLip{f}\le 1,\ \normLip{\ide-f}\le 1
		\right\}
	\end{equation}
	is the set of \emph{normalized Lipschitz allocations}.
\end{corollary}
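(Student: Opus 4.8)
The plan is to prove the claimed identity as a pair of inequalities, while simultaneously checking that the infimum over $\lipone$ is attained, so that writing $\min$ is justified. The substantive content is entirely carried by Theorem \ref{thm:FS08}: the two observations recorded between its statement and the corollary --- that the optimal comonotonic maps satisfy $\normLip{f_i}\le 1$, and that optimal allocations are invariant under rebalancing of cash --- are precisely what converts the comonotonic optimizer produced by the theorem into an admissible element of $\lipone$. Thus the proof is essentially an assembly of facts already in hand.

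First I would dispatch the inequality ``$\le$''. For every $f\in\lipone$ the pair $(f(X),X-f(X))$ is an allocation of $X$, so its cost $\rho_1(f(X))+\rho_2(X-f(X))$ is one of the competitors in the infimum \eqref{def:infconv} defining $\rho_1\square\rho_2(X)$. Taking the infimum over $f\in\lipone$ yields $\rho_1\square\rho_2(X)\le\inf\{\rho_1(f(X))+\rho_2(X-f(X)):f\in\lipone\}$. This direction is immediate and uses nothing beyond the definition of the inf-convolution.

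For the reverse inequality, together with attainment, I would invoke Theorem \ref{thm:FS08} to obtain increasing maps $f_1,f_2:\R\to\R$ with $f_1+f_2=\ide$ and $\rho_1\square\rho_2(X)=\rho_1(f_1(X))+\rho_2(f_2(X))$. The Lipschitz observation gives $\normLip{f_1}\le1$ and $\normLip{\ide-f_1}=\normLip{f_2}\le1$. The only defect is that $f_1(0)$ need not vanish, so I would normalize by setting $g:=f_1-f_1(0)$. A constant shift leaves Lipschitz seminorms unchanged, so $g(0)=0$, $\normLip{g}\le1$, and $\normLip{\ide-g}=\normLip{f_2+f_1(0)}\le1$; hence $g\in\lipone$. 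Cash additivity of $\rho_1$ and $\rho_2$ then gives $\rho_1(g(X))+\rho_2(X-g(X))=\rho_1(f_1(X))+f_1(0)+\rho_2(f_2(X))-f_1(0)=\rho_1\square\rho_2(X)$, so this $g\in\lipone$ realizes the value $\rho_1\square\rho_2(X)$. Combined with the first inequality, this shows both that the infimum over $\lipone$ equals $\rho_1\square\rho_2(X)$ and that it is attained at $g$, upgrading $\inf$ to $\min$.

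I do not expect a genuine obstacle here, since every nontrivial ingredient --- existence of a comonotonic optimizer, the bound $\normLip{f_i}\le1$, and cash invariance --- is supplied in advance. The only points deserving care are bookkeeping: verifying that the normalized map $g$ satisfies all three constraints defining $\lipone$ at once, and applying the cash-additivity cancellation with the correct sign $c=f_1(0)$. One may also note that nothing in the argument forces $\rho_1\square\rho_2(X)$ to be finite; the equality from Theorem \ref{thm:FS08} propagates through the cash shift in $[-\infty,\infty]$ unchanged.
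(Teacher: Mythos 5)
Your proposal is correct and follows exactly the paper's own route: the inequality $\le$ is immediate from the definition of the inf-convolution, and the reverse inequality with attainment comes from the comonotonic optimizer of Theorem \ref{thm:FS08} together with the two observations the paper records just before the corollary, namely the bound $\normLip{f_i}\le 1$ and the normalization $f_1(0)=0$ via rebalancing of cash, with cash additivity cancelling the shift. Your bookkeeping (the shift $g:=f_1-f_1(0)$, the sign of the cash adjustment, and membership of $g$ in $\lipone$) matches the paper's intended argument, so there is nothing to add.
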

Any function $f\in\lipone$ induces the  allocation $(f(X), X-f(X))$. Indeed the sum equals $X$ by construction and, using the Lipschitz property, it is clear that if $X\in L^p(\Omega,\cF,\P)$ then $f(X)\in L^p(\Omega,\cF,\P)$ as well. With a slight abuse of terminology, we call \emph{allocation} also the pair of functions $(f,\ide-f)$. By denoting $\P_X$ the law of $X$ under $\P$, this terminology becomes accurate when we work on the probability space $(\R,\cB(\R),\P_X)$, as it will often be the case below. The following is a sufficient criterion for guaranteeing the uniqueness of optimal allocations, see \citep{FilipovicSvindland08} Proposition 3.1. For $X,Y\in L^p$ we use the notation $X-Y\notin\R$ for indicating that the difference $X-Y$ is not a constant random variable.
\begin{proposition}
	\label{prop:uniqueopt}
Under the assumptions of Theorem \ref{thm:FS08}, suppose additionally that $\rho_1$ is strictly convex in the sense that
	\begin{equation}
		\label{strictcnvxRM}
		\rho_1(\lambda Y+(1-\lambda) Z)<\lambda \rho_1(Y)+(1-\lambda)\rho_1(Z)\quad\forall \lambda \in (0,1), Y,Z\in L^p \text{ s.t. }Y-Z\notin\R.
	\end{equation}
	Then the optimal allocation for $\rho_1\square\rho_2$ is unique up to rebalancing of cash, namely, for any pair of optima $(X_1,X_2)$, $(\hX_1,\hX_2)\in L^p(\Omega,\cF,\P)\times L^p(\Omega,\cF,\P)$ it holds $\hX_i=X_i+c_i$ for some $c_i\in\R$ with $c_1=-c_2$ and for $i=1,2$. 
\end{proposition}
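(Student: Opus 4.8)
The plan is to obtain uniqueness through a standard midpoint (averaging) argument, with the strict convexity of $\rho_1$ supplying the rigidity that pins down $X_1$ up to a constant. Throughout, let $(X_1,X_2)$ and $(\hX_1,\hX_2)$ be two optimal allocations for $X$ and write $m:=\rho_1\square\rho_2(X)$. Since optima are assumed to exist with finite cost, $m\in\R$; because $\rho_1,\rho_2$ take values in $(-\infty,\infty]$ and $\rho_1(X_1)+\rho_2(X_2)=\rho_1(\hX_1)+\rho_2(\hX_2)=m$, each of the four numbers $\rho_1(X_1),\rho_2(X_2),\rho_1(\hX_1),\rho_2(\hX_2)$ is finite. (If $m\notin\R$ there is nothing to prove, so we may assume this.)

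First I would form the averaged allocation $\left(\tfrac{X_1+\hX_1}{2},\tfrac{X_2+\hX_2}{2}\right)$, which is again an allocation for $X$ since its two components sum to $\tfrac12(X+X)=X$. Applying convexity of $\rho_1$ and of $\rho_2$ separately gives
\begin{align*}
\rho_1\!\left(\tfrac{X_1+\hX_1}{2}\right)&\le\tfrac12\rho_1(X_1)+\tfrac12\rho_1(\hX_1),\\
\rho_2\!\left(\tfrac{X_2+\hX_2}{2}\right)&\le\tfrac12\rho_2(X_2)+\tfrac12\rho_2(\hX_2),
\end{align*}
and summing the two shows the cost of the averaged allocation is at most $\tfrac12 m+\tfrac12 m=m$. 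On the other hand, being an admissible allocation for $X$, its cost is at least the infimum $m$. Hence both displayed inequalities are in fact equalities: writing $a\le b$ and $c\le d$ for them, we have $b+d=m$ and $a+c\ge m$, so $a+c\ge m=b+d\ge a+c$ forces $a+c=b+d$, and together with $a\le b$, $c\le d$ this yields $a=b$ and $c=d$.

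The decisive step is then to invoke the strict convexity hypothesis on $\rho_1$ in contrapositive form. The equality $a=b$ reads $\rho_1\!\left(\tfrac12 X_1+\tfrac12\hX_1\right)=\tfrac12\rho_1(X_1)+\tfrac12\rho_1(\hX_1)$, i.e.\ equality holds in the defining inequality of $\rho_1$ with $\lambda=\tfrac12$, $Y=X_1$, $Z=\hX_1$. Condition \eqref{strictcnvxRM} forbids this whenever $X_1-\hX_1\notin\R$; therefore $X_1-\hX_1$ is a constant, say $\hX_1=X_1+c_1$ for some $c_1\in\R$. Using the two budget constraints $X_1+X_2=X=\hX_1+\hX_2$ I would finally deduce $\hX_2=X-\hX_1=X_2-c_1$, so that setting $c_2:=-c_1$ gives $\hX_i=X_i+c_i$ with $c_1=-c_2$, as claimed.

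I do not anticipate a serious obstacle: the argument rests only on convexity, optimality, and the strict-convexity assumption, and uses neither law-invariance nor cash additivity beyond what guarantees existence of finite-cost optima. The only points requiring mild care are the finiteness of $m$ (so that optimality can be compared numerically rather than in $[-\infty,\infty]$) and the observation that an equality obtained as the sum of two inequalities forces equality in each summand — this is precisely what allows the strict convexity of $\rho_1$ alone, with $\rho_2$ merely convex, to be brought to bear.
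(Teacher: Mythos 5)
Your proof is correct and is essentially the argument behind this statement: the paper gives no proof of its own, citing Proposition 3.1 of Filipovi\'c--Svindland (2008), whose proof is precisely your midpoint argument --- average the two optimal allocations, use optimality to force equality in the convexity inequalities for each $\rho_i$ separately, and invoke strict convexity of $\rho_1$ in contrapositive form to conclude $X_1-\hX_1\in\R$, with the budget constraint then giving $c_2=-c_1$. Your handling of finiteness is also fine, and in fact under the paper's standing assumptions (risk measures are finite by Definition \ref{def:propertiesrm}, and $\rho_1,\rho_2>-\infty$ rules out $m=-\infty$ at an optimum) the caveat $m\notin\R$ never arises.
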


This proposition generalizes to the case $n\geq 2$ when all but one of the initial risk measures are strictly convex (see e.g. the discussion after Corollary 11.14 in \citep{Ruschendorf13}).
An example of strictly convex risk measure is the entropic risk measure of \eqref{def:entropic}. We observe here that, for a given $X\in L^p$, uniqueness can be obtained via a small perturbation of $\rho_1$ by guaranteeing, at the same time, that the value of the infimal convolution is close.
\begin{lemma}
	Let $\rho_1,\rho_2$ be law-invariant convex risk measures on $L^p(\Omega,\cF,\P)$, for $p\in[1,+\infty]$. Let $\tilde{\rho}$ be a strictly convex risk measure on $L^p(\Omega,\cF,\P)$. For every $\tilde{\varepsilon}>0$, the risk measure $\rho_{1,\tilde{\varepsilon}}:=(1-\tilde{\varepsilon})\rho_1+\tilde{\varepsilon}\tilde{\rho}$ is a strictly convex risk measure. Moreover, for every $X\in L^p(\Omega,\cF,\P)$ and $\varepsilon>0$, there exists $1>\tilde{\varepsilon}=\tilde{\varepsilon}(X)>0$ such that $|\rho_1\square\rho_2(X)-\rho_{1,\tilde{\varepsilon}}\square\rho_2(X)|<\varepsilon$.
\end{lemma}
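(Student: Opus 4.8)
The statement splits into two independent claims: that $\rho_{1,\tilde{\varepsilon}}$ is again a strictly convex risk measure, and the quantitative approximation of the inf-convolution. The first is immediate. A convex combination, with weights in $(0,1)$, of the normalized, finite, monotone, cash-additive functionals $\rho_1$ and $\tilde{\rho}$ inherits each of these properties termwise, so $\rho_{1,\tilde{\varepsilon}}$ is a convex risk measure; for strictness I would fix $\lambda\in(0,1)$ and $Y,Z\in L^p$ with $Y-Z\notin\R$, bound the $\rho_1$-term by plain convexity and the $\tilde{\rho}$-term by strict convexity \eqref{strictcnvxRM}, and observe that the strict inequality survives because the weight $\tilde{\varepsilon}$ is strictly positive.

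For the approximation I would first record that all the inf-convolutions appearing are finite. Using the allocation $(X,0)$ gives $\rho_1\square\rho_2(X)\le\rho_1(X)<\infty$ and likewise $\tilde{\rho}\square\rho_2(X)\le\tilde{\rho}(X)<\infty$; for the lower bound I would invoke the standard fact that a law-invariant convex risk measure dominates the expectation, i.e. $\rho(\cdot)\ge\E[-\,\cdot\,]$, applied to each summand of an allocation, which yields $\rho_1\square\rho_2(X)\ge\E[-X]$ and $\tilde{\rho}\square\rho_2(X)\ge\E[-X]$, both finite since $X\in L^p\subseteq L^1$. This is the step where law-invariance of $\tilde{\rho}$ is used, as is implicit in the present setting.

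With finiteness in hand, the two-sided estimate follows from two elementary bounds. For the upper bound I would pick, for $\delta=\varepsilon/2$, an allocation $(X_1,X_2)$ that is $\delta$-optimal for $\rho_1\square\rho_2(X)$, plug it into $\rho_{1,\tilde{\varepsilon}}\square\rho_2$, and use the identity $\rho_{1,\tilde{\varepsilon}}(X_1)=\rho_1(X_1)+\tilde{\varepsilon}[\tilde{\rho}(X_1)-\rho_1(X_1)]$ to obtain $\rho_{1,\tilde{\varepsilon}}\square\rho_2(X)\le\rho_1\square\rho_2(X)+\delta+\tilde{\varepsilon}\,C$, where $C:=\tilde{\rho}(X_1)-\rho_1(X_1)$ is a fixed finite constant. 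For the lower bound I would, for an arbitrary allocation $(X_1,X_2)$, regroup $\rho_{1,\tilde{\varepsilon}}(X_1)+\rho_2(X_2)=(1-\tilde{\varepsilon})[\rho_1(X_1)+\rho_2(X_2)]+\tilde{\varepsilon}[\tilde{\rho}(X_1)+\rho_2(X_2)]$ and minorize each bracket by the corresponding inf-convolution, so that after taking the infimum over allocations $\rho_{1,\tilde{\varepsilon}}\square\rho_2(X)\ge(1-\tilde{\varepsilon})\rho_1\square\rho_2(X)+\tilde{\varepsilon}\,\tilde{\rho}\square\rho_2(X)$. Subtracting $\rho_1\square\rho_2(X)$ from both estimates squeezes the difference between $\tilde{\varepsilon}[\tilde{\rho}\square\rho_2(X)-\rho_1\square\rho_2(X)]$ and $\delta+\tilde{\varepsilon}C$; choosing $\tilde{\varepsilon}\in(0,1)$ small enough, depending on $X$ through the fixed finite constants $C$ and $\tilde{\rho}\square\rho_2(X)-\rho_1\square\rho_2(X)$, makes both sides smaller than $\varepsilon$ in absolute value.

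The main obstacle is not the algebra but guaranteeing $\tilde{\rho}\square\rho_2(X)>-\infty$: without it the convex-combination lower bound degenerates, and the factor $(1-\tilde{\varepsilon})$, which is exactly what makes the estimate work, becomes useless. This is why I isolate the finiteness step and lean on law-invariance there. Note that the cruder lower bound $\rho_{1,\tilde{\varepsilon}}\square\rho_2(X)\ge\rho_1\square\rho_2(X)+\tilde{\varepsilon}\inf_{X_1}[\tilde{\rho}(X_1)-\rho_1(X_1)]$, obtained by keeping the full weight on $\rho_1$, is too lossy, since that infimum is typically $-\infty$; retaining the weight $(1-\tilde{\varepsilon})$ on the whole pair $\rho_1,\rho_2$ is the point of the argument.
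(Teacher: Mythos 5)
Your proof is correct, but it takes a genuinely different route from the paper's. The paper does not touch arbitrary $L^p$-allocations at all: it first invokes Corollary \ref{cor:existoptimuminpipine} to write both $\rho_1\square\rho_2(X)$ and $\rho_{1,\tilde{\varepsilon}}\square\rho_2(X)$ as infima of $\Psi_{\tilde{\varepsilon}}(f):=\rho_{1,\tilde{\varepsilon}}(f(X))+\rho_2(X-f(X))$ over the class $\lipone$, and then proves \emph{uniform} convergence of the objectives there: $\abs{\Psi_{\tilde{\varepsilon}}(f)-\Psi_0(f)}=\tilde{\varepsilon}\abs{\rho_1(f(X))-\tilde{\rho}(f(X))}\le\tilde{\varepsilon}K(X)$, where the constant comes from $-\abs{x}\le f(x)\le\abs{x}$ for $f\in\lipone$ combined with monotonicity and finiteness of $\rho_1$ and $\tilde{\rho}$; since the bound is uniform in $f$, the two infima differ by at most $\tilde{\varepsilon}K$. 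You instead estimate the unreduced infima directly, with a $\delta$-optimal allocation for the upper bound and the convex regrouping $(1-\tilde{\varepsilon})[\rho_1(X_1)+\rho_2(X_2)]+\tilde{\varepsilon}[\tilde{\rho}(X_1)+\rho_2(X_2)]$ for the lower bound $\rho_{1,\tilde{\varepsilon}}\square\rho_2(X)\ge(1-\tilde{\varepsilon})\,\rho_1\square\rho_2(X)+\tilde{\varepsilon}\,\tilde{\rho}\square\rho_2(X)$; you correctly identify that this degenerates unless $\tilde{\rho}\square\rho_2(X)>-\infty$, and you settle finiteness via the expectation-domination $\rho(\cdot)\ge\E[-\,\cdot\,]$ for normalized law-invariant convex risk measures on atomless spaces. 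What each approach buys: the paper's argument is self-contained relative to its own machinery (the comonotone reduction of Corollary \ref{cor:existoptimuminpipine}) and the uniform bound over the compact class $\lipone$ is slightly stronger than the statement, feeding directly into the later convergence analysis; your argument bypasses comonotone improvement entirely, so it would work for convex risk measures in any setting where the relevant inf-convolutions are finite, but at the cost of importing a nontrivial external fact (dilatation monotonicity / the Jensen-type inequality, which itself relies on law-invariance and atomlessness) that the paper avoids because restriction to $\lipone$ makes a crude monotonicity bound suffice. Note also that both arguments use law-invariance of $\tilde{\rho}$ only implicitly --- you for the domination bound, the paper in order to apply Corollary \ref{cor:existoptimuminpipine} to the perturbed pair $(\rho_{1,\tilde{\varepsilon}},\rho_2)$ --- and both ultimately yield a linear-in-$\tilde{\varepsilon}$ control of the error.
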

\begin{proof}
	It is easy to see that all the properties of $\rho_1$ and $\tilde{\rho}$ are inherited by $\rho_{1,\tilde{\varepsilon}}$. 
 As for the second statement, in view of Corollary \ref{cor:existoptimuminpipine} it is enough to show that, defining $\Psi_{\tilde{\varepsilon}}(f):=\rho_{1,\tilde{\varepsilon}}(f(X))+\rho_2(X-f(X))$, $f\in \lipone
 $, we have $\lim_{\tilde{\varepsilon}\downarrow 0}\sup_{f\in\lipone}\abs{\Psi_{\tilde{\varepsilon}}(f)-\Psi_0(f)}=0$.
 To see this, observe that
 \[
 \abs{\Psi_{\tilde{\varepsilon}(f)}-\Psi_0(f)}=\tilde{\varepsilon}\abs{\rho_1(f(X))-\tilde{\rho}(f(X))}\leq\tilde{\varepsilon}\Big(\abs{\rho_1(f(X))}+\abs{\tilde{\rho}(f(X))}  \Big) 
\]
Since $f\in\lipone$, $\abs{x}\geq \abs{f(x)}\geq -\abs{x}$ and by monotonicity and finiteness of $\rho_1$ we get $\rho_1(-\abs{X})\geq \rho_1(f(X))\geq \rho_1(\abs{X})$, and $\abs{\rho_1(f(X))}\leq \abs{\rho_1(-\abs{X})}+\abs{\rho_1(\abs{X})}$. The same argument applies to $\tilde{\rho}$, from which we deduce $\abs{\Psi_{\tilde{\varepsilon}}(f)-\Psi_0(f)}\le \tilde{\varepsilon}K$ for some constant $K>0$ depending only on $X$.
Since the right-hand side does not depend on $f$, the claim is proved.
\end{proof}

Towards the aim of approximating the infimal convolutions using neural networks, we need some continuity of the risk functionals. For risk measures on $L^\infty$, the continuity is a consequence of the monotonicity and cash additivity properties. For the case $p\in[1,\infty)$, the Extended Namioka-Klee Theorem (see \citep{BF10}) guarantees that any proper convex and monotone functional on $L^p$ is continuous with respect to the $L^p$-norm, on the interior of its domain. Thanks to the finiteness property, convex risk measures as in Definition \ref{def:propertiesrm} are norm continuous for every $p\in[1,\infty]$ on the whole space. Throughout the paper, we will therefore make the following standing assumption. 
\begin{assumption}
	\label{ass:lawinv}
	$\rho_1$ and $\rho_2$ are law-invariant convex risk measures.
\end{assumption}

\subsection{Approximation of inf-convolutions via neural networks}
In this section, we show that the inf-convolution of two risk measures in \eqref{eq:definfconv} can be approximated using neural networks in the construction of the allocations. This is achieved by means of appropriate versions of the universal approximation theorem (UAT).  
We first note that we can reduce our focus to $(\R,\cB(\R))$.
Consider indeed a functional $\rho: L^p(\Omega,\cF,\P)\rightarrow (-\infty,+\infty]$ which is law-invariant. Since the underlying space is non-atomic, for every probability measure $\Q\in\mathrm{Prob}^p(\R)$ (or $\Q\in\mathrm{Prob}^\infty(\R)$ for $p=\infty$), there exists $X\in L^p(\Omega,\cF,\P)$ such that $\Q=\P_X$. Using the law invariance of $\rho$, the functional $\trho{\cdot}{\Q}:L^p(\R,\cB(\R),\Q)\rightarrow (\infty,+\infty]$ given by
\begin{equation}
\label{rhogivenP}
\trho{\varphi}{\Q}:=\rho(\varphi\circ X)\text{ for any measurable }X:\Omega\rightarrow \R\text{ such that }\P_X=\Q
\end{equation}
is well defined and it inherits the properties listed in Definition  \ref{def:propertiesrm} from $\rho$. A similar procedure has been considered by \citep{FrittelliMaggis18}, although with a totally different aim. We stress some key consequences.
\begin{proposition}
	\label{proptransferonR}
	Let $p\in [1,+\infty]$. Let $\rho_1,\rho_2: L^p\rightarrow\R$ be law-invariant convex risk measures. Then,
	\begin{align}
		\label{reform: convolution1}
		\phantom{=}\rho_1\square\rho_2(X)=&\inf\bigg\{\rho_1(f(X))+\rho_2(X-f(X))\ :\ f\in \lipone\bigg\}\\
		=&\inf\bigg\{\trhoone{f}{\P_X}+ \trhotwo{\ide-f}{\P_X}\ :\ f\in \lipone\bigg\}=\trhoone{\,\cdot\,}{\P_X}\square\trhotwo{\,\cdot\,}{\P_X}(\ide)\label{reform: convolution2}
	\end{align}
	and $\hat{f}\in \lipone$ is a minimum in \eqref{reform: convolution1} if and only if it is a minimum in \eqref{reform: convolution2}.
\end{proposition}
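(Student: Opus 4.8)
The plan is to establish \eqref{reform: convolution2} and the equivalence of minimizers by a careful transfer of the optimization from $L^p(\Omega,\cF,\P)$ to $L^p(\R,\cB(\R),\P_X)$ via the defining identity \eqref{rhogivenP}. The first equality, namely \eqref{reform: convolution1}, is nothing but Corollary \ref{cor:existoptimuminpipine}, so I would cite it directly and concentrate on the remaining identities.

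\medskip

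\noindent\textbf{Step 1: The termwise identity.} First I would fix $f\in\lipone$ and observe that, since $X:\Omega\to\R$ is a fixed representative with $\P_X=\P_X$ (tautologically), the composition $f\circ X$ is a measurable function on $\Omega$ whose law is $(\P_X)_{f}:=f_{\#}\P_X$. By the very definition \eqref{rhogivenP} applied with $\Q=\P_X$ and the measurable map $\varphi=f$, we get $\trhoone{f}{\P_X}=\rho_1(f\circ X)=\rho_1(f(X))$, and analogously $\trhotwo{\ide-f}{\P_X}=\rho_2((\ide-f)(X))=\rho_2(X-f(X))$. I would emphasize that this holds pointwise in $f$: the two objective functions $f\mapsto\rho_1(f(X))+\rho_2(X-f(X))$ and $f\mapsto\trhoone{f}{\P_X}+\trhotwo{\ide-f}{\P_X}$ are literally the \emph{same} function on $\lipone$. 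Here I must be slightly careful that $\trho{\cdot}{\Q}$ is well defined, i.e.\ independent of the chosen representative; but this is exactly the content of the well-posedness already asserted in the text after \eqref{rhogivenP}, a direct consequence of law-invariance, so I would invoke it rather than reprove it.

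\medskip

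\noindent\textbf{Step 2: Infima and the final identity.} Since the two objectives coincide termwise on $\lipone$, their infima over $\lipone$ coincide, giving the second equality in \eqref{reform: convolution2}. For the last equality I would recognize the right-hand side as an instance of \eqref{def:infconv} applied to the functionals $\trhoone{\cdot}{\P_X},\trhotwo{\cdot}{\P_X}$ on the space $L^p(\R,\cB(\R),\P_X)$, evaluated at the identity $\ide$. This requires noting two things: first, that $\ide\in L^p(\R,\cB(\R),\P_X)$, which holds because $\int_{\R}|x|^p\,\mathrm{d}\P_X(x)=\E[|X|^p]<\infty$ for $p<\infty$ (and boundedness for $p=\infty$); and second, that the constrained infimum in \eqref{def:infconv} over all allocations $\varphi_1+\varphi_2=\ide$ reduces to the infimum over $\lipone$. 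The latter reduction is the transferred version of Corollary \ref{cor:existoptimuminpipine}: since $\trhoone{\cdot}{\P_X}$ and $\trhotwo{\cdot}{\P_X}$ inherit law-invariance, convexity, and the risk-measure properties (as stated after \eqref{rhogivenP}), Corollary \ref{cor:existoptimuminpipine} applies verbatim on $(\R,\cB(\R),\P_X)$ to yield exactly the restriction to normalized Lipschitz allocations.

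\medskip

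\noindent\textbf{Step 3: Equivalence of minimizers.} Because Step 1 shows the two objectives are the identical function of $f\in\lipone$, a point $\hat f\in\lipone$ minimizes one if and only if it minimizes the other; this is immediate and requires no separate argument beyond restating the termwise equality. I would then simply remark that an $\hat f$ attaining the infimum in \eqref{reform: convolution2} corresponds, via $(\hat f(X),X-\hat f(X))$, to an optimal allocation in the original problem, and conversely. The main obstacle, such as it is, is purely a matter of bookkeeping: ensuring that the transfer map $\rho\mapsto\trho{\cdot}{\P_X}$ genuinely preserves all the structural properties (so that Corollary \ref{cor:existoptimuminpipine} is legitimately applicable downstream) and that the identity lies in the relevant $L^p$ space. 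Since the paper has already established the well-definedness and property-inheritance of $\trho{\cdot}{\Q}$, no genuinely hard estimate is involved; the result is essentially a change-of-variables restatement of the preceding corollary.
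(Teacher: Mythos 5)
Your Steps 1 and 3 match the paper's argument: the termwise identity $\trhoone{f}{\P_X}=\rho_1(f(X))$ and $\trhotwo{\ide-f}{\P_X}=\rho_2(X-f(X))$ for $f\in\lipone$ (valid since Lipschitz continuity gives $f\in L^p(\R,\cB(\R),\P_X)$) yields the first equality in \eqref{reform: convolution2} and, with it, the equivalence of minimizers. But your Step 2 has a genuine gap at the last equality: you claim that Corollary \ref{cor:existoptimuminpipine} ``applies verbatim on $(\R,\cB(\R),\P_X)$,'' and it does not. That corollary rests on Theorem \ref{thm:FS08} (Filipovic--Svindland), whose comonotone-improvement conclusion is proved on a \emph{non-atomic} probability space, which is a standing assumption on $(\Omega,\cF,\P)$. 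The transferred space $(\R,\cB(\R),\P_X)$ need not be non-atomic: if $X$ has a discrete distribution it is purely atomic, and this is not a pathological corner case --- the entire point of Section \ref{sec:conv} is to replace $\P_X$ by the empirical measure $\hP_N$, which is always atomic. The paper explicitly flags this: ``the last equality in \eqref{reform: convolution2} does not immediately follow from Corollary \ref{cor:existoptimuminpipine}, since we do not know if $(\R,\cB(\R),\P_X)$ is non-atomic.'' Property inheritance of $\trho{\cdot}{\P_X}$ (law-invariance, convexity, etc.), which you invoke, is not enough; what fails is the existence of comonotone optimal allocations on the image space.

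The correct route, and the one the paper takes, avoids re-running the comonotone improvement on $(\R,\cB(\R),\P_X)$ by a two-inequality argument that pulls arbitrary allocations back to the non-atomic space $\Omega$. The inequality $\geq$ between the $\lipone$-infimum and $\trhoone{\,\cdot\,}{\P_X}\square\trhotwo{\,\cdot\,}{\P_X}(\ide)$ is clear since $\lipone\subseteq L^p(\R,\cB(\R),\P_X)$. For the reverse, take any $Y\in L^p(\R,\cB(\R),\P_X)$ and use \eqref{rhogivenP} to write $\trhoone{Y}{\P_X}+\trhotwo{\ide-Y}{\P_X}=\rho_1(Y\circ X)+\rho_2(X-Y\circ X)\geq\rho_1\square\rho_2(X)$, because $(Y\circ X,\,X-Y\circ X)$ is an admissible allocation of $X$ in $L^p(\Omega,\cF,\P)$; taking the infimum over $Y$ and invoking the already-established equalities (which live on $\Omega$, where Corollary \ref{cor:existoptimuminpipine} \emph{is} available) closes the chain. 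So the result is not a pure ``change-of-variables restatement'' of the corollary, as your closing remark suggests: the asymmetry between the non-atomic source space and the possibly atomic image space is precisely the content requiring proof.
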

\begin{proof}
The equality \eqref{reform: convolution1} is simply Corollary \ref{cor:existoptimuminpipine}. The first equality in \eqref{reform: convolution2} is given by definition of $\tilde{\rho}$ in \eqref{rhogivenP} and the fact that any $f\in\lipone$ satisfies $f\in L^p(\R,\cB(\R),\P_X)$, thanks to the Lipschitz continuity.
The last equality in \eqref{reform: convolution2}  does not immediately follow from Corollary \ref{cor:existoptimuminpipine}, since we do not know if $(\R,\cB(\R),\P_X)$ is non-atomic. 
The inequality $\ge$ is clear. Using \eqref{rhogivenP}, we can rewrite
\begin{align*}
\trhoone{\,\cdot\,}{\P_X}\square\trhotwo{\,\cdot\,}{\P_X}(\ide)&=\inf\bigg\{\trhoone{Y}{\P_X}+\trhotwo{\ide-Y}{\P_X}\ :\ Y\in L^p(\R,\cB(\R),\P_X)\bigg\}\\
&=\inf\bigg\{\rho_1(Y\circ X)+\rho_2(X-Y\circ X)\ :\ Y\in L^p(\R,\cB(\R),\P_X)\bigg\}\\
&\geq \rho_1\square \rho_2(X)
\end{align*}
which concludes the proof of the equality chain in \eqref{reform: convolution1}, \eqref{reform: convolution2}. The last statement follows from \eqref{rhogivenP}.

\end{proof}

\begin{remark}
\label{remtoQ}The second equality in \eqref{reform: convolution2} holds, more generally, if we replace $\P_X$ with an arbitrary $\Q\in \mathrm{Prob}^p(\R)$. Indeed, since the space is non atomic, $\Q$ is the law of some $Y\in L^p(\Omega,\cF,\P)$. For such a $Y$ it also holds that $\rho_1\square\rho_2(Y)=\trhoone{\,\cdot\,}{\Q}\square\trhotwo{\,\cdot\,}{\Q}(\ide)$.
\end{remark}
We next introduce the class of neural networks that we intend to use.
\begin{definition}\label{def:NN} Let $L$, $N_0, \ldots, N_L\in \N$ with $L\ge 2$, let $\sigma: \R \to \R$ an activation function and for any $\ell=1,\ldots,L$, let $W_\ell: \R^{N_{\ell-1}}\to \R^{N_\ell}$ an affine function. A \emph{(feed-forward) neural network} is a function $F:\R^{N_0}\to \R^{N_L}$ defined as
	\[ F(x):=\left(W_L\circ \sigma\circ W_{L-1}\circ\cdots \sigma\circ W_1\right) (x),
	\]
where the activation function $\sigma$ is applied componentwise. 
\end{definition}
We denote by $\cNN$ the vector space generated by the class of neural networks from $\R^K\to\R$ determined by a fixed activation function $\sigma$, continuous, nonconstant and bounded. Notice that $\cNN$ is a vector subspace of $\cC_b(\R^K)$, and is, in particular, a convex cone. Moreover, imposing $L=2$ in the above definition, $\cNN$ contains the functions generated by only one hidden layer and only one output unit as considered in \citep{Hornik91}.
A simple argument based on the classical UAT of \citep{Hornik91} (Theorem 1) yields the approximation result that we need, at least for the case $p<+\infty$. The case $p=\infty$ is not covered by this theorem and we need a slightly different approach.
\begin{theorem}
	\label{thm:Hornik}
	Let $\sigma$ be continuous, bounded, and nonconstant. Then $\cNN$ is norm dense in $L^p(\R^K,\cB(\R^K),\mu)$ for any finite measure $\mu\in\mathrm{Meas}(\R^K)$ and $p\in [1,+\infty)$. 
\end{theorem}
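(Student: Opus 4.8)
The plan is to reduce the statement to the single--hidden--layer version of the universal approximation theorem and then exploit the fact that this class sits inside $\cNN$. Fix $p\in[1,+\infty)$ and a finite measure $\mu\in\mathrm{Meas}(\R^K)$. The starting point is Hornik's Theorem~1 in \citep{Hornik91}: since $\sigma$ is bounded and nonconstant (and Borel measurable, which is guaranteed here by continuity), the class of one--hidden--layer networks
\[
\Sigma(\sigma):=\bigg\{x\mapsto \sum_{j=1}^{m}\beta_j\,\sigma(a_j\cdot x+b_j)\ :\ m\in\N,\ \beta_j\in\R,\ a_j\in\R^K,\ b_j\in\R\bigg\}
\]
is norm dense in $L^p(\R^K,\cB(\R^K),\mu)$. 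This is already exactly the approximation guarantee we are after, except that it is phrased for $\Sigma(\sigma)$ rather than for $\cNN$, so the whole task is to transfer density from the former to the latter.

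The key step is to verify the inclusion $\Sigma(\sigma)\subseteq\cNN$. I would check directly against Definition~\ref{def:NN}: an element of $\Sigma(\sigma)$ is a neural network with $L=2$, obtained by taking $W_1:\R^K\to\R^m$ to be the affine map $x\mapsto(a_j\cdot x+b_j)_{j=1}^m$, applying $\sigma$ componentwise, and composing with the (affine, hence admissible) output map $W_2:\R^m\to\R$, $y\mapsto\sum_{j=1}^m\beta_j y_j$. An additional constant in $W_2$ is permitted by the definition and only enlarges the class. Since $\cNN$ is by construction the vector space generated by such networks, it contains $\Sigma(\sigma)$. Moreover $\cNN\subseteq\cC_b(\R^K)$, and because $\mu$ is finite, bounded continuous functions lie in $L^p(\mu)$; thus $\Sigma(\sigma)\subseteq\cNN\subseteq L^p(\R^K,\cB(\R^K),\mu)$ and the notion of density is meaningful. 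Density of $\Sigma(\sigma)$ then forces density of the larger set $\cNN$, a fortiori, which concludes the argument.

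I do not expect a genuine obstacle here: the proof is a citation combined with an elementary set inclusion, consistent with the excerpt's description of ``a simple argument''. The only points that warrant care are (i) confirming that the hypotheses assumed on $\sigma$ (continuous, bounded, nonconstant) indeed meet the requirements of Hornik's Theorem~1, and (ii) matching his function class to Definition~\ref{def:NN}, in particular checking that passing from $\Sigma(\sigma)$ to the generated vector space $\cNN$ cannot destroy density --- which is automatic, since any superset of a dense set is dense. Finally, it is worth noting explicitly that this route is confined to $p<+\infty$, where $L^p(\mu)$ is the natural target and Hornik's result applies verbatim; the case $p=\infty$ falls outside its scope and, as the excerpt anticipates, must be handled by a separate approximation scheme.
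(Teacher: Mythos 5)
Your proposal is correct and follows exactly the paper's own route: it invokes Hornik's Theorem~1 for the one-hidden-layer class, observes that this class is contained in $\cNN$ (which the paper notes by remarking that the original theorem covers the case $L=2$), and concludes density of the superset a fortiori. Your additional checks --- that $\sigma$ meets Hornik's hypotheses, that $\cNN\subseteq\cC_b(\R^K)\subseteq L^p(\mu)$ for finite $\mu$, and that the argument is confined to $p<\infty$ --- are all consistent with the paper's reasoning and merely make explicit what the paper leaves implicit.
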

The original theorem is stated for networks with two layers ($L=2$). Since $\cNN$ contains this particular class, the density result also holds as stated in Theorem \ref{thm:Hornik}.
The following is our first approximation result. 
\begin{theorem}\label{thm:approx1}
	Let $p\in[1,\infty)$. Let $\rho_1,\rho_2: L^p(\Omega,\cF,\P)\rightarrow\R$ be law-invariant convex risk measures. Then,
	\begin{equation}
		\label{eq:sup_on_nn} 
		\rho_1\square \rho_2(X)=\inf\bigg\{\rho_1(f(X))+\rho_2(X-f(X)) : f\in {\cNN}\bigg\}.
	\end{equation}
\end{theorem}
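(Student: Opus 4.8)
The goal is to show that replacing the class $\lipone$ of normalized Lipschitz allocations in \eqref{eq:definfconv} by the class $\cNN$ of neural networks does not change the value of the inf-convolution. My plan is to prove the two inequalities separately, with the content concentrated in the $\le$ direction, which amounts to an approximation argument.

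For the inequality $\ge$ in \eqref{eq:sup_on_nn}, I would argue as follows. Fix an arbitrary $f\in\cNN$. Since $\sigma$ is continuous and bounded, $f$ is continuous and bounded, hence $f\in\cC_b(\R)\subseteq L^p(\R,\cB(\R),\P_X)$, and $X-f(X)\in L^p$ as well because $X\in L^p$. Thus $(f(X),X-f(X))$ is an admissible allocation for $X$ in the original problem \eqref{def:infconv}, so $\rho_1(f(X))+\rho_2(X-f(X))\ge\rho_1\square\rho_2(X)$. Taking the infimum over $f\in\cNN$ gives the inequality $\ge$. Note this direction does not require $f$ to be Lipschitz or normalized: any admissible $f$ yields an upper bound on the value, hence a lower bound on the infimum over $\cNN$.

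For the inequality $\le$, I would start from the representation \eqref{eq:definfconv} provided by Corollary \ref{cor:existoptimuminpipine}, which is valid on $L^p$ and already furnishes an optimizer $\hat f\in\lipone$. Fix $\varepsilon>0$; it then suffices to produce some $g\in\cNN$ with $\rho_1(g(X))+\rho_2(X-g(X))\le\rho_1(\hat f(X))+\rho_2(X-\hat f(X))+\varepsilon$. The idea is to approximate $\hat f$ by neural networks. By Theorem \ref{thm:Hornik}, $\cNN$ is dense in $L^p(\R,\cB(\R),\P_X)$ (here $K=1$ and $\mu=\P_X$, a finite measure), so I can choose $(g_n)\subseteq\cNN$ with $g_n\to\hat f$ in $L^p(\P_X)$. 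Transferring to the original space via \eqref{rhogivenP}, this means $g_n(X)\to\hat f(X)$ in $L^p(\Omega,\cF,\P)$, and consequently $X-g_n(X)\to X-\hat f(X)$ in $L^p$ as well. Because $\rho_1,\rho_2$ are finite convex monetary risk measures, the Extended Namioka–Klee theorem (as recalled in the paragraph preceding Assumption \ref{ass:lawinv}) guarantees that $\rho_1$ and $\rho_2$ are norm-continuous on all of $L^p$. Hence $\rho_1(g_n(X))\to\rho_1(\hat f(X))$ and $\rho_2(X-g_n(X))\to\rho_2(X-\hat f(X))$, so for $n$ large the displayed $\varepsilon$-bound holds with $g=g_n$. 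Letting $\varepsilon\downarrow 0$ gives $\le$, and combined with $\ge$ this yields \eqref{eq:sup_on_nn}.

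The main obstacle is ensuring that $L^p$-approximation of the optimizer $\hat f$ translates into approximation of the risk values, and this is exactly where norm-continuity of the risk measures is essential rather than a mere technicality: without it, closeness in $L^p$ would not control $\rho_i$. Two points deserve care. First, Theorem \ref{thm:Hornik} only gives $L^p$-density, not uniform density, so I cannot assert that the approximants $g_n$ inherit the Lipschitz or normalization constraints of $\lipone$; this is harmless because the $\ge$ direction already showed admissibility requires no such constraints, and the $\le$ direction only needs the risk values to converge. Second, the argument is specific to $p<+\infty$: Theorem \ref{thm:Hornik} is an $L^p$-density statement for finite $p$, and indeed the excerpt flags that the case $p=\infty$ requires a separate treatment, which is why Theorem \ref{thm:approx1} is stated for $p\in[1,\infty)$.
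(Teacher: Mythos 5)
Your proof is correct and takes essentially the same route as the paper's: both rest on Corollary \ref{cor:existoptimuminpipine} to reduce to an optimizer in $\lipone$, on Theorem \ref{thm:Hornik} for $L^p(\P_X)$-density of $\cNN$, and on the Extended Namioka--Klee theorem for norm continuity of the risk measures. The only difference is presentational: you split the equality into two inequalities and work directly on $(\Omega,\cF,\P)$ via the substitution $g_n(X)$, whereas the paper phrases the same argument through the transferred functionals $\trhoone{\cdot}{\P_X}$, $\trhotwo{\cdot}{\P_X}$ on $(\R,\cB(\R),\P_X)$ and the closure $\overline{\cNN}^{\norm{\cdot}_p}$.
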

\begin{proof} Let $X\in L^p(\Omega,\cF,\P)$. Let $\P_X$ be the law of $X$ under $\P$. We claim that
	\begin{align*}
		\rho_1\square \rho_2(X)&=\inf\bigg\{\rho_1(f(X))+\rho_2(X-f(X))\ :\ f:\R\to\R\text{ s.t.\ }f(X)\in L^p(\Omega,\cF,\P)\bigg\}\\
		&=\inf\bigg\{\trhoone{f}{\P_X}+\trhotwo{\ide-f}{\P_X}\ :\ f\in L^p(\R,\cB(\R), \P_X)\bigg\}. 
	\end{align*}
	Indeed, from Corollary \ref{cor:existoptimuminpipine}, $\rho_1\square \rho_2(X)$ attains the minimum over the set of allocations $(f(X),X-f(X))$ with $f\in\lipone$. Since $f(X)\in L^p(\Omega,\cF,\P)$, for every $f\in\lipone$, the first equality follows. The second equality is by \eqref{rhogivenP}.
	Notice now that $\trhoone{\cdot}{\P_X}$ and $\trhotwo{\cdot}{\P_X}$ are law-invariant convex risk measures, which are $\|\cdot\|_p$-continuous by the Extended Namioka-Klee Theorem in \citep{BF10}. From Theorem \ref{thm:Hornik}, applied with $\mu=\P_X$, we have $$\rho_1\square \rho_2(X)=\inf\bigg\{\trhoone{f}{\P_X}+\trhotwo{\ide-f}{\P_X}\ :\ f\in \overline{\cNN}^{\norm{\cdot}_p}\bigg\},$$ which in turns yields, by norm continuity, that \eqref{eq:sup_on_nn} holds.
\end{proof}

The UAT does not provide uniform approximations and, in particular, it does not cover the $L^\infty$ case.
{\label{notnew}We use here an approach based on weighted spaces in order to obtain Theorem \ref{thm:UAT}, which is inspired by the forthcoming paper \citep{UAT_Teich}. This theorem is only instrumental for our main results and it is certainly not the first time that the theory of weighted spaces has been exploited in Universal Approximation results (see for example \citep{Kratsios21,UAT_Teich} and the references therein). We will also add a short proof for the sake of completeness.}
Let $\fn:\X\rightarrow [1,+\infty)$ be a continuous function with compact sublevels and define
\begin{equation*}
	C_{\fn}(\X):=\left\{ \phi \in \cC(\X)\ : \ \normf{\phi}:=\sup_{x\in \X}\frac{\left\vert \phi
		(x)\right\vert }{\fn (x) }<+\infty \right\} \,.
\end{equation*}
The space $C_{\fn}(\X)$ is a Banach lattice when endowed with the norm $\normf{\cdot}$. This can be easily verified following verbatim the argument for the standard case of bounded continuous functions with the supremum norm.
We introduce the following sets:
\begin{align*}
	\mathrm{ca}_\fn(\X)& :=\left\{\gamma :\cB(\X)\rightarrow
	\R\ : \ \mu \in \mathrm{ca}(\X)\text{ and } \int_\Omega \fn(x)\mathrm{d}\abs{\gamma}(x)<+\infty\right\}; \\\mathrm{Meas}_\fn(\X)& :=\left\{\mu :\cB(\X)\rightarrow
	\lbrack 0,+\infty )\ : \ \mu \in \mathrm{Meas}(\X)\text{ and } \int_\Omega \fn(x)\mathrm{d}\mu(x)<+\infty\right\}; \\
	\mathrm{Prob}_\fn(\X)& :=\left\{\Q:\cB(\X)\rightarrow \lbrack
	0,1]\ : \ \Q\in \mathrm{Prob}(\X)\text{ and }\int_\Omega \fn(x)\mathrm{d}\P(x)<+\infty\right\}; \\
	{B}_\fn(\X)& :=\overline{\cC_b(\X)}^{\normf{\cdot}}.
\end{align*}

\begin{proposition}\label{prop:isom}
	${B}_\fn(\X)$ is a Banach space and for every continuous linear functional $\ell\in (B_\fn(\X))^*$ there exists a unique $\gamma\in\mathrm{ca}_\fn(\X)$ such that $\ell (\phi)=\int_\X\phi\mathrm{d}\gamma$. Conversely, every $\gamma\in\mathrm{ca}_\fn(\X)$ defines a continuous linear functional in $ (B_\fn(\X))^*$ in the same way. 
\end{proposition}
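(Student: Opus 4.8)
The plan is to reduce the statement to the classical Riesz--Markov representation of the dual of $C_0(\X)$ (continuous functions vanishing at infinity) by means of the multiplication isometry $M\colon\phi\mapsto\phi/\fn$. First I would observe that, since $\fn\ge 1$ is continuous and strictly positive, $M$ sends $C_\fn(\X)$ linearly and bijectively onto $C_b(\X)$ equipped with the sup norm: indeed $\phi\in C_\fn(\X)$ if and only if $\phi/\fn$ is continuous and bounded, and $\|M\phi\|_\infty=\sup_{x}|\phi(x)|/\fn(x)=\normf{\phi}$, so $M$ is a surjective linear isometry between the Banach spaces $C_\fn(\X)$ and $C_b(\X)$. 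Because an isometric isomorphism maps closures to closures, it then suffices to identify the image $M(B_\fn(\X))=\overline{M(\cC_b(\X))}^{\|\cdot\|_\infty}$ inside $C_b(\X)$ and to combine this with the known description of the dual of that image.

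The key step, and the main obstacle, is to prove that $\overline{M(\cC_b(\X))}^{\|\cdot\|_\infty}=C_0(\X)$. The inclusion $\subseteq$ is immediate: for $g\in\cC_b(\X)$ the function $g/\fn$ is continuous with $|g/\fn|\le\|g\|_\infty/\fn$, and since $\{\fn\le c\}$ is compact for every $c$ we have $1/\fn\to 0$ at infinity, so $g/\fn\in C_0(\X)$. For density I would use a cutoff built from $\fn$ itself: given $\psi\in C_0(\X)$, set $\chi_n:=\max\{0,\min\{1,n+1-\fn\}\}$, which is continuous, equals $1$ on $\{\fn\le n\}$ and $0$ on $\{\fn\ge n+1\}$. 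Then $g_n:=\psi\,\fn\,\chi_n$ is continuous and bounded, being supported in the compact set $\{\fn\le n+1\}$ on which $\fn\le n+1$, so $g_n\in\cC_b(\X)$ and $M g_n=\psi\chi_n$. Since $\psi$ vanishes at infinity, $\sup_{\{\fn\ge n\}}|\psi|\to 0$, whence $\|\psi-\psi\chi_n\|_\infty\to 0$; as $C_0(\X)$ is closed in $C_b(\X)$ this gives the claimed equality. Here I would also record that $\X$ is locally compact Hausdorff --- each $x$ has the open neighbourhood $\{\fn<\fn(x)+1\}$ whose closure lies in the compact sublevel $\{\fn\le\fn(x)+1\}$, which is what makes $C_0(\X)$ and the Riesz theorem available. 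Consequently $M$ restricts to an isometric isomorphism $B_\fn(\X)\to C_0(\X)$.

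Finally I would transport measures back through $M$. By the Riesz--Markov theorem on the locally compact Hausdorff space $\X$, every $\tilde\ell\in(C_0(\X))^*$ is represented by a unique finite signed (regular) Borel measure $\nu$ via $\tilde\ell(\psi)=\int_\X\psi\,\mathrm{d}\nu$. Given $\ell\in(B_\fn(\X))^*$, I apply this to $\tilde\ell:=\ell\circ M^{-1}$ and define $\gamma$ by $\mathrm{d}\gamma:=\fn^{-1}\mathrm{d}\nu$; then for $\phi\in B_\fn(\X)$ one gets $\ell(\phi)=\tilde\ell(\phi/\fn)=\int_\X(\phi/\fn)\,\mathrm{d}\nu=\int_\X\phi\,\mathrm{d}\gamma$, while $\int_\X\fn\,\mathrm{d}|\gamma|=\int_\X\fn\cdot\fn^{-1}\,\mathrm{d}|\nu|=|\nu|(\X)<+\infty$, so $\gamma\in\mathrm{ca}_\fn(\X)$. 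The assignment $\gamma\mapsto\nu:=\fn\gamma$ is a bijection between $\mathrm{ca}_\fn(\X)$ and the finite signed Borel measures, which yields uniqueness of $\gamma$ from that of $\nu$; equivalently, since $\cC_b(\X)\subseteq B_\fn(\X)$ and bounded continuous functions are measure-determining on a metric space, $\int\phi\,\mathrm{d}\gamma=\int\phi\,\mathrm{d}\gamma'$ for all $\phi$ forces $\gamma=\gamma'$. For the converse, any $\gamma\in\mathrm{ca}_\fn(\X)$ defines a bounded functional because $\left|\int_\X\phi\,\mathrm{d}\gamma\right|\le\int_\X(|\phi|/\fn)\,\fn\,\mathrm{d}|\gamma|\le\normf{\phi}\int_\X\fn\,\mathrm{d}|\gamma|$, so $\phi\mapsto\int_\X\phi\,\mathrm{d}\gamma$ lies in $(B_\fn(\X))^*$, completing the proof.
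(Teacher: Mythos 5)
Your proof is correct, and it takes a genuinely different route from the paper: the paper disposes of this proposition in one line, citing Theorems 2.4 and 2.7 of \citep{Dorsekteichmann22}, whereas you reconstruct the duality from scratch. Your mechanism --- the multiplication isometry $M\phi=\phi/\fn$ from $C_\fn(\X)$ onto $\cC_b(\X)$, the identification $M(B_\fn(\X))=C_0(\X)$ via the cutoffs $\chi_n=\max\{0,\min\{1,n+1-\fn\}\}$, and the transport of measures $\mathrm{d}\gamma=\fn^{-1}\mathrm{d}\nu$ through the Riesz--Markov theorem --- is the standard machinery underlying weighted-space duality results of this kind, and you handle the delicate points correctly: local compactness of $\X$ comes from the compact sublevels of $\fn$, the decay $\sup_{\{\fn\ge n\}}\left|\psi\right|\to 0$ for $\psi\in C_0(\X)$ follows from compactness of $\{\left|\psi\right|\ge\varepsilon\}$ together with continuity of $\fn$, and your alternative uniqueness argument (finite signed Borel measures on a metric space are determined by integration against $\cC_b(\X)$, and $\cC_b(\X)\subseteq B_\fn(\X)$) neatly sidesteps the regularity caveat in Riesz--Markov --- a real point, since $\mathrm{ca}_\fn(\X)$ carries no regularity restriction; on the $\sigma$-compact metric space $\X=\bigcup_n\{\fn\le n\}$ every finite Borel measure is in fact regular, so the two uniqueness routes agree. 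As for what each approach buys: the paper's citation is economical and inherits the generality of the reference, while your argument is self-contained and makes the isometry $B_\fn(\X)\cong C_0(\X)$ explicit, which is conceptually useful beyond this proposition --- it characterizes $B_\fn(\X)$ as the continuous functions whose growth is strictly dominated by $\fn$, which is exactly the fact invoked in the proof of Theorem \ref{thm:approx2} when checking $\lipone\subseteq B_\fn(\R)$ for $\fn(x)=1+\abs{x}^{1+\varepsilon}$.
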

\begin{proof}This follows from \citep{Dorsekteichmann22}, see Theorem 2.4 and Theorem 2.7.
\end{proof}

\begin{theorem}\label{thm:UAT}
	Let $\sigma:\R\rightarrow \R$ be continuous, bounded, and non-constant. Let $K\geq1$ be a fixed integer and take $\X=\R^K$. Then the family $\cNN$ in Definition \ref{def:NN} 
	is $\normf{\cdot}$-dense in $B_\fn(\R^K)$.
\end{theorem}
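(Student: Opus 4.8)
The plan is to establish the density by duality. Since $\fn\geq 1$ we have $\cC_b(\R^K)\subseteq B_\fn(\R^K)$, and in particular $\cNN\subseteq \cC_b(\R^K)\subseteq B_\fn(\R^K)$, so the statement is meaningful and $\cNN$ is a linear subspace of the Banach space $B_\fn(\R^K)$. By the Hahn--Banach theorem it therefore suffices to show that every continuous linear functional $\ell\in (B_\fn(\R^K))^*$ with $\ell|_{\cNN}=0$ is identically zero. By Proposition \ref{prop:isom}, any such $\ell$ is represented by a unique $\gamma\in\mathrm{ca}_\fn(\R^K)$ via $\ell(\phi)=\int_{\R^K}\phi\,\mathrm{d}\gamma$, so the whole problem reduces to proving that $\int_{\R^K}f\,\mathrm{d}\gamma=0$ for all $f\in\cNN$ forces $\gamma=0$.

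The first useful observation is that $\gamma$ is in fact a finite signed measure: since $\fn\geq 1$, we get $\abs{\gamma}(\R^K)=\int_{\R^K}1\,\mathrm{d}\abs{\gamma}\leq \int_{\R^K}\fn\,\mathrm{d}\abs{\gamma}<+\infty$, so that $\mu:=\abs{\gamma}\in\mathrm{Meas}(\R^K)$ is a finite measure. This is exactly what is needed to bring the already-available unweighted density theorem into play, and it is essentially the only point where the weight $\fn$ enters. Let $h:=\rn{\gamma}{\abs{\gamma}}$ be the Radon--Nikodym density; since $\gamma$ is real-valued, $h$ takes values in $\{-1,+1\}$ $\mu$-a.e., so that $h\in L^\infty(\mu)$ and $h^2=1$ $\mu$-a.e.

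Now I would invoke Theorem \ref{thm:Hornik} with $p=1$ and the finite measure $\mu=\abs{\gamma}$: it gives that $\cNN$ is $\norm{\cdot}_1$-dense in $L^1(\R^K,\cB(\R^K),\mu)$. Consider the functional $\Lambda(g):=\int_{\R^K}g\,h\,\mathrm{d}\mu$, which is bounded on $L^1(\mu)$ because $\abs{\Lambda(g)}\le \norm{h}_\infty\norm{g}_1=\norm{g}_1$. The hypothesis $\int_{\R^K} f\,\mathrm{d}\gamma=\int_{\R^K} f\,h\,\mathrm{d}\mu=0$ for all $f\in\cNN$ says precisely that $\Lambda$ vanishes on the dense subspace $\cNN$, hence $\Lambda\equiv 0$ on $L^1(\mu)$. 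Evaluating at $g=h\in L^\infty(\mu)\subseteq L^1(\mu)$ (where finiteness of $\mu$ is used again) yields $0=\Lambda(h)=\int_{\R^K}h^2\,\mathrm{d}\mu=\mu(\R^K)$, whence $\mu=0$ and therefore $\gamma=0$ and $\ell=0$, which completes the argument.

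The step I expect to require the most care is not an analytic obstacle but the structural realization that the weighted result can be bootstrapped directly from the unweighted Theorem \ref{thm:Hornik}, rather than re-deriving a Fourier-analytic or ``discriminatory function'' argument for $\sigma$: the weight condition $\int_{\R^K}\fn\,\mathrm{d}\abs{\gamma}<+\infty$ together with $\fn\geq1$ forces $\abs{\gamma}$ to be finite, and on a finite measure the $L^1$ universal approximation combined with the $\{-1,+1\}$-valued density $h$ collapses the problem at once. The only points to verify carefully are the inclusion $\mathrm{ca}_\fn(\R^K)\subseteq\mathrm{ca}(\R^K)$ and the well-definedness and continuity of the pairing $\Lambda$ on $L^1(\mu)$, both of which follow from $\fn\geq1$ and the finiteness of $\mu$.
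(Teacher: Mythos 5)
Your proof is correct, and while it shares the paper's duality skeleton (both reduce the problem, via Proposition \ref{prop:isom}, to showing that a $\gamma\in\mathrm{ca}_\fn(\R^K)$ annihilating $\cNN$ must vanish), it resolves that reduction by a genuinely different key lemma. The paper invokes Hornik's Theorem 5 --- the statement that $\sigma$ is \emph{discriminatory}, i.e.\ that $\int_{\R^K}\sigma\bigl(\sum_j a_jx_j+\theta\bigr)\,\mathrm{d}\gamma=0$ for all $a,\theta$ forces $\gamma\equiv 0$ --- and then concludes with a polar-cone/bipolar argument in the pairing $\sigma(B_\fn,\mathrm{ca}_\fn)$, using convexity of $\cNN$ to identify the weak closure with the $\normf{\cdot}$-closure. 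You instead use the plain Hahn--Banach density criterion and \emph{re-derive} the triviality of the annihilator from the unweighted $L^p$ density result already recorded as Theorem \ref{thm:Hornik}: since $\fn\geq 1$ makes $\abs{\gamma}$ a finite nonnegative Borel measure, density of $\cNN$ in $L^1(\abs{\gamma})$ combined with the polar decomposition $h=\rn{\gamma}{\abs{\gamma}}\in\{-1,+1\}$ $\abs{\gamma}$-a.e.\ gives $0=\Lambda(h)=\abs{\gamma}(\R^K)$, hence $\gamma=0$. In effect, you prove the discriminatory property of $\sigma$ for finite signed measures as a corollary of the $L^1$ universal approximation theorem, so your argument is self-contained relative to the paper's stated toolkit: it needs neither the separate citation of Hornik's Theorem 5 nor the bipolar machinery, at the modest cost of some measure-theoretic bookkeeping (the Radon--Nikodym step and the boundedness of the pairing $\Lambda$ on $L^1(\abs{\gamma})$, both of which you verify correctly). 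The paper's version buys brevity and a direct conceptual link to the classical discriminatory-function framework; yours buys economy of hypotheses-from-the-literature and makes transparent that the only role of the weight $\fn\geq 1$ is to force representing measures in $(B_\fn(\R^K))^*$ to have finite total variation.
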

\begin{proof}
As commented above, it is enough to prove the result for networks with two layers.
	By \cite[Theorem 5]{Hornik91}, $\sigma$ is discriminatory, meaning that for any $\gamma\in \mathrm{ca}_\fn(\R^K)=B_\fn^*$ we have
	\[
	\int_{\R^K}\sigma\left(\sum_{j=1}^Ka_jx_j+\theta\right)\mathrm{d}\gamma=0,\forall a_1,\dots,a_K,\theta\in\R  \qquad \Rightarrow\qquad \gamma\equiv 0.
	\]
 
	Let $\overline{\cNN}$ be the weak closure of $\cNN$ in $B_\fn$ with respect to the topology $\sigma(B_\fn, B_\fn^*)=\sigma(B_\fn, \mathrm{ca}_\fn(\R^K))$, where the pairing is given by the integration and is well defined from Proposition \ref{prop:isom}. 
	Recall that, for a cone $C\subseteq B_\fn$, $C^\circ:=\{\gamma\in \mathrm{ca}_\fn(\R^K)\ :\  \int_{\R^K}\phi\mathrm{d}\gamma=0\,\forall\phi\in C\}$ is called the polar cone of $C$. Since $\sigma$ is discriminatory, $\overline{\cNN}^\circ\subseteq \cNN^\circ=\{0\}$. By the bipolar theorem, we have $\overline{\cNN}=\{0\}^\circ=B_\fn(\R^K)$. Since $\cNN$ in Definition \ref{def:NN} is convex, we have $\overline{\cNN}=\overline{\cNN}^{\normf{\cdot}}$, the latter being the $\normf{\cdot}$-closure of $\cNN$. This proves that $\overline{\cNN}^{\normf{\cdot}}=B_\fn(\R^K)$, as desired.
\end{proof}
The following is our second general approximation result.
\begin{theorem}\label{thm:approx2}
	Let  $\rho_1,\rho_2: L^\infty(\Omega,\cF,\P)\rightarrow\R$ be law-invariant convex risk measures. Then, 
	\begin{equation}
		\label{eq:sup_on_nn2} 
		\rho_1\square \rho_2(X)=\inf\bigg\{\rho_1(f(X))+\rho_2(X-f(X)) : f\in {\cNN}\bigg\},\qquad X\in L^\infty(\Omega,\cF,\P).
	\end{equation}
\end{theorem}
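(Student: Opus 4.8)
The plan is to establish the two inequalities in \eqref{eq:sup_on_nn2} separately, the nontrivial one resting on a \emph{local} (compact) uniform approximation of the optimal allocation. The inequality $\rho_1\square\rho_2(X)\le\inf\{\rho_1(f(X))+\rho_2(X-f(X)):f\in\cNN\}$ is immediate: every $f\in\cNN\subseteq\cC_b(\R)$ is bounded, so $(f(X),X-f(X))\in L^\infty\times L^\infty$ is a genuine allocation, and the infimum over this restricted family of allocations can only exceed the infimum over all allocations that defines $\rho_1\square\rho_2(X)$.

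For the reverse inequality I would fix $\varepsilon>0$ and invoke Corollary \ref{cor:existoptimuminpipine} to select an optimal normalized Lipschitz allocation $f^*\in\lipone$, so that $\rho_1\square\rho_2(X)=\rho_1(f^*(X))+\rho_2(X-f^*(X))$. Two structural facts drive the argument. First, since $X\in L^\infty$, the law $\P_X$ is supported on the compact interval $[-K_0,K_0]$ with $K_0:=\|X\|_\infty$, so only the values of an approximating network on $[-K_0,K_0]$ will matter. Second, on $L^\infty$ each $\rho_i$ is $1$-Lipschitz for $\|\cdot\|_\infty$: from $Y\le Z+\|Y-Z\|_\infty$, monotonicity and cash additivity give $\rho_i(Y)\ge\rho_i(Z)-\|Y-Z\|_\infty$, and by symmetry $|\rho_i(Y)-\rho_i(Z)|\le\|Y-Z\|_\infty$.

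The central step is to approximate $f^*$ uniformly on $[-K_0,K_0]$ by an element of $\cNN$. Because $f^*$ grows at most linearly it need not belong to $B_\fn(\R)$, so I would not approximate $f^*$ directly; instead I would choose any $g\in\cC_b(\R)$ that coincides with $f^*$ on $[-K_0,K_0]$ (for instance by freezing $f^*$ at its boundary values outside the interval), whence $g\in\cC_b(\R)\subseteq B_\fn(\R)$. Fixing a weight $\fn$ with compact sublevels and setting $M:=\max_{[-K_0,K_0]}\fn<+\infty$, Theorem \ref{thm:UAT} provides $f\in\cNN$ with $\normf{f-g}$ arbitrarily small, and since $X$ takes values in $[-K_0,K_0]$ and $g=f^*$ there,
\[
\|f(X)-f^*(X)\|_\infty=\|f(X)-g(X)\|_\infty\le\sup_{x\in[-K_0,K_0]}|f(x)-g(x)|\le M\,\normf{f-g}.
\]

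Combining this with the $1$-Lipschitz property yields
\[
\rho_1(f(X))+\rho_2(X-f(X))\le\rho_1(f^*(X))+\rho_2(X-f^*(X))+2\|f(X)-f^*(X)\|_\infty\le\rho_1\square\rho_2(X)+\varepsilon
\]
as soon as $\normf{f-g}<\varepsilon/(2M)$; letting $\varepsilon\downarrow0$ delivers the reverse inequality, hence the equality. The main obstacle I anticipate is exactly the unboundedness of the optimal $f^*$: since the networks in $\cNN$ are bounded, no global uniform approximation of $f^*$ is available, and the entire argument hinges on using the compact support of $\P_X$ to localise the problem, which is precisely what the truncation to $g$ together with the weighted-space density of Theorem \ref{thm:UAT} accomplishes.
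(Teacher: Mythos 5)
Your proof is correct, and it rests on the same two pillars as the paper's — the existence of an optimal allocation in $\lipone$ from Corollary \ref{cor:existoptimuminpipine}, and the weighted-space density Theorem \ref{thm:UAT} combined with the compact support of $\P_X$ — but the execution is genuinely different in a way worth noting. The paper fixes the specific weight $\fn(x)=1+\abs{x}^{1+\varepsilon}$, asserts the inclusion $\lipone\subseteq B_\fn(\R)$ (which does hold for that weight, since $\abs{f(x)}\le\abs{x}$ forces $\abs{f}/\fn\to 0$ at infinity; so your aside that $f^*$ "need not belong to $B_\fn(\R)$" is accurate only for a general admissible weight, not for the paper's choice), and then sandwiches the infimum over $\lipone$ between the infima over $B_\fn(\R)$ and over $L^\infty(\R,\cB(\R),\P_X)$, concluding via the $\normf{\cdot}$-density of $\cNN$ and continuity of $\rho_1,\rho_2$ under uniform convergence. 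You instead truncate the single optimizer $f^*$ to a bounded continuous $g$ agreeing with it on $[-K_0,K_0]$, approximate $g$ by Theorem \ref{thm:UAT}, and close the estimate quantitatively with the $1$-Lipschitz property of monetary risk measures for $\|\cdot\|_\infty$. Your route is more elementary and slightly more general: it works for any admissible weight, needs only that $\cNN$ is dense in $\overline{\cC_b(\R)}^{\normf{\cdot}}$ rather than the (unproved in the paper) inclusion $\lipone\subseteq B_\fn(\R)$, and gives an explicit error bound $2M\normf{f-g}$. What the paper's chain of equalities buys in exchange is the extra identity $\rho_1\square\rho_2(X)=\inf\{\trhoone{f}{\P_X}+\trhotwo{\ide-f}{\P_X}\ :\ f\in\overline{\cNN}^{\normf{\cdot}}\}$, which the subsequent remark exploits to conclude that elements of the closure are approximated uniformly even in the $p<\infty$ setting; your argument, being tailored to a single optimizer, does not directly deliver that statement.
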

\begin{proof}
	Let $X\in L^\infty(\Omega,\cF,\P)$ and $\P_X$ its law under $\P$. From Proposition \ref{proptransferonR}, we can rewrite	
	\[
	\rho_1\square \rho_2(X)
	=\inf\bigg\{\trhoone{f}{\P_X}+\trhotwo{\ide-f}{\P_X}\ :\ f\in\lipone\bigg\}.
	\]
	Note that $\lipone\subseteq B^{\fn}(\R)$ for $\fn(x):=1+\abs{x}^{1+\varepsilon}$ with $\varepsilon>0$. We thus deduce,
	\begin{align*}
		\rho_1\square \rho_2(X)&=
		\inf\bigg\{\trhoone{f}{\P_X}+\trhotwo{\ide-f}{\P_X}\ :\ f\in\lipone\bigg\}\\
		&\ge \inf\bigg\{\trhoone{f}{\P_X}+\trhotwo{\ide-f}{\P_X}\ :\ f\in B^{\fn}(\R)\bigg\}\\
		&\ge \inf\bigg\{\trhoone{f}{\P_X}+\trhotwo{\ide-f}{\P_X}\ :\ f\in L^\infty(\R,\cB(\R), \P_X)\bigg\}\\
		&=\rho_1\square \rho_2(X),
	\end{align*}
	where in the second inequality we have used that $\P_X$ has compact support and $\fn$ is continuous, whereas, in the third inequality we used again the law invariance. Finally, the last equality is a consequence of Proposition \ref{proptransferonR}. This shows that all the above inequalities are actually equalities. From Theorem \ref{thm:UAT}, $B^{\fn}(\R)=\overline{\cNN}^{\normf{\cdot}}$, so that
	\[
	\rho_1\square \rho_2(X)=
	\inf\bigg\{\trhoone{f}{\P_X}+ \trhotwo{\ide-f}{\P_X}\ :\ f\in \overline{\cNN}^{\normf{\cdot}}\bigg\}.
	\]
	We conclude using the continuity of $\rho_{1},\rho_2$ with respect to the uniform convergence.
\end{proof}
\begin{remark}
	Note that if we replace $L^\infty(\Omega,\cF,\P)$ with $L^p(\Omega,\cF,\P)$ and $p\in[1,+\infty)$, the same arguments with the choice of $r(x):=1+|x|^{p+\varepsilon}$ yields that \eqref{eq:sup_on_nn2} holds for every $X\in L^{p+\varepsilon}(\Omega,\cF,\P)$. The proof presented for Theorem \ref{thm:approx1} is however more direct and it does not require extra integrability on $X$. The advantage of the approach of Theorem \ref{thm:approx2} is that the extra equality \[
	\rho_1\square \rho_2(X)=
	\inf\bigg\{\trhoone{f}{\P_X}+ \trhotwo{\ide-f}{\P_X}\ :\ f\in \overline{\cNN}^{\normf{\cdot}}\bigg\}
	\]
	shows that the elements in the closure are approximated by the neural networks uniformly also for the case of $p<\infty$. 
\end{remark}

\section{Convergence results}\label{sec:conv}
Let $X\in L^p(\Omega,\cF,\P)$, for some $p\in[1,+\infty]$. Consider an i.i.d.\ sequence $(X_n)_n\subseteq L^p(\Omega,\cF,\P)$ with common distribution $\P_X$ and let $(\hF_N)_N$ denote the corresponding sequence of empirical cumulative distribution functions: for $x\in\R$, $\hF_N(x):\Omega\rightarrow \R$ is defined as $$\hF_N(x):=\frac1N\sum_{n=1}^N\ind_{(-\infty,x]}(X_n),\,\,x\in\R.$$ Denote by $\hP_N$ the random measure associated to the empirical c.d.f.\ $\hF_N$, namely 
\begin{equation}\label{Pemp}
\omega\mapsto \hP_N(\omega)(\cdot):=\frac{1}{N}\sum_{n=1}^N\delta_{X_n(\omega)}(\cdot).
\end{equation}
Finally, for $p\in[1,+\infty)$, let $\cW_p$ be the $p$-Wasserstein distance on $\mathrm{Prob}^p(\R)$ induced by the Euclidean norm, namely,
\[
\cW_p(\mu,\nu)=\inf\left\{(\E|X-Y|^p)^{\frac{1}{p}}\ :\ X\sim\mu,\ Y\sim \nu \right\}.
\]
We refer to the book \citep{Villani09} for a thorough presentation of the topic.
\begin{lemma} 
	\label{LLNforlaws}
	Let $p\in[1,+\infty]$ and $X\in L^p(\Omega,\cF,\P)$. Then $\hP_N\Rightarrow_N\P_X$, $\P$-a.s., where $\Rightarrow$ denotes the weak convergence of probability measures.
	If $p\in[1,+\infty)$, it holds additionally
	$$\lim_{N\rightarrow+\infty}{W_p(\P,\hP_N)}=0\,\, \P\text{-a.s.}.$$ 
\end{lemma}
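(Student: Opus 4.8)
The plan is to derive both statements from the strong law of large numbers (SLLN) applied to suitable functionals of the i.i.d.\ sequence $(X_n)_n$, combined with the standard characterizations of the two modes of convergence on $\R$. Throughout, let $F_X$ denote the c.d.f.\ of $\P_X$.

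For the weak convergence, I would first fix $x\in\R$ and observe that $\hF_N(x)=\frac1N\sum_{n=1}^N\ind_{(-\infty,x]}(X_n)$ is an empirical average of i.i.d.\ bounded random variables with mean $\E[\ind_{(-\infty,x]}(X_1)]=\P(X_1\le x)=F_X(x)$. Hence, by the SLLN, $\hF_N(x)\to F_X(x)$ $\P$-a.s. Taking a countable dense set $D\subseteq\R$ (say $D=\Q$) and intersecting the corresponding probability-one events, I obtain a single event $\Omega_0$ with $\P(\Omega_0)=1$ on which $\hF_N(q)\to F_X(q)$ for every $q\in D$. On $\Omega_0$, the usual sandwiching argument using monotonicity of c.d.f.'s extends this to every continuity point $x$ of $F_X$: for rationals $q_1<x<q_2$ one has $\hF_N(q_1)\le\hF_N(x)\le\hF_N(q_2)$ and $F_X(q_1)\le F_X(x)\le F_X(q_2)$, and letting $q_1\uparrow x$, $q_2\downarrow x$ at a continuity point forces $\hF_N(x)\to F_X(x)$. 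By the standard c.d.f.\ characterization of weak convergence on $\R$ (convergence at all continuity points of the limit), this yields $\hP_N\Rightarrow\P_X$ on $\Omega_0$, which is the first claim. Note this covers $p=\infty$ as well, since boundedness is not used here.

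For the case $p\in[1,+\infty)$, I would invoke the characterization that convergence in $\cW_p$ on $\mathrm{Prob}^p(\R)$ is equivalent to weak convergence together with convergence of the $p$-th absolute moments (see \citep{Villani09}). The first ingredient is already established. For the second, note that $\int_\R|x|^p\,\mathrm{d}\hP_N(x)=\frac1N\sum_{n=1}^N|X_n|^p$ is again an empirical average of i.i.d.\ random variables, now with finite mean $\E[|X_1|^p]=\int_\R|x|^p\,\mathrm{d}\P_X(x)<+\infty$ precisely because $X\in L^p$. Applying the SLLN once more gives $\int_\R|x|^p\,\mathrm{d}\hP_N\to\int_\R|x|^p\,\mathrm{d}\P_X$ on a further probability-one event $\Omega_1$. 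On $\Omega_0\cap\Omega_1$, which still has full measure, both weak convergence and moment convergence hold, whence $\cW_p(\P_X,\hP_N)\to0$, proving the second claim (I read $\P$ in the statement as $\P_X$, so that $\cW_p$ is between two measures on $\R$).

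There is no genuine obstacle here: the result is essentially an assembly of classical facts (a Glivenko--Cantelli/Varadarajan-type argument for the weak limit, plus a moment SLLN). The only point requiring care is the bookkeeping of null sets, namely that each application of the SLLN is valid off an $x$-dependent (resp.\ moment-dependent) null set, and that these must be combined into a single exceptional null set; this is handled by restricting to the countable index set $D$ for the c.d.f.'s before passing to continuity points via monotonicity, and by taking one further null set for the moment convergence. I would also explicitly record that $\P_X\in\mathrm{Prob}^p(\R)$ and $\hP_N\in\mathrm{Prob}^p(\R)$ (the latter being finitely supported) so that $\cW_p$ is well defined throughout and the cited equivalence applies.
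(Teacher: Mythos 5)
Your proof is correct and takes essentially the same route as the paper: the paper disposes of the first claim by citing the Glivenko--Cantelli theorem (the step you instead prove by hand via the SLLN on a countable dense set plus sandwiching at continuity points of $F_X$), and handles the second exactly as you do, by combining Theorem 6.9 of \citep{Villani09} with the law of large numbers for the $p$-th moments. Your reading of $W_p(\P,\hP_N)$ as $W_p(\P_X,\hP_N)$ also matches the paper's evident intent.
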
 
\begin{proof}
	The first statement follows from the Glivenko-Cantelli theorem. As for the second statement, Theorem 6.9 in \citep{Villani09} shows that convergence in $\cW_p$ is equivalent to $\hP_N\Rightarrow_N\P_X$ together with the convergence of the $p$-th moments. The latter follows from the law of large numbers and the integrability of $\P_X$, so that, $\lim_{N\rightarrow +\infty}\int_{\R}|x|^pd\hP_N(x)=\int_{\R}|x|^pd\P_X(x)$.
\end{proof}

We aim at proving the convergence of the optimal values, namely,
\begin{equation}
	\label{eq:llnrisksharing}
	\rho_1\square \rho_2(X)=\lim_{N\rightarrow +\infty}\trhoone{\cdot}{\hP_N(\omega)}\square\trhotwo{\cdot}{\hP_N(\omega)}(\ide),\qquad \P\text{-a.e. }\omega,
\end{equation}
and the convergence of the corresponding minimizers. We will need to establish some joint continuity results for the map $(f,\Q)\mapsto\trho{f}{\Q}$, defined in \eqref{rhogivenP}, on the spaces $\lipone \times \mathrm{Prob}^{p}(\R)$ and $\lipone \times \mathrm{Prob}^\infty_K(\R)$. Some preliminary topological considerations are useful. 
\begin{remark}
	\label{remboundforempirical}The space $\mathrm{Prob}^\infty_K(\R)$ is used for the $L^\infty(\Omega,\cF,\P)$ case. Indeed, for $X\in L^\infty(\Omega,\cF,\P)$ and $(X_n)_n$ an i.i.d.\ sequence with common law $\P_X$, we obviously have $|X_n|\le \norm{X}_{\infty}$ $\P$-a.s.. Thus, the measure $\hP_N:=\frac{1}{N}\sum_{n=1}^N\delta_{X_n}$ satisfies $\hP_N(\omega)\in \mathrm{Prob}^\infty_K(\R)$ for $\P$-a.e.\ $\omega$ and for $K\ge\norm{X}_{\infty}$.
\end{remark}
We endow $\mathrm{Prob}^{p}(\R)$ with the $\cW_p$-topology, for $p\in[1,\infty)$ and $\mathrm{Prob}^\infty_K$ with the weak topology. We endow $\lipone$ with the topology induced by the metric:
\begin{equation}
	\label{normcpt}
	d(\varphi,\psi):=\sum_{h=1}^\infty\frac{1}{2^h}\min\left(1,\sup_{x\in[-h,h]}\abs{\varphi(x)-\psi(x)}\right),\qquad \varphi,\psi\in\lipone.\\
\end{equation}
One can verify that $(\cC(\R),d)$ is a complete metric space and $d$ metrizes the uniform convergence on compact sets.
As a consequence of the Ascoli-Arzel\`a theorem, we have that $\lipone$ is compact with respect to the topology induced by $d$, as we prove next.
\begin{lemma}
	\label{lem:ascoliarze}
	Let $(\varphi_n)_n$ be a sequence in $\lipone$.	There exists $\varphi\in \lipone$ and a subsequence $(\varphi_{n_k})_k$ such that $\lim_{n\rightarrow+\infty}d(\varphi_{n_k},\varphi)=0$.
\end{lemma}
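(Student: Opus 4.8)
The plan is to recognize the statement as the sequential compactness of $\lipone$ in the topology of uniform convergence on compact sets, and to establish it via the Arzelà-Arzelà-Ascoli theorem combined with a diagonal extraction. First I would verify the two hypotheses of Arzelà-Ascoli on each compact interval $[-h,h]$, $h\in\N$. Equicontinuity is immediate: every $\varphi\in\lipone$ satisfies $\normLip{\varphi}\le 1$, so the whole family is uniformly $1$-Lipschitz and hence (uniformly) equicontinuous. Pointwise boundedness follows from the normalization $\varphi(0)=0$ together with the Lipschitz bound, which give $|\varphi(x)|=|\varphi(x)-\varphi(0)|\le |x|$, so that $\sup_{x\in[-h,h]}|\varphi(x)|\le h$ uniformly in $\varphi\in\lipone$.

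Given the sequence $(\varphi_n)_n$, Arzelà-Ascoli then yields, for each fixed $h$, a subsequence whose restrictions to $[-h,h]$ converge uniformly. To obtain a single subsequence converging uniformly on every compact set at once, I would run the standard diagonal argument: extract nested subsequences $(\varphi^{(1)}_n)\supseteq(\varphi^{(2)}_n)\supseteq\cdots$, where $(\varphi^{(h)}_n)$ converges uniformly on $[-h,h]$, and set $\varphi_{n_k}:=\varphi^{(k)}_k$. This diagonal sequence converges uniformly on each $[-h,h]$, and since any compact subset of $\R$ is contained in some $[-h,h]$, it converges uniformly on compacts to a limit $\varphi$, continuous as a locally uniform limit of continuous functions.

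It then remains to check that $\varphi\in\lipone$, so the limit lies in the claimed set, and to translate locally uniform convergence into convergence for the metric $d$. All three defining constraints of $\lipone$ are stable under pointwise limits: $\varphi(0)=\lim_k\varphi_{n_k}(0)=0$; for any $x,y$ one has $|\varphi(x)-\varphi(y)|=\lim_k|\varphi_{n_k}(x)-\varphi_{n_k}(y)|\le|x-y|$, giving $\normLip{\varphi}\le 1$; and the same computation applied to $\ide-\varphi_{n_k}$ gives $\normLip{\ide-\varphi}\le 1$. Hence $\varphi\in\lipone$. Finally, since $d$ metrizes uniform convergence on compact sets (as observed just before the lemma), the locally uniform convergence $\varphi_{n_k}\to\varphi$ is precisely the statement $\lim_k d(\varphi_{n_k},\varphi)=0$.

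The argument is essentially routine, so I do not expect a genuine obstacle; the only point demanding care is the diagonal extraction, which is what upgrades the interval-by-interval relative compactness furnished by Arzelà-Ascoli into a single subsequence converging in the metric $d$, together with the (easy but necessary) verification that the three closed conditions defining $\lipone$ survive passage to the limit.
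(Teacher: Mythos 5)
Your proof is correct and follows essentially the same route as the paper: Arzel\`a--Ascoli on each interval $[-h,h]$ (using the uniform $1$-Lipschitz bound and $\varphi(0)=0$), a diagonal extraction to get one subsequence converging uniformly on all compacts, verification that the three defining conditions of $\lipone$ pass to the limit, and the observation that $d$ metrizes locally uniform convergence. The paper merely packages the diagonal step slightly differently, building intermediate limits $\varphi_h$ and choosing $n_k$ with $\sup_{[-k,k]}|\varphi_{n_k}-\varphi_k|<1/k$, but the argument is the same.
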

\begin{proof}
	Recall that for every $f\in\lipone$, $\normLip{f}\le 1$. In particular, for every $h\in\N$, any family of functions in $\lipone$, restricted to $[-h,h]$, is equicontinuous and equibounded. We first construct a sequence of functions $(\varphi_h)_{h\in\N}$ in the following iterative way. For $h=1$ we apply the Ascoli-Arzel\`a theorem to $(\varphi_n)_n$ restricted to $[-1,1]$. This yields a subsequence, that we relabel again as $(\varphi_n)_n$, and a continuous function $\varphi_1$ on $[-1,1]$ such that $(\varphi_n)_n$ converges uniformly to $\varphi_1$ on $[-1,1]$. Note that $\varphi_1$ is continuous on $[-1,1]$, being uniform limit of continuous functions, and satisfies $\varphi_1(0)=0$, since $\normLip{\varphi_n}\le 1$ and $\varphi_n(0)=0$ for any $n\in\N$. 
	At the step $h+1$ we repeat the same argument to the sequence $(\varphi_n)_n$ obtained at step $h$. Note that the limiting function $\varphi_{h+1}$ satisfies $\varphi_{h+1}=\varphi_{h}$ on $[-h,h]$, since $(\varphi_n)_n$ converges uniformly to $\varphi_{h}$ on $[-h,h]$. Similarly as above, $\varphi_{h+1}$ is continuous on $[-(h+1),h+1]$ and satisfies $\varphi_{h+1}(0)=0$.
	
	We are now able to construct the limiting $\varphi\in\lipone$. For every $h\in\N$, we extend $\varphi_h$ to $\R$ in an arbitrary way outside $[-h,h]$. We set $\varphi(x):=\lim_{h\to\infty}\varphi_{h}(x)$, for every $x\in\R$ and note that $\varphi$ coincides with $\varphi_{h}$ on every $[-h,h]$. In particular, we deduce that $\varphi\in\lipone$.
	
	Finally, we construct the convergent subsequence $(\varphi_{n_k})_{k}$ of the original sequence. From the Ascoli-Arzel\`a argument above, for every, $k\in\N$, there exists $n_k\in\N$ such that \[
	\sup_{x\in[-k,k]}\abs{\varphi_{n_k}(x)-\varphi_k(x)}<\frac{1}{k}.\]
	For every $h\in \N$ and $k\ge h$, using that $\varphi=\varphi_{k}$ on $[-k,k]$, we obtain
	\[
	\sup_{x\in[-h,h]}\abs{\varphi_{n_k}(x)-\varphi(x)}\le \sup_{x\in[-k,k]}\abs{\varphi_{n_k}(x)-\varphi(x)}= \sup_{x\in[-k,k]}\abs{\varphi_{n_k}(x)-\varphi_k(x)}<\frac{1}{k},
	\]
	which implies the uniform convergence of the subsequence  $(\varphi_{n_k})_{k}$ to $\varphi$ on every interval $[-h,h]$. An application of Dominated Convergence Theorem then yields $$\lim_{k\to +\infty}d(\varphi_{n_k},\varphi)=\lim_{k\to +\infty}\sum_{h=1}^\infty\frac{1}{2^h}\min\left(1,\sup_{x\in[-h,h]}\abs{\varphi_{n_k}(x)-\varphi(x)}\right)=0.$$
\end{proof}
We state now our main convergence result. We need the following functional, which is an almost sure version of the distance $d$. For $\mu$ a measure on $\cB(\R)$ and $\varphi,\psi:\R\rightarrow\R$, $\cB(\R)$-measurable functions,
\begin{equation}
	\label{dmu}
	d_\mu(\varphi,\psi):=\sum_{h=1}^\infty\frac{1}{2^h}\norm{\min\left(1,\abs{\varphi-\psi}1_{[-h,h]}\right)}_{\infty,\mu},
\end{equation}
where in the above $L^\infty(\R,\cB(\R),\mu)$-norm we made explicit the dependence on the reference measure $\mu$ in order to avoid ambiguity in what follows.

\begin{theorem} \label{thm:main_conv}
	Let $p\in[1,\infty]$. Let $\rho_1,\rho_2: L^p(\Omega,\cF,\P)\rightarrow\R$ be law-invariant convex risk measures. Suppose that $\rho_1$ is strictly convex in the sense of \eqref{strictcnvxRM}. Only in the case of $p=\infty$ suppose that $\rho_1$ and $\rho_2$ satisfy the Lebesgue property.
	Then, 
	\begin{equation*}
		\trhoone{\cdot}{\P_X}\square\trhotwo{\cdot}{\P_X}(\ide)=\lim_{N\rightarrow +\infty}\trhoone{\cdot}{\hP_N(\omega)}\square\trhotwo{\cdot}{\hP_N(\omega)}(\ide)\qquad \P\text{-a.e. }\omega.
	\end{equation*}
	Furthermore, let $(\hphi,\ide-\hphi)$ and $(\hphi_N(\omega),\ide-\hphi_N(\omega))$ be  optimal allocations in $\lipone$, corresponding to $\P_X$ and $\hP_N(\omega)$ respectively. Then, 
	\[\lim_{N\to +\infty}d_{\P_X}(\hphi_N(\omega),\hphi)=0\,\qquad \P\text{-a.e. }\omega.\]	
\end{theorem}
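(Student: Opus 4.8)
The plan is to write both the limiting and the empirical problems as minimizations of a convex functional over the common compact set $\lipone$, and then to transfer the convergence $\hP_N(\omega)\to\P_X$ of the laws into convergence of optimal values and optimizers through a joint continuity property. For a measure $\Q$ on $\R$ put $\Phi_\Q(f):=\trhoone{f}{\Q}+\trhotwo{\ide-f}{\Q}$, $f\in\lipone$, so that by Remark \ref{remtoQ} one has $\trhoone{\cdot}{\Q}\square\trhotwo{\cdot}{\Q}(\ide)=\inf_{f\in\lipone}\Phi_\Q(f)$ for every $\Q\in\mathrm{Prob}^p(\R)$ (the finitely supported $\hP_N$ included), the infimum being attained by compactness of $(\lipone,d)$ (Lemma \ref{lem:ascoliarze}) and continuity of $\Phi_\Q$ in $f$. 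Throughout I fix an $\omega$ in the full-measure event on which $\hP_N(\omega)\to\P_X$ (in $\cW_p$ for $p<\infty$, weakly on $\mathrm{Prob}^\infty_K(\R)$ for $p=\infty$, by Lemma \ref{LLNforlaws} and Remark \ref{remboundforempirical}) and suppress $\omega$ from the notation.

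The technical core is the joint continuity statement: if $f_n\to f$ in $(\lipone,d)$ and $\Q_n\to\Q$ in the relevant topology, then $\trho{f_n}{\Q_n}\to\trho{f}{\Q}$. To prove it I realize the laws on a common space. For $p<\infty$ I take couplings $(Y_n,Y)$ with $Y_n\sim\Q_n$, $Y\sim\Q$ and $\E|Y_n-Y|^p=\cW_p(\Q_n,\Q)^p\to0$; for $p=\infty$ the Skorokhod representation yields $Y_n\to Y$ $\P$-a.s.\ with all variables valued in $[-K,K]$. By law invariance $\trho{f_n}{\Q_n}=\rho(f_n(Y_n))$ and $\trho{f}{\Q}=\rho(f(Y))$, so it suffices to pass to the limit in $\rho(f_n(Y_n))$. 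I split $f_n(Y_n)-f(Y)=\big(f_n(Y_n)-f(Y_n)\big)+\big(f(Y_n)-f(Y)\big)$; the second summand is bounded by $|Y_n-Y|$ since $f$ is $1$-Lipschitz, while for the first I cut at a level $M$, using the uniform convergence $f_n\to f$ on $[-M,M]$ on $\{|Y_n|\le M\}$ and the crude bound $|f_n(x)-f(x)|\le 2|x|$ on $\{|Y_n|>M\}$, the latter contribution being controlled by the uniform integrability of $(|Y_n|^p)_n$ inherited from $\cW_p$-convergence. This gives $f_n(Y_n)\to f(Y)$ in $L^p$ for $p<\infty$, and the $\norm{\cdot}_p$-continuity of $\rho$ (Extended Namioka-Klee) closes the case. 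For $p=\infty$ only the a.s.\ convergence $f_n(Y_n)\to f(Y)$ is available, with a uniform bound by $K$; here the Lebesgue property of $\rho$ is exactly what delivers $\rho(f_n(Y_n))\to\rho(f(Y))$, which is why that hypothesis is imposed when $p=\infty$.

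With joint continuity established, the convergence of values is a standard two-sided argument. The $\limsup$ bound follows by testing the empirical problems against a fixed optimizer $\hphi$ of $\Phi_{\P_X}$: $\inf\Phi_{\hP_N}\le\Phi_{\hP_N}(\hphi)\to\Phi_{\P_X}(\hphi)=\inf\Phi_{\P_X}$. For the $\liminf$ bound I take optimizers $\hphi_N$ of $\Phi_{\hP_N}$; by compactness every subsequence has a $d$-convergent sub-subsequence $\hphi_{N_k}\to\psi\in\lipone$, and joint continuity gives $\inf\Phi_{\hP_{N_k}}=\Phi_{\hP_{N_k}}(\hphi_{N_k})\to\Phi_{\P_X}(\psi)\ge\inf\Phi_{\P_X}$. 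Combining the two bounds yields $\inf\Phi_{\hP_N}\to\inf\Phi_{\P_X}$, which, through the identification above, is the first assertion of the theorem.

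For the convergence of the optimizers, the same extraction shows that every $d$-limit $\psi$ of $(\hphi_N)$ satisfies $\Phi_{\P_X}(\psi)=\inf\Phi_{\P_X}$, i.e.\ $\psi$ is itself optimal for $\P_X$. Strict convexity of $\rho_1$ together with Proposition \ref{prop:uniqueopt} pins the optimizer of $\Phi_{\P_X}$ down uniquely up to an additive constant, which the normalization $f(0)=0$ built into $\lipone$ removes; hence $\psi=\hphi$ $\P_X$-a.s., and therefore $d_{\P_X}(\cdot,\psi)=d_{\P_X}(\cdot,\hphi)$. Since one checks directly from \eqref{normcpt} and \eqref{dmu} that $d_\mu\le d$ for every $\mu$, the uniform-on-compacts convergence $\hphi_{N_k}\to\psi$ upgrades to $d_{\P_X}(\hphi_{N_k},\hphi)\to0$; as this holds along some sub-subsequence of every subsequence, the full sequence converges, giving $d_{\P_X}(\hphi_N,\hphi)\to0$. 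I expect the joint continuity lemma to be the main obstacle: the delicate point is to control $f_n(Y_n)-f(Y_n)$ globally despite the absence of a uniform modulus of continuity away from the origin, which is precisely why the $\cW_p$-induced uniform integrability (respectively, the Lebesgue property in the $L^\infty$ case) is indispensable.
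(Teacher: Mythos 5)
Your proof is correct and follows essentially the same route as the paper's: the load-bearing ingredients --- compactness of $\lipone$ (Lemma \ref{lem:ascoliarze}), joint continuity of $(f,\Q)\mapsto\trho{f}{\Q}$ (the paper's Proposition \ref{prop:trhocont}), uniqueness of the optimizer via strict convexity and Proposition \ref{prop:uniqueopt}, the inequality $d_{\P_X}\le d$, and the subsequence-of-a-subsequence scheme --- are identical. You deviate in two local ways, both legitimate: where the paper invokes Berge's Maximum Theorem to obtain simultaneously the convergence of optimal values and the upper hemicontinuity of the argmin correspondence, you inline an elementary two-sided argument (testing the empirical problems against $\hphi$ for the upper bound, compactness extraction for the lower bound), which makes the proof self-contained at the cost of a few lines; and in the joint-continuity step you use optimal couplings with a truncation/uniform-integrability estimate, where the paper's Lemma \ref{lemmadelbaen} uses the Skorokhod representation together with Scheff\'e's lemma and a dominated-convergence argument along a further subsequence --- both yield $f_n(Y_n)\to f(Y)$ in $L^p$, respectively $\P$-a.s.\ with a uniform bound when $p=\infty$, after which Namioka--Klee continuity, respectively the Lebesgue property, closes the argument exactly as in the paper. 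Two fine points you should make explicit. First, your couplings must live on one space with a \emph{single} common $Y$, since otherwise norm continuity of $\rho$ at the moving base point $f(Y^{(n)})$ is not directly available; on $\R$ this is immediate via the quantile coupling $Y_n=F_{\Q_n}^{-1}(U)$, $Y=F_{\Q}^{-1}(U)$ with a common uniform $U$, or via Skorokhod as in the paper. Second, your claim that the normalization $f(0)=0$ removes the additive constant $c$ in $\psi=\hphi+c$ $\P_X$-a.s.\ is only immediate when $\P_X$ charges every neighborhood of $0$; the paper handles the general case in a footnote by first translating (using cash additivity) so that this holds, and then letting continuity at $0$ force $c=0$ --- your write-up needs that observation, since $f(0)=0$ alone does not kill the constant when $0$ is far from the support of $\P_X$.
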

The rest of the section is devoted to the proof of this theorem. We will need a number of auxiliary results, which are of independent interest. The first result is essentially \cite[Proposition 1]{Delbaen21} or \cite[Theorem 2.1]{Shapiro13}, adapted to our context.
\begin{lemma}
	\label{lemmadelbaen}
	Consider again a generic atomless probability space $(\Omega,\cF,\P)$. 
	\begin{enumerate}
		\item \label{item1}Let $p\in[1,+\infty)$. Suppose $(\Q_n)_n,\Q\subseteq\mathrm{Prob}^p(\R)$ and $\lim_{n\to+\infty}W_p(\Q_n,\Q)=0$. Then, there exists a sequence $(Y_n)_n$ in $L^p(\Omega,\cF,\P)$ and $Y\in L^p(\Omega,\cF,\P)$ such that $\P_{Y_n}=\Q_n$ for every $n\in\N$, $\P_Y=\Q$ and $\lim_{n\to+\infty}\norm{Y_n-Y}_p=0$.
		\item \label{item2}Let $p=\infty$. Suppose $(\Q_n)_n,\Q\subseteq\mathrm{Prob}_K^\infty(\R)$ and $\Q_n\Rightarrow_n\Q$. Then there exists a sequence $(Y_n)_n$ in $L^\infty(\Omega,\cF,\P)$ and $Y\in L^\infty(\Omega,\cF,\P)$ such that $\P_{Y_n}=\Q_n$ for every $n\in\N$, $\P_Y=\Q$, $\lim_{n\to+\infty }Y_n=Y\,\P$-a.s. and $\sup_{n}\norm{Y_n}_\infty<+\infty$.
	\end{enumerate}
\end{lemma}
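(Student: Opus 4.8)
The plan is to realize all the measures simultaneously on the common atomless space through a single uniform random variable, via the classical monotone (quantile) coupling. Since $(\Omega,\cF,\P)$ is atomless, there exists $U:\Omega\to[0,1]$ with $\P_U$ equal to the Lebesgue measure on $[0,1]$; this is the same non-atomicity fact already used in Section \ref{sec:framework} to realize an arbitrary law. For a distribution function $F$ I write $F^{-1}(u):=\inf\{x\in\R : F(x)\ge u\}$, $u\in(0,1)$, for its left-continuous generalized inverse, and recall the standard facts that $F^{-1}$ is non-decreasing and that $F^{-1}(U)$ has distribution function $F$ whenever $U\sim\mathrm{Unif}(0,1)$ (via $F^{-1}(u)\le x\iff u\le F(x)$). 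Setting $Y_n:=F_{\Q_n}^{-1}(U)$ and $Y:=F_{\Q}^{-1}(U)$ then gives $\P_{Y_n}=\Q_n$ and $\P_Y=\Q$ automatically, handling the law constraints in both items at once; it remains only to establish the two convergences.

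For item \ref{item1}, the decisive input is the one-dimensional identity
\[
\E\big[|Y_n-Y|^p\big]=\int_0^1\abs{F_{\Q_n}^{-1}(u)-F_{\Q}^{-1}(u)}^p\,\mathrm{d}u=W_p(\Q_n,\Q)^p,
\]
where the first equality is merely the change of variables under the law of $U$, and the second expresses the classical fact that on $\R$ the monotone coupling is $W_p$-optimal (see e.g.\ \citep{Villani09}). Since $W_p(\Q_n,\Q)\to 0$ by hypothesis, this immediately yields $\norm{Y_n-Y}_p\to 0$, and the finiteness of the $p$-th moments of $\Q_n,\Q$ gives $Y_n,Y\in L^p(\Omega,\cF,\P)$.

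For item \ref{item2}, I would instead use that weak convergence $\Q_n\Rightarrow\Q$ on $\R$ is equivalent to the pointwise convergence $F_{\Q_n}^{-1}(u)\to F_{\Q}^{-1}(u)$ at every continuity point $u$ of the monotone map $F_{\Q}^{-1}$, hence at all but countably many $u\in(0,1)$. As $U$ is non-atomic, this countable exceptional set is $\P$-null, so $Y_n\to Y$ $\P$-a.s. Finally, since each $\Q_n$ is supported in $[-K,K]$, its quantile function takes values in $[-K,K]$, whence $|Y_n|\le K$ $\P$-a.s.\ and $\sup_n\norm{Y_n}_\infty\le K<+\infty$.

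No step presents a genuine obstacle: the single-coupling idea reduces everything to two classical one-dimensional facts. If a self-contained argument were required, the only real work would be reproducing these inputs --- the Wasserstein identity via a rearrangement (Hardy--Littlewood) argument, and the quantile characterization of weak convergence via the Portmanteau theorem --- but in the present context they can simply be invoked, which is precisely the sense in which the statement is an adaptation of \citep{Delbaen21,Shapiro13}.
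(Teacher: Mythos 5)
Your proof is correct, but it takes a different route from the paper on the decisive step. The paper invokes the Skorokhod representation theorem (as in \cite[Theorem 25.6]{Billingsley}) as a black box to produce $Y_n\to Y$ $\P$-a.s.\ with the prescribed marginals, which settles item \ref{item2} immediately; for item \ref{item1} it then upgrades almost sure convergence to $L^p$ convergence by combining the characterization of $W_p$-convergence in \cite[Theorem 6.9]{Villani09} (convergence of $p$-th moments) with Scheff\'e's lemma and a uniform integrability (Vitali) argument applied to the dominating sequence $2^{p-1}(\abs{Y_n}^p+\abs{Y}^p)$. You instead make the coupling explicit via the quantile transform $Y_n=F_{\Q_n}^{-1}(U)$, $Y=F_{\Q}^{-1}(U)$, and for item \ref{item1} use the one-dimensional optimality of the comonotone coupling, which gives the exact identity $\norm{Y_n-Y}_p=W_p(\Q_n,\Q)$ and bypasses Scheff\'e and uniform integrability entirely; for item \ref{item2} you re-derive the a.s.\ convergence from the standard fact that weak convergence is equivalent to pointwise convergence of quantile functions off a countable set, which is essentially the content of the one-dimensional proof of Skorokhod's theorem, so there the two arguments secretly coincide. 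What each approach buys: yours is sharper and quantitative (an equality of norms rather than mere convergence, with no appeal to Vitali) and is fully self-contained on $\R$ given the two classical inputs you cite; the paper's argument is coupling-agnostic --- it never uses that the Skorokhod coupling is optimal --- and would therefore transfer verbatim to $\R^d$ or other settings where the monotone rearrangement is unavailable. Your handling of the side conditions is also sound: realizing a uniform $U$ on the generic atomless space is the same non-atomicity fact the paper uses in Section \ref{sec:framework}, and the bound $\abs{Y_n}\le K$ follows correctly from $F_{\Q_n}^{-1}((0,1))\subseteq[-K,K]$ when $\Q_n\in\mathrm{Prob}^\infty_K(\R)$, matching the paper's observation that $\P(Y_n\in[-K,K])=\Q_n([-K,K])=1$.
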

\begin{proof} For every $\Q_n\Rightarrow_n\Q$, by the Skorokhod Theorem (as in \cite[Theorem 25.6]{Billingsley}), there exist random variables  $Y,(Y_n)_n$ such that $\P_Y=\Q$, $\P_{Y_n}=\Q_n$ for every $n\in\N$ and $\lim_{n\to+\infty}Y_n=Y\,\P$-a.s. For item \ref{item2}, it is enough to additionally note that $\P(Y_n\in[-K,K])=\Q_n([-K,K])=1$.

As for item \ref{item1}, by the characterization of $W_p$-convergence in \cite[Theorem 6.9]{Villani09}, we have $$\lim_{n\to+\infty}\Ep{\abs{Y_n}^p}=\lim_{n\to+\infty}\int_{\R}\abs{x}^p\mathrm{d}\Q_n(x)=\int_{\R}\abs{x}^p\mathrm{d}\Q(x)=\Ep{\abs{Y}^p}.$$ Now we proceed as in \citep{Delbaen21} Proposition 1: by Scheffé's lemma we conclude that $(\abs{Y_n}^p)_n$ converges in $L^1(\Omega,\cF,\P)$ to $\abs{Y}^p$. Hence $\abs{Y_n-Y}^p\leq \frac{1}{2}\abs{2Y_n}^p+\frac{1}{2}{\abs{2Y}^p}=2^{p-1}(\abs{Y_n}^p+\abs{Y}^p)$ is a uniformly integrable sequence. Indeed,
\begin{align*}
 &\Ep{\ind_{\{\abs{Y_n-Y}^p\geq K\}}\abs{Y_n-Y}^p}\leq \Ep{\ind_{\{\abs{Y_n}^p+\abs{Y}^p\geq 2^{1-p}K\}}\abs{q_n-q}^p}\\
 &\leq \Ep{\ind_{\{\abs{Y_n}^p+\abs{Y}^p\geq 2^{1-p}K\}}\left(\abs{Y_n}^p+\abs{Y}^p\right)}.   
\end{align*}
	Since $\abs{Y_n-Y}^p$ converges to zero $\P$-a.s., the proof is complete.
\end{proof}
\begin{proposition}
	\label{prop:trhocont}Let $p\in[1,\infty]$ and $\rho: L^p(\Omega,\cF,\P)\rightarrow\R$ be a law-invariant convex risk measure. For $p\in [1,+\infty)$ the map
	$(f,\Q)\mapsto\trho{f}{\Q}$ of \eqref{rhogivenP} is continuous on $\lipone \times \mathrm{Prob}^{p}(\R)$. If, on the other hand $p=+\infty$ and $\rho$  additionally satisfies the Lebesgue property on $L^\infty(\Omega,\cF,\P)$, then  $(f,\Q)\mapsto\trho{f}{\Q}$ is continuous on $\lipone \times\mathrm{Prob}^{\infty}_K(\R)$, for every $K>0$.
\end{proposition}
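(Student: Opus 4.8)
The plan is to prove sequential continuity, which suffices since both factors are metrizable: $\lipone$ carries the metric $d$ of \eqref{normcpt}, $\mathrm{Prob}^{p}(\R)$ is metrized by $\cW_p$, and the weak topology on $\mathrm{Prob}^\infty_K(\R)$ (probability measures supported on the compact set $[-K,K]$) is metrizable as well. So I fix a sequence $(f_n,\Q_n)\to(f,\Q)$ in the product topology and show $\trho{f_n}{\Q_n}\to\trho{f}{\Q}$.

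The first step is to realize the abstract convergence of laws as an almost sure, respectively $L^p$, convergence of concrete random variables via Lemma \ref{lemmadelbaen}. In the case $p<\infty$, since $\cW_p(\Q_n,\Q)\to0$, item \ref{item1} yields $Y_n,Y\in L^p(\Omega,\cF,\P)$ with $\P_{Y_n}=\Q_n$, $\P_Y=\Q$ and $\norm{Y_n-Y}_p\to0$. In the case $p=\infty$, since $\Q_n\Rightarrow\Q$ inside $\mathrm{Prob}^\infty_K(\R)$, item \ref{item2} yields such representatives with $Y_n\to Y$ $\P$-a.s.\ and $\sup_n\norm{Y_n}_\infty<\infty$. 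By law invariance, for these specific couplings $\trho{f_n}{\Q_n}=\rho(f_n(Y_n))$ and $\trho{f}{\Q}=\rho(f(Y))$, so it remains to control $\rho(f_n(Y_n))$.

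The key estimate splits $f_n(Y_n)-f(Y)=\big(f_n(Y_n)-f_n(Y)\big)+\big(f_n(Y)-f(Y)\big)$. The uniform Lipschitz bound $\normLip{f_n}\le1$ controls the first summand pointwise by $\abs{f_n(Y_n)-f_n(Y)}\le\abs{Y_n-Y}$, which tends to $0$ in $L^p$-norm or $\P$-a.s.\ by the previous step. For the second summand, convergence $f_n\to f$ in $d$ is exactly uniform convergence on compacts; since $\abs{Y}<\infty$ a.s., this gives $f_n(Y)\to f(Y)$ $\P$-a.s. Combining, $f_n(Y_n)\to f(Y)$ $\P$-a.s. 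For $p<\infty$ I upgrade this to $L^p$-convergence by dominated convergence, using $f_n(0)=0$ and $\normLip{f_n}\le1$ to obtain the domination $\abs{f_n(Y)-f(Y)}^p\le 2^p\abs{Y}^p\in L^1$; then $\norm{f_n(Y_n)-f(Y)}_p\to0$, and since $\rho$ is $\norm{\cdot}_p$-continuous by the Extended Namioka--Klee theorem \citep{BF10}, we conclude $\rho(f_n(Y_n))\to\rho(f(Y))$. For $p=\infty$ I instead invoke the Lebesgue property: the bound $\abs{f_n(Y_n)}\le\abs{Y_n}\le\sup_n\norm{Y_n}_\infty=:C<\infty$ furnishes a constant dominating random variable $C\in L^\infty$, which together with the $\P$-a.s.\ convergence $f_n(Y_n)\to f(Y)$ yields $\rho(f_n(Y_n))\to\rho(f(Y))$.

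The main obstacle is the second summand $f_n(Y)-f(Y)$: the convergence $f_n\to f$ is only uniform on compact sets, not globally, so it cannot be bounded uniformly in $x$. The resolution is that the random argument $Y$ lives in a (random) compact set almost surely, which turns uniform-on-compacts convergence into $\P$-a.s.\ pointwise convergence, after which domination (by $\abs{Y}$ for $p<\infty$, by the uniform $L^\infty$ bound for $p=\infty$) closes the argument. The shared Lipschitz constant across the whole sequence $(f_n)_n$ is precisely what makes the domination and the treatment of the first summand work simultaneously.
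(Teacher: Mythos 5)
Your proof is correct and rests on the same pillars as the paper's own argument: metrizability of both factors reduces continuity to sequential continuity, Lemma \ref{lemmadelbaen} realizes the converging laws as concrete random variables on $(\Omega,\cF,\P)$, and the conclusion follows from norm continuity of $\rho$ (Extended Namioka--Klee) when $p<\infty$ and from the Lebesgue property when $p=\infty$. The one genuine difference is how the limit $\rho(f_n(Y_n))\to\rho(f(Y))$ is obtained for $p<\infty$. The paper runs a sub-subsequence argument: from $\norm{Y_n-Y}_p\to 0$ it extracts a further subsequence along which $Y_n\to Y$ almost surely with a dominating $Z\in L^p$, applies dominated convergence to $f_n(Y_n)-f(Y)$ as a whole, and concludes via the ``every subsequence admits a convergent sub-subsequence'' principle. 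Your splitting $f_n(Y_n)-f(Y)=\bigl(f_n(Y_n)-f_n(Y)\bigr)+\bigl(f_n(Y)-f(Y)\bigr)$ dispenses with any extraction: the first term is controlled directly in $L^p$-norm by $\normLip{f_n}\le 1$ and $\norm{Y_n-Y}_p\to 0$, and the second by dominated convergence with the \emph{fixed} dominator $2\abs{Y}\in L^p$, so the full sequence converges in norm. This is a modest but real streamlining of the same strategy, and it buys a slightly cleaner statement (no relabeling of subsequences). One wording slip to fix: for $p<\infty$ your sentence ``Combining, $f_n(Y_n)\to f(Y)$ $\P$-a.s.'' is not justified, since $L^p$-convergence of $Y_n$ yields a.s.\ convergence only along subsequences; however, your argument never actually uses it in that case --- the dominated-convergence step is applied to $f_n(Y)-f(Y)$, where a.s.\ convergence does hold, and the first summand is handled purely in norm --- so the proof stands once that sentence is restricted to the $p=\infty$ branch, where it is correct and needed.
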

\begin{proof}
	We start covering the case $p\in[1,+\infty)$.
	Since $\lipone\times \mathrm{Prob}^p(\R)$ is a metric space, we check continuity along sequences.
	Take a convergent sequence $(f_n,\Q_n)\rightarrow_n(f,\Q)$ in $\lipone \times \mathrm{Prob}^{p}(\R)$. Take any subsequence. We prove that it admits a further  subsequence for which $\lim_{k\to+\infty}\trho{f_{n_k}}{\Q_{n_k}}=\trho{f}{\Q}$, which yields the convergence of the original sequence. The first extracted subsequence will be relabelled with the index $n\in\N$. Since $\lim_{n\to+\infty}W_p(\Q_n,\Q)=0$, we can  apply Lemma \ref{lemmadelbaen}. Since $$\lim_{N\to+\infty}\norm{Y_n-Y}_p=0$$ up to taking a further subsequence (and relabeling again with $n$) we might suppose that there exists a $0\leq Z\in L^p$ with $\abs{Y_n}\leq Z\,\forall n\in\N$, $\P$-a.s.
	By Dominated Convergence Theorem, since $\lim_{n\to+\infty}f_n(Y_n)=f(Y)\,\P$-a.s., we get $$\lim_{n\to+\infty}\norm{f_n(Y_n)-f(Y)}_p=0.$$
	Now, since $\rho$ is real-valued, hence norm continuous by Extended Namioka-Klee Theorem, we deduce
	\begin{equation}
		\label{eq:provescont}
		\trho{f}{\Q}:=\rho(f(Y))=\lim_{n\to+\infty}\rho(f_n(Y_n))=:\lim_{n\rightarrow +\infty}\trho{f_n}{\Q_n}
	\end{equation}
	and the desired continuity follows. 
	
	For the second statement, the argument is very similar. Take a convergent sequence $(f_n,\Q_n)\rightarrow_n(f,\Q)$ in $\lipone \times \mathrm{Prob}^{\infty}_K(\R)$. Take any subsequence. We prove that it admits a further subsequence for which $\lim_{k\to+\infty}\trho{f_{n_k}}{\Q_{n_k}}=\trho{f}{\Q}$. Use Lemma \ref{lemmadelbaen} item \ref{item2} to obtain the sequence $(Y_n)_n$. Since $f_n\in \lipone\,\forall n\in\N$, we have $\sup_n\norm{f_n(Y_n)}_{\infty}\leq \sup_n\norm{Y_n}_{\infty}$ which is finite by Lemma \ref{lemmadelbaen} item \ref{item2}. Since $f_n$ converges to $f$ uniformly on compact intervals by definition, we deduce $\lim_{n\to+\infty} f_n(Y_n)=f(Y)\,\P$-a.s. Using the Lebesgue property we conclude that \eqref{eq:provescont} holds true, providing continuity.
\end{proof}

\begin{proof}[Proof of Theorem \ref{thm:main_conv}]Consider first the case $p\in[1,\infty)$. We prove something stronger, namely, that the thesis holds for every $(\Q_n)_n,\Q\subseteq\mathrm{Prob}^p(\R)$ such that $\lim_{n\to+\infty}W_p(\Q_n,\Q)=0$ instead of only for $(\hP_N)_N$ and $\P_X$. 
	First, observe that by Proposition \ref{prop:trhocont} the function $$(\varphi,\Q)\mapsto \trhoone{\varphi}{\Q}+\trhotwo{\ide-\varphi}{\Q}$$ is continuous on $\lipone\times \mathrm{Prob}^p(\R)$, and $\lipone$ is compact by Lemma \ref{lem:ascoliarze}. Berge's Theorem (see \citep{Aliprantis} Theorem 17.31) guarantees that  \begin{equation*}
		\trhoone{\cdot}{\Q}\square\trhotwo{\cdot}{\Q}(\ide)=\lim_{N\rightarrow +\infty}\trhoone{\cdot}{\Q_n}\square\trhotwo{\cdot}{\Q_n}(\ide)
	\end{equation*} 
	and that the correspondence $\Gamma:\mathrm{Prob}^p(\R)\rightrightarrows\lipone$ defined by $$\Gamma(\Q):=\argmin\left\{\trhoone{\varphi}{\Q}+\trhotwo{\ide-\varphi}{\Q}\ : \ \varphi\in\lipone\right\}$$
	is upper hemicontinuous. 
	Consider now the numerical sequence $(d_{\Q}(\hphi_n,\hphi))_n$. Take an arbitrary subsequence and relabeled it again by $n$. Using the upper hemicontinuity of $\Gamma$ (see \citep{Aliprantis} Theorem 17.20) and the convergence of $(\Q_n)_n$ to $\Q$, the sequence  $(\hphi_n\in\Gamma(\Q_n))_n\subseteq\lipone$  has a limit point in $\Gamma(\Q)$, that we call $\hphi_\infty$. Up to passing to a further subsequence and relabeling, we may assume that $(\hphi_n)_n$ converges to $\hphi_\infty$ with respect to the distance $d$. By definition of $\Gamma$, $\hphi_\infty$ induces an optimal allocation under $\Q$ and for $Y\in L^p(\Omega,\cF,\P)$ with $\P_Y=\Q$ we get, and using Proposition \ref{proptransferonR},
	$$\rho_1\square \rho_2(Y)=\rho_1(\hphi_\infty(Y))+\rho_2(Y-\hphi_\infty(Y)).$$ 
	Since $\rho_1$ is strictly convex, the minimizer is unique by Proposition \ref{prop:uniqueopt} (recall that we fixed $\hphi_\infty(0)=0$\footnote{{As a consequence of translation invariance, we can assume, without loss of generality, that  $\Q$ gives positive mass to every neighborhood of $0$. Then, given two optimal allocations $\varphi_1,\varphi_2\in\lipone$, Proposition \ref{prop:uniqueopt} implies that $\varphi_1-\varphi_2=c$ $\Q$-a.s., for some $c\in\R$. However, the definition of $\lipone$ necessarily implies $c=0$.}}). Thus, $\P(\hphi(Y)\neq\hphi_\infty(Y))=0$, or equivalently $0=\P_Y(\hphi\neq\hphi_\infty)=\Q(\hphi\neq\hphi_\infty)$. The latter $\Q$-a.s.\ equality property yields $d_\Q(\hphi_n,\hphi_\infty)=d_\Q(\hphi_n,\hphi)$. Note now that, by construction, $d_\Q\le d$. We conclude that  $$\limsup_{n\to+\infty}d_\Q(\hphi_n,\hphi)=\limsup_{n\to+\infty}d_\Q(\hphi_n,\hphi_\infty)\leq 
	\lim_{n\to+\infty}d(\hphi_n,\hphi_\infty)=0.$$ We have shown that starting from an arbitrary subsequence of $(d_{\Q}(\hphi_n,\hphi_\infty))_n$ there exists a further subsequence converging to $0$. This shows the desired property.
	For the case $p=+\infty$, the argument is exactly the same. Note only that for applying Proposition \ref{prop:trhocont} we need to require the Lebesgue continuity. 
 
 The claims in the statement follow now from Lemma \ref{LLNforlaws}.
\end{proof}

\paragraph{The case of spectral risk measures} The convergence \eqref{eq:llnrisksharing} can be established in the context of spectral risk measures by proving a stronger result.

\begin{definition}
	\label{def:spectral}
	Let  $p\in[1,\infty]$. A functional $\rho: L^p\rightarrow (-\infty,\infty]$ is called a \emph{spectral risk measure}  if 
	\begin{equation}
		\label{eq:spectral}
		\rho(X)=\int_{0}^1 V@R_\alpha(X)h(\alpha)\mathrm{d}\alpha,
	\end{equation}
	for some non-increasing function $h:[0,1]\to[0,\infty)$, called \emph{spectral density}, satisfying $\int_0^1 h(p)\mathrm{d}p=1$.
\end{definition}
We refer to \citep{Pichler13,Pichler13b} for a thorough analysis of the topic. In particular, the properties of $h$ ensure that $\rho$ is convex, monotone, and cash additive. Moreover, due to the properties of $ V@R$, $\rho$ is also law invariant and positive homogeneous. Whenever $\rho_1$ and $\rho_2$ are both finite spectral risk measures on $L^p(\Omega,\cF,\P)$, $p\in [1,+\infty]$ (which is the case for suitably integrable spectral densities as shown in \cite[Proposition 5 and Theorem 11]{Pichler13b}), Assumption \ref{ass:lawinv} is thus satisfied. For $p=+\infty$, $\rho_1$ and $\rho_2$ also satisfy the Lebesgue property if the spectral densities $h_1,h_2$ are such that $\rho_1$ and $\rho_2$ are well defined and finite on $L^r(\Omega,\cF,\P)$ for some $r$ big enough, by the Extended Namioka-Klee Theorem. This translates into an integrability requirement on $h_1,h_2$, see \cite[Proposition 5 and Theorem 11]{Pichler13b}, and holds true for example if $h_1,h_2$ are bounded themselves. In both cases, we are in the exact setup of Theorem \ref{thm:main_conv}. However, we can provide an explicit estimate for the convergence \eqref{eq:llnrisksharing}, as detailed below.

\begin{proposition}Let $p\in (1,\infty)$ and $\rho_1,\rho_2: L^p(\Omega,\cF,\P)\rightarrow\R$ be law-invariant spectral convex risk measures, with possibly different spectral densities $h_1, h_2\in L^{\frac{p}{p-1}}([0,1],\cB([0,1]),\mathrm{Leb})$. Then, 
	\begin{equation}
		\label{growthcontrol}
		\abs{\rho_1\square\rho_2(X)-\trhoone{\cdot}{\hP_N(\omega)}\square \trhotwo{\cdot}{\hP_N(\omega)}(\ide)}\leq \left[\norm{h_1}_{\frac{p}{p-1}}+\norm{h_2}_{\frac{p}{p-1}}\right]W_p(\hP_N(\omega),\P_X)\,\,\forall\,\omega\in\Omega.
	\end{equation}
	In particular, \eqref{eq:llnrisksharing} holds.
\end{proposition}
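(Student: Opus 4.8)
The plan is to reduce the claim to a Lipschitz estimate for each spectral risk measure with respect to the Wasserstein distance and then to propagate this estimate through the inf-convolution, uniformly over allocations.

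First, by Proposition \ref{proptransferonR} and Remark \ref{remtoQ}, both sides of \eqref{growthcontrol} can be written purely in terms of the reduced functionals on $\R$: we have $\rho_1\square\rho_2(X)=\trhoone{\cdot}{\P_X}\square\trhotwo{\cdot}{\P_X}(\ide)$, so it suffices to control the difference between the two inf-convolutions associated with the laws $\P_X$ and $\Q:=\hP_N(\omega)$ (the latter being finitely supported, hence in $\mathrm{Prob}^p(\R)$ for every $\omega$). Using the elementary inequality $\abs{\inf_f A(f)-\inf_f B(f)}\le\sup_f\abs{A(f)-B(f)}$ with the infimum taken over $f\in\lipone$, I would bound
\[
\big|\trhoone{\cdot}{\P_X}\square\trhotwo{\cdot}{\P_X}(\ide)-\trhoone{\cdot}{\Q}\square\trhotwo{\cdot}{\Q}(\ide)\big|\le \sup_{f\in\lipone}\Big(\big|\trhoone{f}{\P_X}-\trhoone{f}{\Q}\big|+\big|\trhotwo{\ide-f}{\P_X}-\trhotwo{\ide-f}{\Q}\big|\Big).
\]
Thus the problem decouples into controlling, uniformly in $f\in\lipone$, the sensitivity of each $\trho{\cdot}{\cdot}$ to a change of the underlying measure.

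The key ingredient, and the step I expect to require the most care, is the Lipschitz continuity of a spectral risk measure in the $\cW_p$ metric. Using the quantile (inverse c.d.f.) representation, a spectral risk measure with density $h$ satisfies $\rho(Y)=-\int_0^1 q_Y(\alpha)h(\alpha)\,\mathrm{d}\alpha$, where $q_Y$ is an appropriate quantile function of $Y$, so that $V@R_\alpha(Y)=-q_Y(\alpha)$. For two random variables $Y,Z$, the difference $\rho(Y)-\rho(Z)=-\int_0^1(q_Y-q_Z)h\,\mathrm{d}\alpha$ is estimated by H\"older's inequality with conjugate exponents $p$ and $\tfrac{p}{p-1}$, giving $\abs{\rho(Y)-\rho(Z)}\le \norm{h}_{\frac{p}{p-1}}\,\norm{q_Y-q_Z}_p$. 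Invoking the classical one-dimensional identity $\cW_p(\P_Y,\P_Z)=\norm{q_Y-q_Z}_{L^p(0,1)}$ (optimality of the comonotone coupling, see \citep{Villani09}), I obtain $\abs{\rho(Y)-\rho(Z)}\le\norm{h}_{\frac{p}{p-1}}\,\cW_p(\P_Y,\P_Z)$. This is precisely where the hypothesis $h\in L^{\frac{p}{p-1}}$ enters; the care here lies in matching the sign/quantile conventions of $V@R$ and in checking that the representation is valid for the finite spectral measures at hand (different versions of the quantile agree Lebesgue-a.e., so neither the integral representation nor the $L^p$-distance is affected).

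Finally I would transport this estimate through the pushforwards. Since $\trho{f}{\Q}$ depends only on the law of $f$ under $\Q$ by law-invariance, applying the previous bound with $\rho=\rho_i$ yields $\abs{\trho{f}{\P_X}-\trho{f}{\Q}}\le\norm{h_i}_{\frac{p}{p-1}}\,\cW_p(f_{\#}\P_X,f_{\#}\Q)$, where $f_{\#}$ denotes pushforward. Because any $f\in\lipone$ and $\ide-f$ satisfy $\normLip{f}\le 1$ and $\normLip{\ide-f}\le 1$, pushing forward contracts the Wasserstein distance, i.e.\ $\cW_p(f_{\#}\P_X,f_{\#}\Q)\le \normLip{f}\,\cW_p(\P_X,\Q)\le \cW_p(\P_X,\Q)$ (take an optimal coupling and apply the Lipschitz bound inside the expectation), and likewise for $\ide-f$ with $\rho_2$. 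Substituting these two bounds into the supremum above, the dependence on $f$ disappears and I arrive at \eqref{growthcontrol} with constant $\norm{h_1}_{\frac{p}{p-1}}+\norm{h_2}_{\frac{p}{p-1}}$, valid for every $\omega\in\Omega$. The ``in particular'' assertion then follows at once, since $\cW_p(\hP_N(\omega),\P_X)\to 0$ $\P$-a.s.\ by Lemma \ref{LLNforlaws}, forcing the right-hand side to vanish and hence establishing \eqref{eq:llnrisksharing}.
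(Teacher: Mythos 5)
Your proof is correct and mirrors the paper's argument: both reduce the two inf-convolutions to infima over $\lipone$ (Proposition \ref{proptransferonR}, with Remark \ref{remtoQ} handling the empirical measure $\hP_N(\omega)$ for fixed $\omega$), compare the two objectives uniformly in $f\in\lipone$ using that composition with a $1$-Lipschitz map contracts $\cW_p$, and then deduce \eqref{eq:llnrisksharing} from Lemma \ref{LLNforlaws}. The only divergence is that the paper obtains the key sensitivity bound $\abs{\trho{f}{\P_X}-\trho{f}{\Q}}\leq\norm{h}_{\frac{p}{p-1}}\cW_p(\P_X,\Q)$ by citing \cite[Corollary 11]{Pichler13}, whereas you rederive it from the quantile representation of spectral risk measures, H\"older's inequality with exponents $p$ and $\tfrac{p}{p-1}$, and the one-dimensional comonotone-coupling identity $\cW_p(\mu,\nu)=\norm{q_\mu-q_\nu}_{L^p(0,1)}$ --- a self-contained substitute for the cited lemma that changes nothing structural.
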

\begin{proof}
	We see that, fixing  $\omega \in \Omega$ and taking $\hP_N(\omega)$ as a (deterministic) measure in $\mathrm{Prob}^p(\R)$, we also have $\P_X\in \mathrm{Prob}^p(\R)$ since $X\in L^p(\Omega, \cF,\P)$, and 
	\begin{align*}
		\rho_1\square\rho_2(X)&\stackrel{\text{Prop.} \ref{proptransferonR}}{=}\inf\bigg\{\trhoone{f}{\P_X}\square \trhotwo{\ide-f}{\P_X}\ :\ f\in \lipone\bigg\}\\
		&\leq\inf\bigg\{\trhoone{f}{\hP_N(\omega)}+ \trhotwo{\ide-f}{\hP_N}\ :\ f\in \lipone\bigg\}\\
		&+ W_p(\P_X,\hP_N)\norm{h_1}_{\frac{p}{p-1}}+ W_p(\P_X,\hP_N)\norm{h_2}_{\frac{p}{p-1}}\\
		&\stackrel{\text{Prop.} \ref{proptransferonR}}{=}\trhoone{\cdot}{\hP_N}\square \trhotwo{\cdot}{\hP_N(\omega)}(\ide)+\left[\norm{h_1}_{\frac{p}{p-1}}+\norm{h_2}_{\frac{p}{p-1}}\right]W_p(\P_X,\hP_N(\omega))
	\end{align*}
 where the inequality follows from \cite[Corollary 11]{Pichler13}.
	Interchanging the roles of $\P_X,\hP_N(\omega)$, we get 
	$$\trhoone{\cdot}{\hP_N(\omega)}\square \trhotwo{\cdot}{\hP_N(\omega)}(\ide)\leq \rho_1\square\rho_2(X)+\left[\norm{h_1}_{\frac{p}{p-1}}+\norm{h_2}_{\frac{p}{p-1}}\right]W_p(\hP_N(\omega),\P_X)$$
	so that \eqref{growthcontrol} holds.
	By Lemma \ref{LLNforlaws} there exists $E\in\cF$ with $\P(E)=0$ such that $\lim_{N\rightarrow+\infty}W_p(\P_X,\hP_N(\omega))=0$ for all $\omega\in \Omega\setminus E$. Thus, by \eqref{growthcontrol}, we have  $$\rho_1\square \rho_2(X)=\lim_{N\rightarrow +\infty}\trhoone{\cdot}{\hP_N}\square\trhotwo{\cdot}{\hP_N}(\ide)\,\,\forall\omega\in \Omega\setminus E\,.$$
	
\end{proof}

\section{Numerical experiments}\label{sec:numerics}
In this section, we illustrate the results of a number of numerical experiments that showcase the usefulness of the approximation developed in Section \ref{sec:framework}. We first test our findings in the case of entropic risk measures and expected shortfall, where simple explicit formulas for the optimal allocations and for the value of the inf-convolutions are known. We then consider the more complex case of distortion risk measures and, to conclude, we treat the case of heterogeneous agents adopting risk measures in two different classes, i.e. entropic and distortion-type.

\subsection{Description of the framework}\label{subs:algo}
We model two agents with reference risk measures $\rho_1$ and $\rho_2$ as those in the introduction. For the sake of comparison, we consider a financial position $X\in L^\infty(\Omega,\cF,\P)$, as it belongs to the domain of each of those risk measures. The objective is to approximate optimal allocations for the inf-convolution of $\rho_1$ and $\rho_2$, which takes the form 
\[\rho_1\square \rho_2(X)=\inf\bigg\{\trhoone{f}{\P_X}+ \trhotwo{\ide-f}{\P_X}\ :\ f\in \lipone\bigg\}.\]

To model the functions $f$ and $\ide-f$, we use two Fully Connected Deep Neural Networks (DNNs) $\phi_1$ and $\phi_2$, respectively. We observe that, while $\phi_1$ and  $\phi_2$ explicitly parametrize $f$ and $\ide-f$ by design, the functions $\ide-\phi_1$ and $\ide-\phi_2$ are proxies for $\ide-f$ and $f$ respectively.

Let $\Tilde{X} = \left( X_1, \cdots, \ X_N \right)$ be a sample of $\P_X$ of size $N$ with $\hP_N$ its empirical measure. For $i=1,2$, we denote by $\widehat{\rho_i}$ the historical risk measures associated to $\rho_1$ and $\rho_2$, namely, $\widehat{\rho_i}(\cdot):=\tilde{\rho}_i(\cdot\mid\hP_N)$. Since we aim at finding allocations that realize an inf-convolution value as close as possible to $\rho_1\square \rho_2({X})$, we need to find an appropriate and robust estimate of such a quantity. While we could  use the explicit parametrizations $\phi_1$ of $f$ and $\phi_2$ of $\ide - f$ and minimize $\hat{\rho_1}(\phi_1(X))+ \hat{\rho_2}(\phi_2(X))$, another valid alternative is to use the implicit parametrizations and minimize $\hat{\rho_1}(\ide - \phi_2(X))+ \hat{\rho_2}(\ide-\phi_1(X))$. To have a more robust estimate, we use their arithmetic average which leads to the following loss function
\begin{equation}\label{eq:loss}
L_{\rho_1,\rho_2}(\Tilde{X}) = \frac{\widehat{\rho_1}(\phi_1(\Tilde{X})) + \widehat{\rho_2}(\phi_2(\Tilde{X})) + \widehat{\rho_1}(\Tilde{X}-\phi_2(\Tilde{X})) + \widehat{\rho_2}(\Tilde{X}-\phi_1(\Tilde{X}))}{2}
\end{equation}
and to the following parameterizations of $f$ and $\ide - f$
$$
{f}_1(x):=  \frac{\phi_1(x) + x - \phi_2(x)}{2}, \quad  {f}_2(x):=  \frac{\phi_2(x) + x - \phi_1(x)}{2}=\ide(x)-f_1(x).
$$
Not only these are more robust estimates, but as they sum up to the identity, they provide acceptable allocations by construction.
Theorem \ref{thm:main_conv} guarantees the convergence of the induced optimal allocations for the risk-sharing problem. 

To train the neural networks and obtain the estimators of the theoretical optimum $(\Phi_1,\Phi_2):=(\hphi,\ide-\hphi)\in \lipone\times\lipone$, we minimize \eqref{eq:loss} with the optimizer Adam \citep{kingma2014adam}. The precise choices of the learning rate, batch size, the number of training epochs, and other implementation details are reported in the Appendix.
To ensure a robust framework, we train the DNNs multiple times and use their average as the final estimate. To be more explicit, we train ${f}_1$ and ${f}_2$ for $n$ times each starting from a different initialization. We obtain $n$ couples of neural networks $( {f}^k_1 , {f}^k_2)$ and use their arithmetic averages
\[\hat{\Phi}_1(\cdot) := \frac1n\sum_{k=1}^n{f}^k_1(\cdot)\,,\quad \hat{\Phi}_2(\cdot) := \frac1n\sum_{k=1}^n{f}^k_2(\cdot) \]
 to estimate $\Phi_1$ and $\Phi_2$, respectively. In all our experiments we chose $n=3$.

 To have a flexible framework, we allow the networks to have three types of different activation functions:
 \[\sigma(x)=\text{Tanh}(x)\,,\quad \sigma(x)=\text{ReLu}(x)\,,\quad\, \sigma(x) =x.\]
While the non-linear activation functions Tanh and ReLu are standard choices in deep learning, the reason for including the linear one will be apparent below. We incidentally note that, since $X$ is bounded, we have no issues in allowing for unbounded activation functions. We report a review of possible methodological and architectural enhancements in Appendix \ref{sub:PossEnh}.

 To verify the stability and reliability of our framework, we test our results with three different distributions
\begin{enumerate}
\item $X\sim\mathcal{U}[-1,1]$, where $\mathcal{U}$ is the uniform distribution;
\item $X\sim\mathcal{N}(0,1)$, where $\mathcal{N}$ is the normal distribution;
\item $X\sim-Beta(2,5)$, where $Beta$ is the $Beta$ distribution.
\end{enumerate}
The uniform distribution is the most basic example and it provides an easy setup to test our framework. A more interesting example is the normal distribution because of its well-known financial relevance. We note that in our experiments we restricted to $[-3,3]$ in order to have a distribution with bounded support. Finally, the $Beta$ distribution presents skewness and rare events, modeling the scarcity of data for extreme losses. We chose the opposite of a $Beta$ distribution in order to represent the financially more relevant case of pure losses.

\subsection{Initial tests: entropic and expected shortfall case}

To begin with, we test our framework in the well-known cases of entropic risk measures and expected shortfalls, for which explicit formulas are known. We start by recalling that, as in Examples 2.8 and 2.9 in \citep{FilipovicSvindland08},
\begin{equation}\label{eq:inf_conv_ES_entr}
	\entr_{\alpha}\square \entr_{\beta}(X)=\entr_{\alpha+\beta}(X),\qquad \ES_{\alpha}\square \ES_{\beta}(X)=\ES_{\alpha\vee\beta}(X).
\end{equation} This means that we can directly compute the theoretical value of the inf-convolution and compare it with the value $L_{\rho_1,\rho_2}(\Tilde{X})$ obtained by the DNNs. Additionally, we can calculate the $L^2$ error (under $\P_X$) between the estimated $\hat{\Phi}_1$ and $\hat{\Phi}_2$ and the theoretical ones.
As we show below, we found that the values of the inf-convolutions achieved by all our trained networks converge to the theoretical values and that  $\hat{\Phi}_i$ approximates ${\Phi}_i$ up to a negligible error, for $i=1,2$.
\begin{itemize} 
    \item For the entropic case, we chose $\rho_1(X)=\entr_{2}(X)$ and $\rho_2(X)=\entr_{3}(X)$ which yield the optimal allocation $\Phi_1(x)=\frac{2}{2 + 3}x$ and $\Phi_2(x)=\frac{3}{2 + 3}x$;
    \item For the $\ES$ case, we chose $\rho_1(X)=\ES_{0.8}(X)$ and $\rho_2(X)=\ES_{0.7}(X)$. It is clear from \eqref{eq:inf_conv_ES_entr} that $\Phi_1(x) =x$ and $\Phi_2(x) =0$ is an optimal allocation.
\end{itemize}

We start by discussing the entropic case. Figure \ref{fig:Phi_Ent_Norm} shows the comparison between the theoretical optimal allocations and the average predicted $\hat{\Phi}_1$ and $\hat{\Phi}_2$ for the normal distribution case and for the three activation functions. Every trained DNN seems to match perfectly the theoretical allocations. Indeed, we point out that the average predicted allocations in Figure \ref{fig:Phi_Ent_Norm} are plotted with their respective $\pm 3$ standard deviation bands across the $n$ networks. In particular, for this case, we notice that the uncertainty bands are invisible as they are almost null.

\begin{figure}
	\begin{center}
		\begin{minipage}{160mm}
			\subfigure{
				\resizebox*{8cm}{!}{\includegraphics{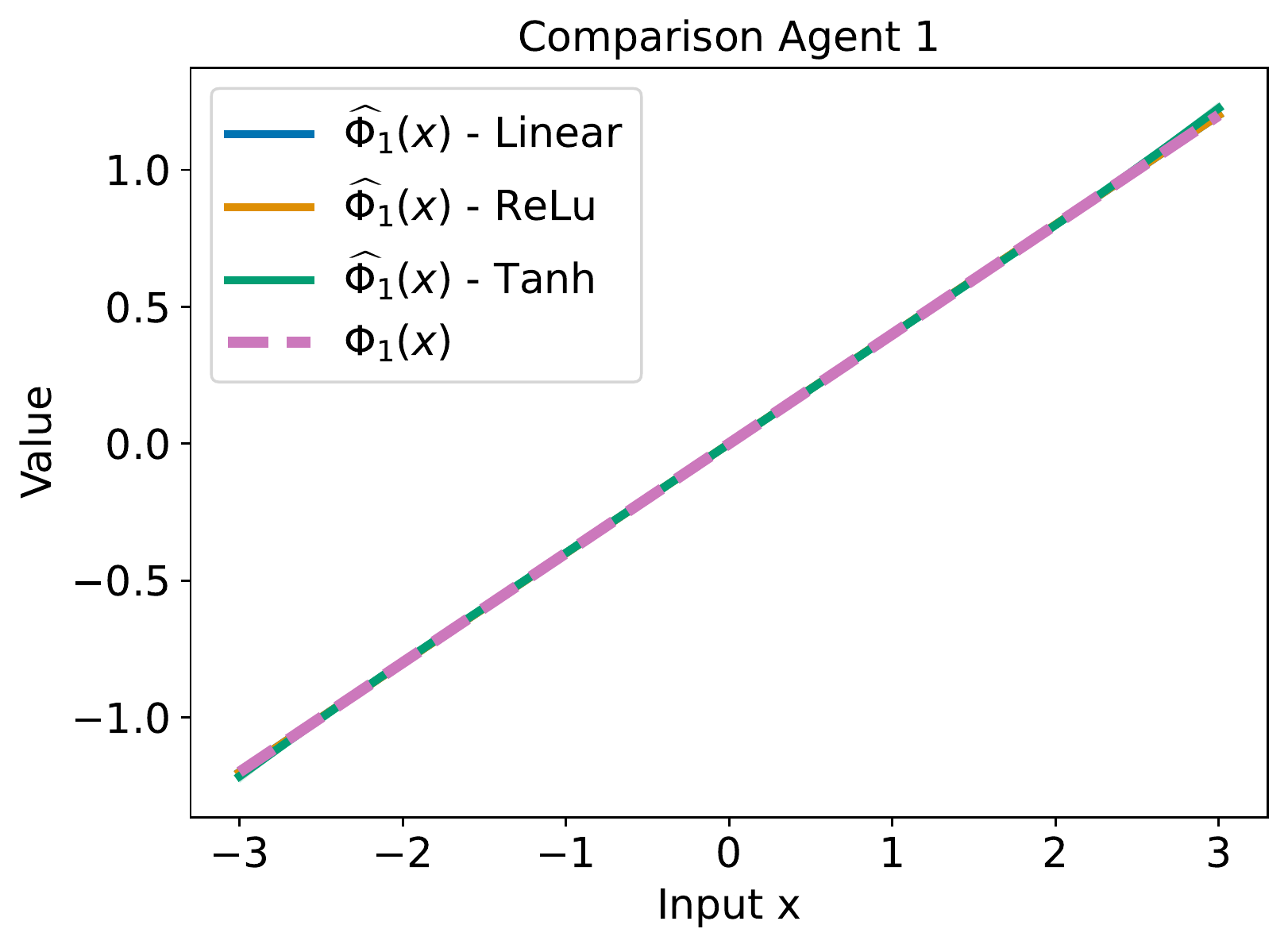}}}
			\subfigure{
				\resizebox*{8cm}{!}{\includegraphics{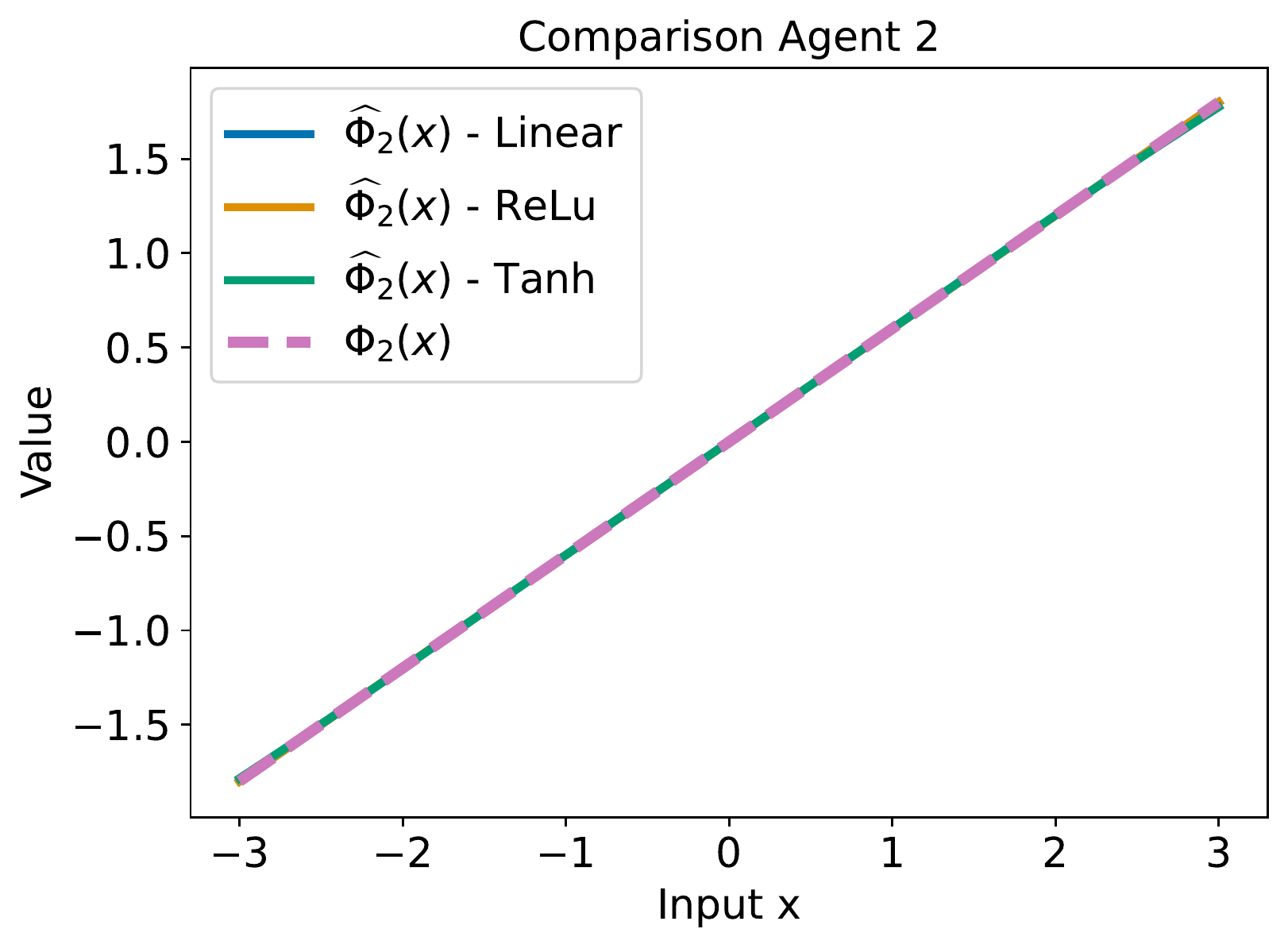}}}
			\caption{The entropic case - Normal distribution - Comparison of predicted vs. theoretical allocations.  We train all models over 3 trials and plot the average Predicted Allocation along with the $\pm 3$ standard deviations.
				\label{fig:Phi_Ent_Norm}}
		\end{minipage}
	\end{center}
\end{figure}
In Figure \ref{fig:Loss_Std_Ent_Norm_a}, we show the comparison between the average loss functions, along with the respective $\pm 3$ standard deviation shaded band, and the theoretical infimum calculated using \eqref{eq:inf_conv_ES_entr}. All three types of NNs achieve a loss that is close to the theoretical value of the inf-convolution, up to a negligible error. In Figure \ref{fig:Loss_Std_Ent_Norm_b}, for each model, we plot the standard deviation of the loss function \eqref{eq:loss}. Since the variance of the three losses is decreasing, we are observing a stable convergence.
\begin{figure}
	\begin{center}
		\begin{minipage}{160mm}
			\subfigure[Average training loss along with $\pm 3$ standard deviation.]{
				\resizebox*{8cm}{!}{\includegraphics{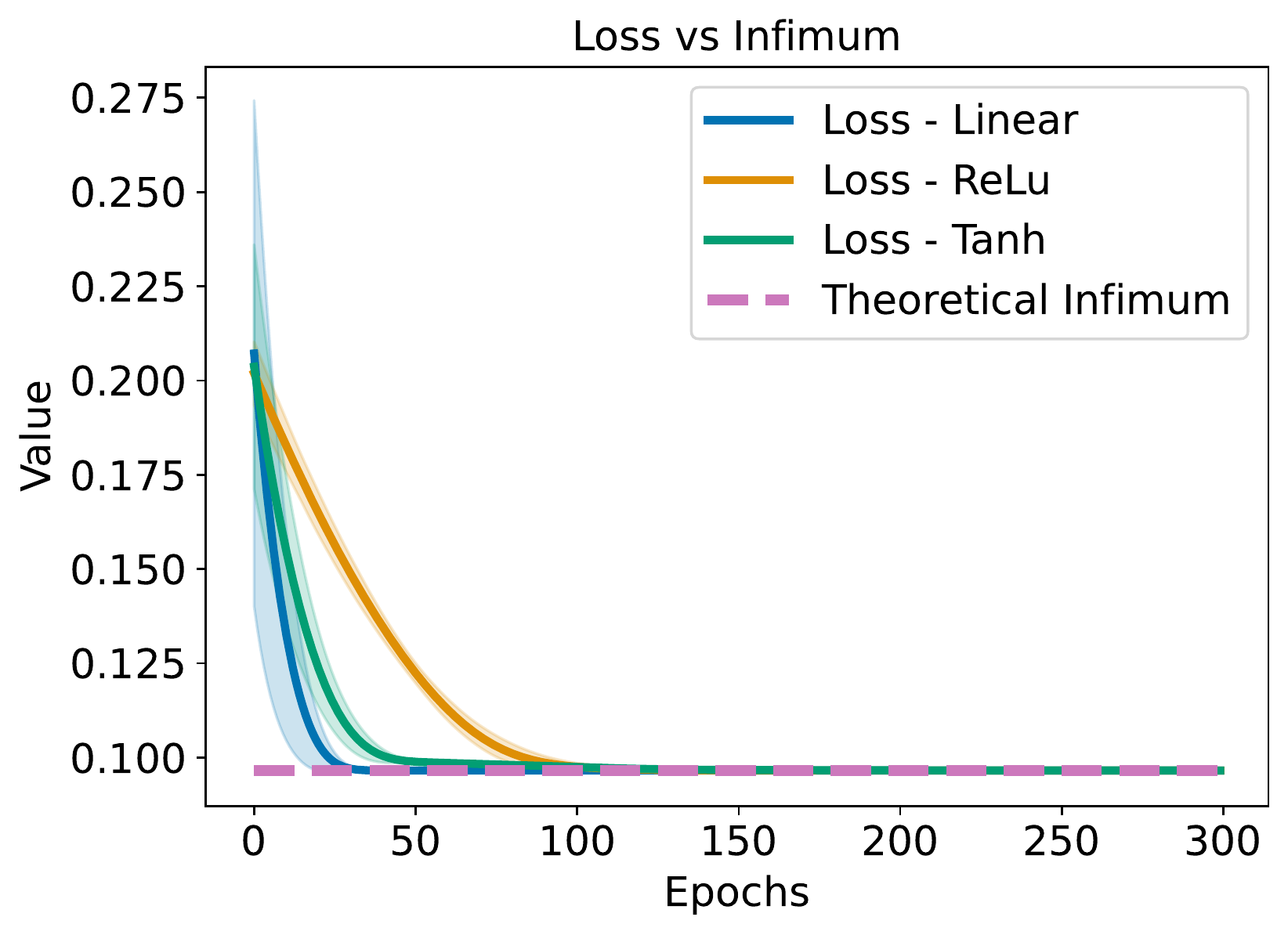}}\label{fig:Loss_Std_Ent_Norm_a}}
			\subfigure[Standard deviation of the loss.]{
				\resizebox*{8cm}{!}{\includegraphics{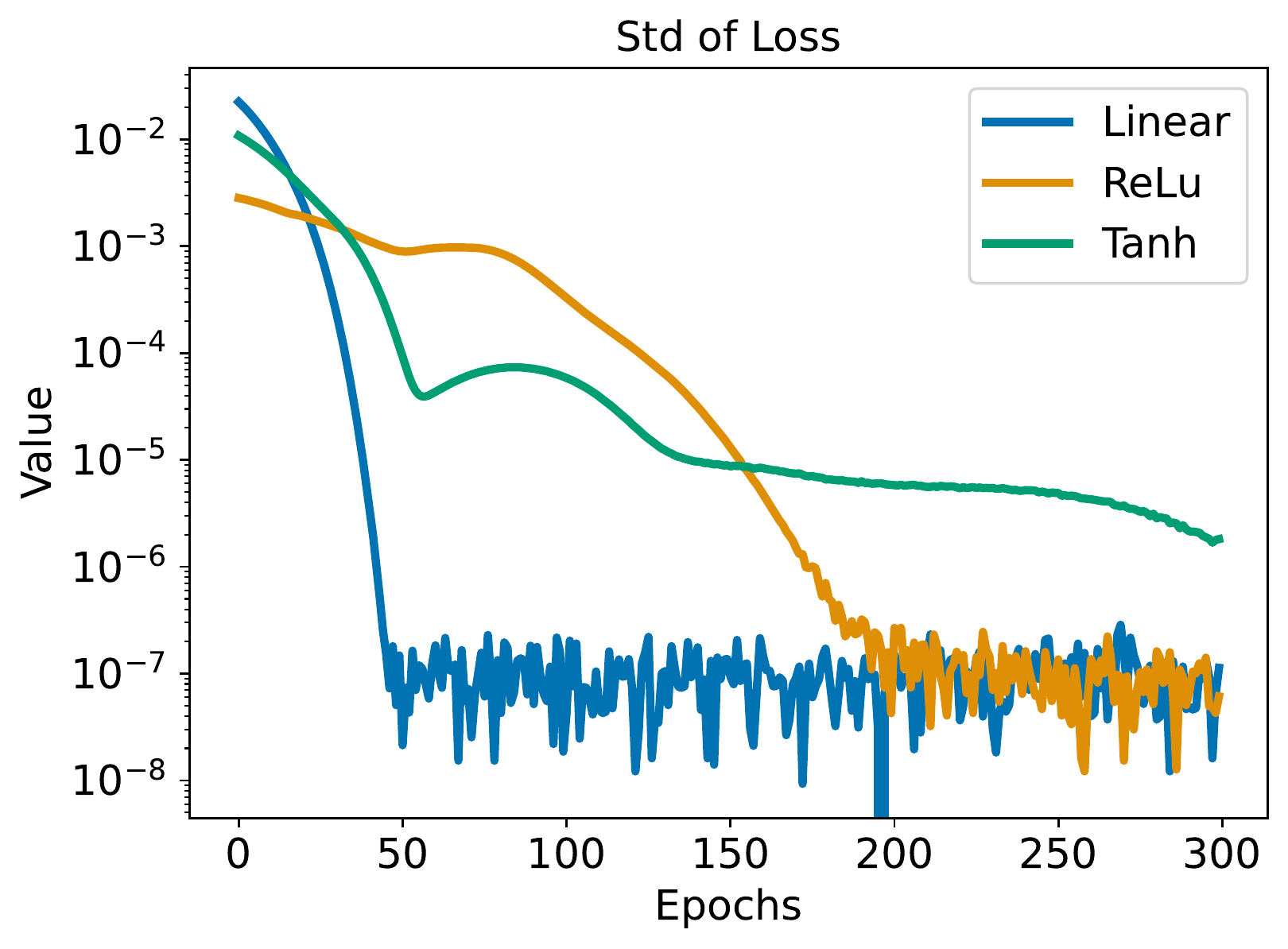}}\label{fig:Loss_Std_Ent_Norm_b}}
			\caption{The entropic case - Normal distribution - Convergence Analysis. 
				\label{fig:Loss_Std_Ent_Norm}}
		\end{minipage}
	\end{center}
\end{figure}

Table \ref{table:Entr_Normal} collects the data regarding the errors for the experiment with the standard normal distribution. We computed the average relative error with respect to the theoretical infimum, together with its standard deviation, and the $L^2$ error of $\hat{\Phi}_1$ with respect to the theoretical $\Phi_1$. Table \ref{table:Entr_full} reports the same figures also for the cases of uniform and $Beta$ distributions. We observe that the errors are all close to zero, meaning that all our NNs reached convergence and they exhibited low uncertainty, which is an indication of stable learning. Observe that for the entropic case (as well as for $\ES$) the optimal allocation is a linear function, therefore, it belongs to the span of the linear-activated DNN. We thus expect the linear activation to achieve the best performance. As we can appreciate in Table \ref{table:Entr_Normal}, this result is confirmed by our experiments. Additionally, we notice that also ReLu and Tanh are providing satisfactory performances.

\begin{table}
	\begin{center}
		\begin{tabular}{ |c|c|c|c| } 
\multicolumn{4}{c}{Entropic case - $\mathcal{N}(0,1)$  - Infimum = $0.09656$ }  \\
   \hline
 & Avg. Rel. Error & Std. Rel. Error & Avg. $L^{2}$ Error $\widehat{\Phi}_1$ \\
 \hline
Linear & $\% 1.800 \cdot 10^{-5} $ & $\% 1.172 \cdot 10^{-5}$ & $\mathbf{1.788 \cdot 10^{-7}}$\\
ReLu & $\% \mathbf{5.143 \cdot 10^{-6}}$ & $\% 6.341 \cdot 10^{-5}$ & $4.096 \cdot 10^{-5}$ \\
Tanh & $\% 5.955\cdot 10^{-6}$ & $\% \mathbf{1.814 \cdot 10^{-6}}$ & $5.813 \cdot 10^{-5}$ \\
\hline
\end{tabular}
\caption{Average relative errors between the losses and the theoretical infimum, their standard deviation, and the Average $L^{2}$ Error between $\widehat{\Phi}_1$ and $\Phi_1$.}
\label{table:Entr_Normal}
\end{center}
\end{table}

\begin{figure}
	\begin{center}
		\begin{minipage}{160mm}
			\subfigure{
				\resizebox*{8cm}{!}{\includegraphics{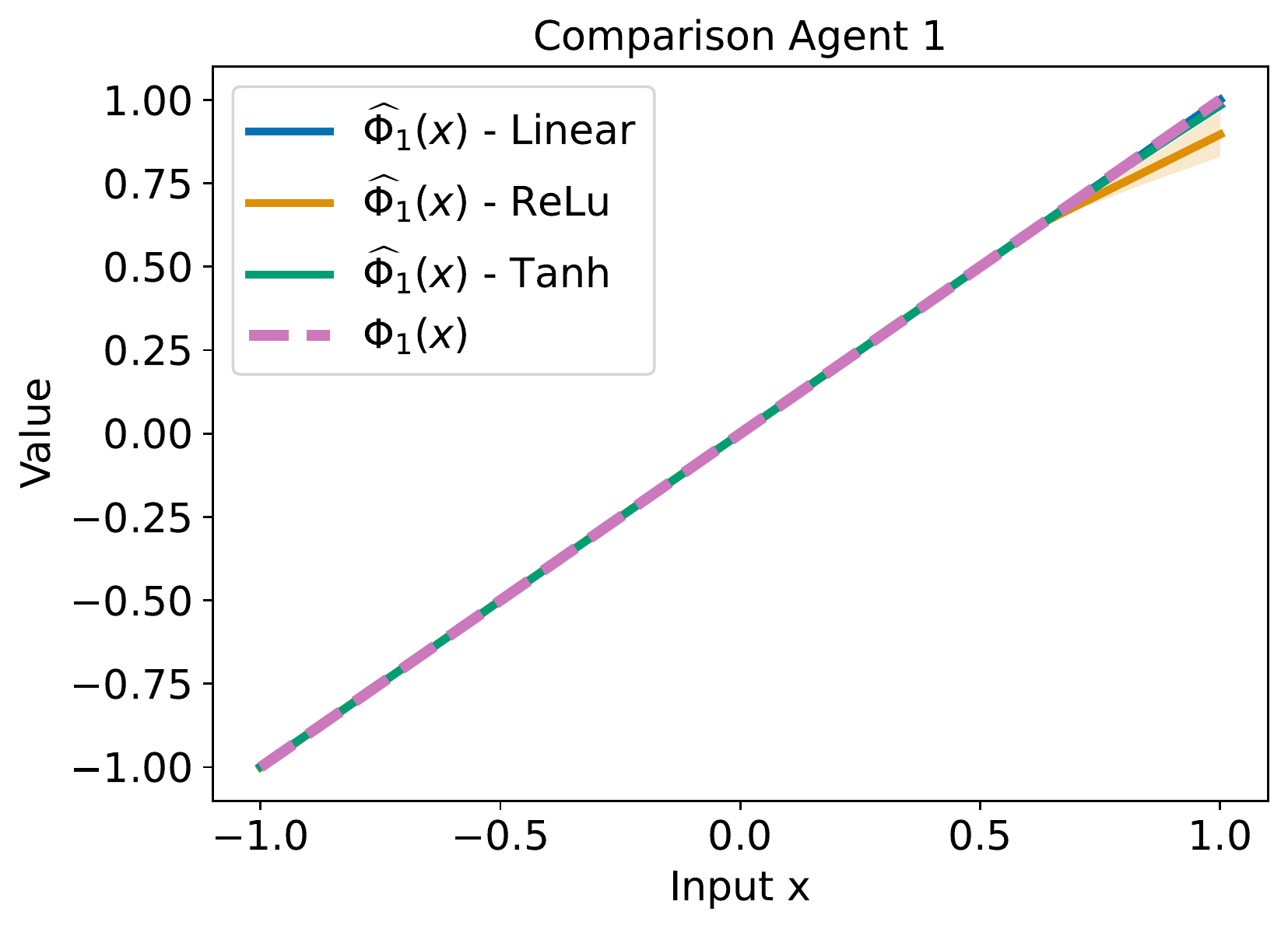}}}
			\subfigure{
				\resizebox*{8cm}{!}{\includegraphics{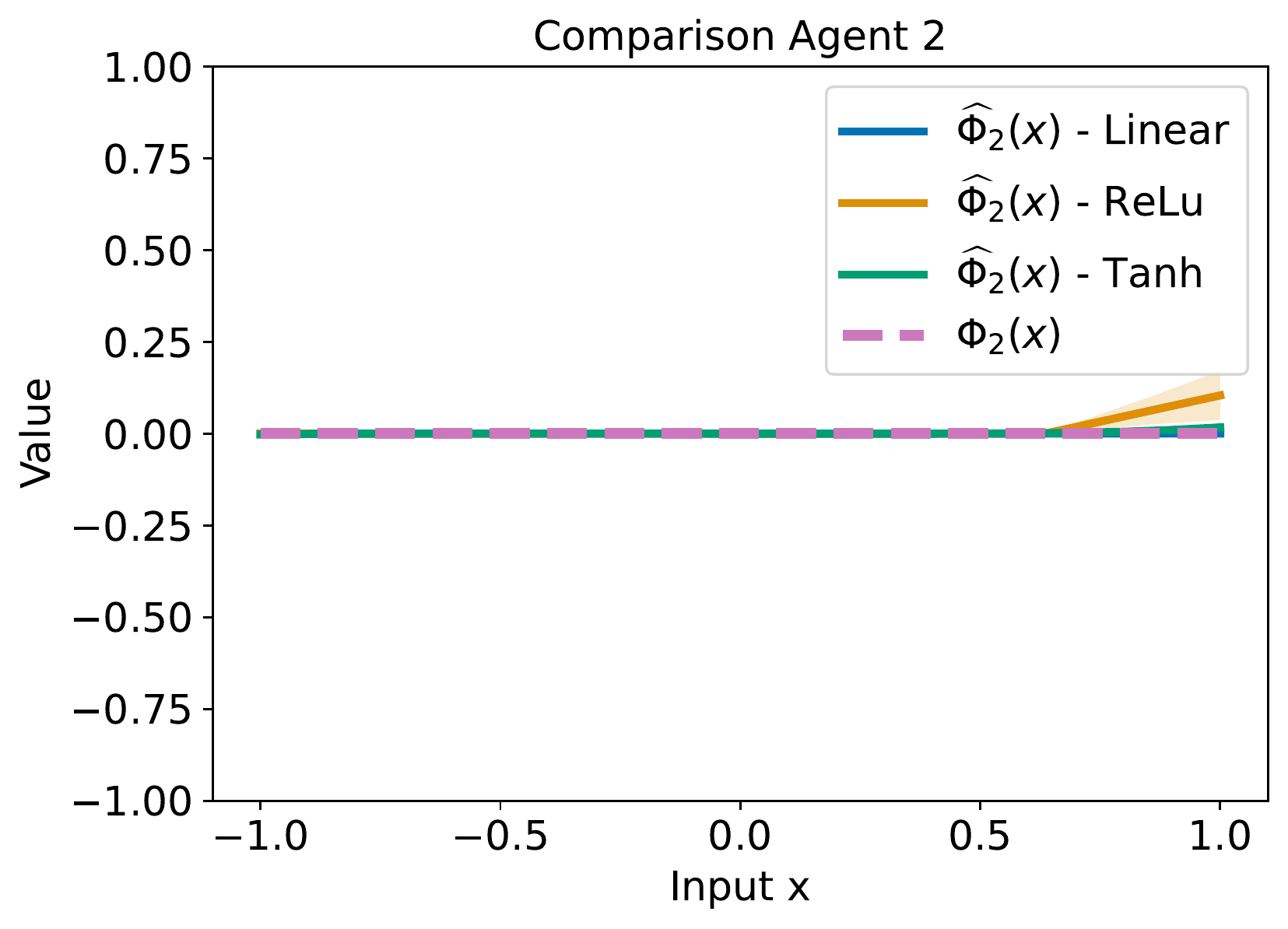}}}
			\caption{Expected Shortfall - Uniform distribution - Comparison of predicted vs. theoretical allocations.  We train all models over 3 trials and plot the average Predicted Allocation along with the $\pm 3$ standard deviations.
				\label{fig:Phi_ES_Unif}}
		\end{minipage}
	\end{center}
\end{figure}

\begin{figure}
	\begin{center}
		\begin{minipage}{160mm}
			\subfigure[Average training loss along with $\pm 3$ standard deviation.]{
				\resizebox*{8cm}{!}{\includegraphics{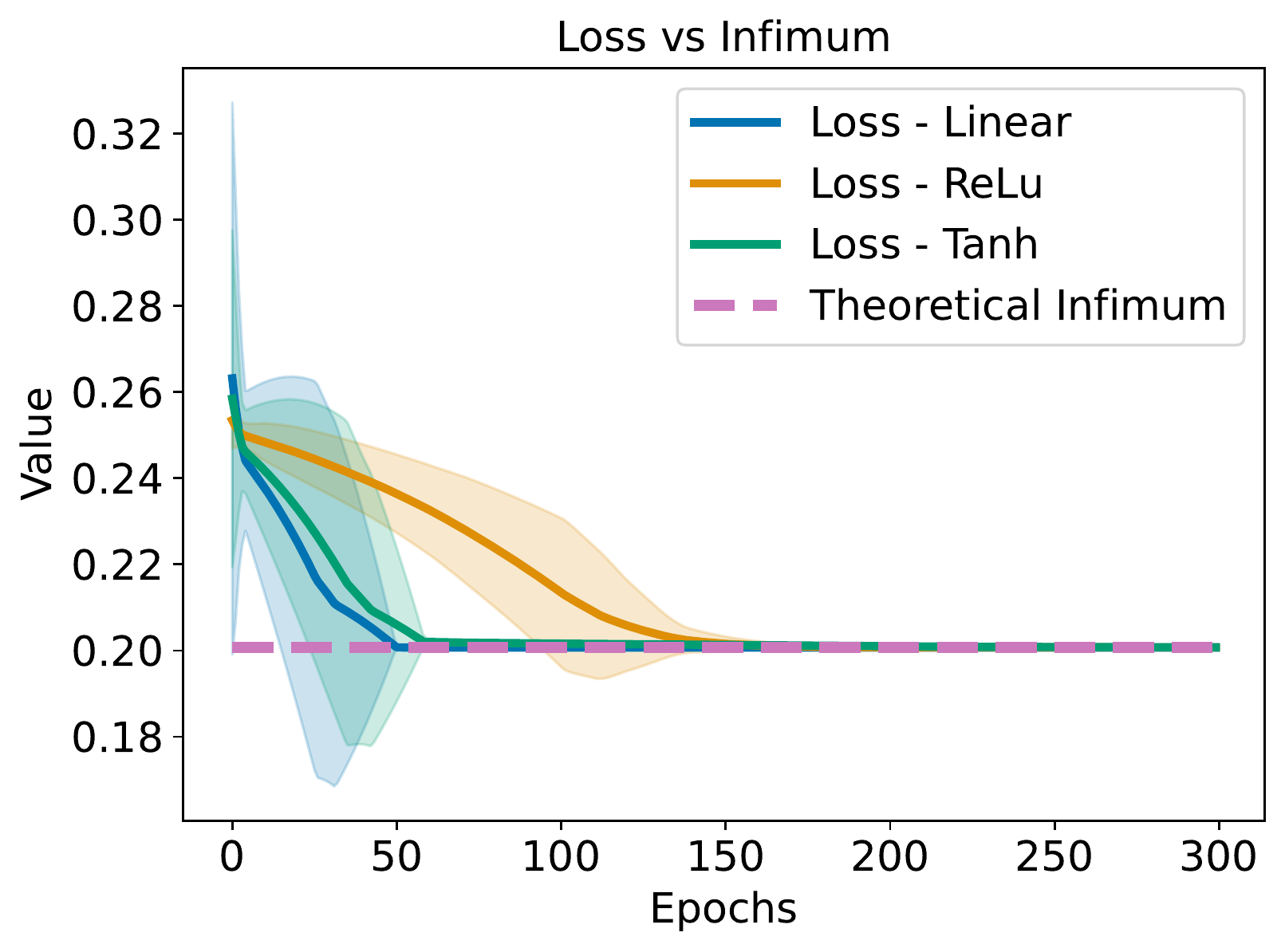}}\label{fig:Loss_Std_ES_Unif_a}}
			\subfigure[Standard deviation of the loss.]{
				\resizebox*{8cm}{!}{\includegraphics{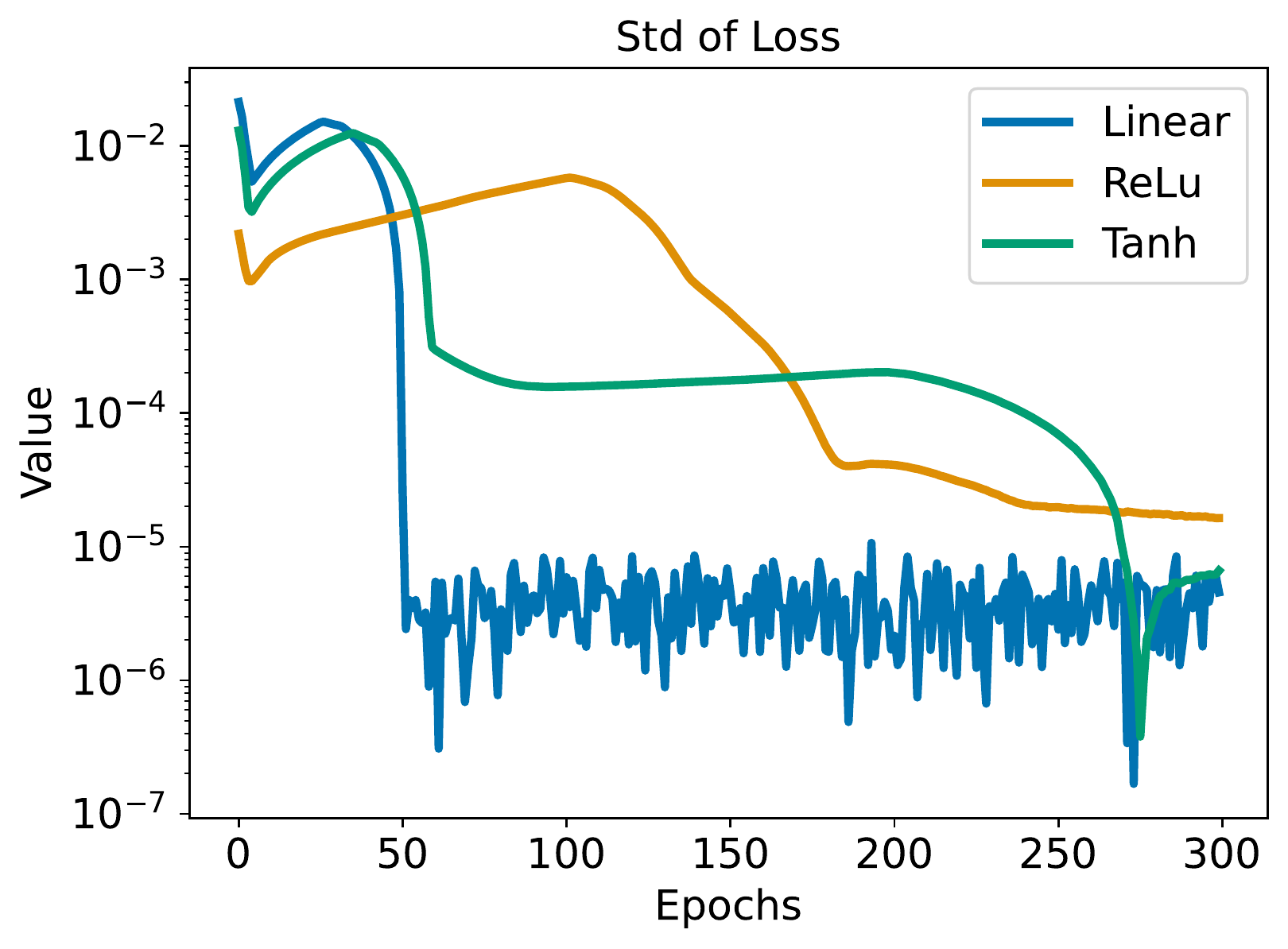}}\label{fig:Loss_Std_ES_Unif_b}}
			\caption{Expected Shortfall - Uniform distribution -  Convergence Analysis. 
				\label{fig:Loss_Std_ES_Unif}}
		\end{minipage}
	\end{center}
\end{figure}

Similar considerations apply to the case of $\ES$ and we obtain qualitatively and quantitatively the same results. As an example, in Figure \ref{fig:Phi_ES_Unif} we show the results for the uniform distribution case. Regarding the convergence analysis, we observe in Figure \ref{fig:Loss_Std_ES_Unif} that all three types of NNs achieve a loss that is only marginally distant from the theoretical value of the inf-convolution. Additionally, we notice that the convergence to such a value takes place with decreasing variance of the losses, indicating a stable convergence.
In Figure \ref{fig:Phi_ES_Unif} we present the comparison between the average predicted $\hat{\Phi}_1$ and $\hat{\Phi}_2$, along with their respective $\pm 3$ standard deviation shaded band, and the theoretical optimal allocations for the uniform distribution case. A consideration, which is specific to the $\ES$ case, is now due. The optimal allocation is of the form $\Phi_1(x) =x$ and $\Phi_2(x) =0$, and the DNN needs to learn the constant function in the latter case. Consistently with the known fact that using nonlinear functions for the (unsupervised) learning of constant functions is a challenging task, we find that ReLu and Tanh underperform with respect to the linear DNN.

In Table \ref{table:ES_Uniform}, we finally present the average relative error with respect to the theoretical infimum, its standard deviation, and $L^2$ error of $\hat{\Phi}_1$ with respect to $\Phi_1$. Table \ref{table:ES_full} reports the same figures also for the case of normal and $Beta$ distributions.

\begin{table}
	\begin{center}
		\begin{tabular}{ |c|c|c|c| } 
\multicolumn{4}{c}{Expected Shortfall  - $\mathcal{U}[-1,1])$  - Infimum = $0.2006$ }  \\
   \hline
 & Avg. Rel. Error & Std. Rel. Error & Avg. $L^{2}$ Error $\widehat{\Phi}_1$ \\
 \hline
Linear & $\mathbf{\% 2.846 \cdot 10^{-4}} $ & $\mathbf{\% 2.270 \cdot 10^{-3}}$ & $\mathbf{9.701 \cdot 10^{-6}}$\\
ReLu & $\% 1.379 \cdot 10^{-2} $ & $\% 8.171 \cdot 10^{-3}$ & $1.817 \cdot 10^{-2}$ \\
Tanh & $\% 2.742 \cdot 10^{-2}$ & $\% 3.301 \cdot 10^{-3}$ & $2.521 \cdot 10^{-3}$ \\
\hline
\end{tabular}
\caption{Average relative errors between the losses and the theoretical infimum, their standard deviation, and the Average $L^{2}$ Error between $\widehat{\Phi}_1$ and $\Phi_1$.}
\label{table:ES_Uniform}
\end{center}
\end{table}

\subsection{Convolution of distortion risk measures}
\label{secsemiexpl}
We here consider the case where both $\rho_1$ and $\rho_2$ are distortion risk measures, as in \eqref{def:distRM}, with respect to some discrete probabilities $\mu_1$, $\mu_2$. Let $N_1$, $N_2$ be two given integers, and consider the risk measures \[\rho_1(X) :=\sum_{j=1}^{N_1}\mu_{1j} \ES_{\alpha_{1j}} (X),\qquad \rho_2(X) :=\sum_{j=1}^{N_2}\mu_{2j} \ES_{\alpha_{2j}} (X),\] where $\mu_{ij}>0$ with $\sum_{j=1}^{N_i}\mu_{ij}=1$ and  $0<\alpha_{ij}<1$ for $j=1,\ldots,N_i$ and for $i=1,2$.
Some semi-explicit expressions of the optimal allocations are known for this case, in particular, an optimal allocation can be found as a linear combination of  ReLu functions, possibly composed with translation maps --- see Example 3.1 in \citep{JST07} and also Appendix A of \citep{ELW18} for a more general case\footnote{We thank an anonymous referee for pointing out this fact.}. Hence, we expect the ReLu-activated DNN to achieve the best performance. 

Differently from the entropic and $\ES$ cases, the problem has a non-linear solution and we expect the linear-activated DNN to perform poorly. Nevertheless, for the sake of consistency in our tests, we included the linear activation in all experiments. As an example, we chose
$$\rho_1(X)= 0.5\ES_{0.8}(X) + 0.5\ES_{0.7}(X), \quad \rho_2(X)= 0.7\ES_{0.9}(X) + 0.3\ES_{0.5}(X).$$
Figure \ref{fig:Phi_ESES_Beta} shows the average predicted $\hat{\Phi}_1$ and $\hat{\Phi}_2$ for the case of the $Beta$ distribution and for the three activation functions. As we can observe, the DNNs trained with non-linear activation functions agree on the shape of the solution, whereas the linear-activated one is clearly different. 

\begin{figure}
	\begin{center}
		\begin{minipage}{160mm}
			\subfigure{
				\resizebox*{8cm}{!}{\includegraphics{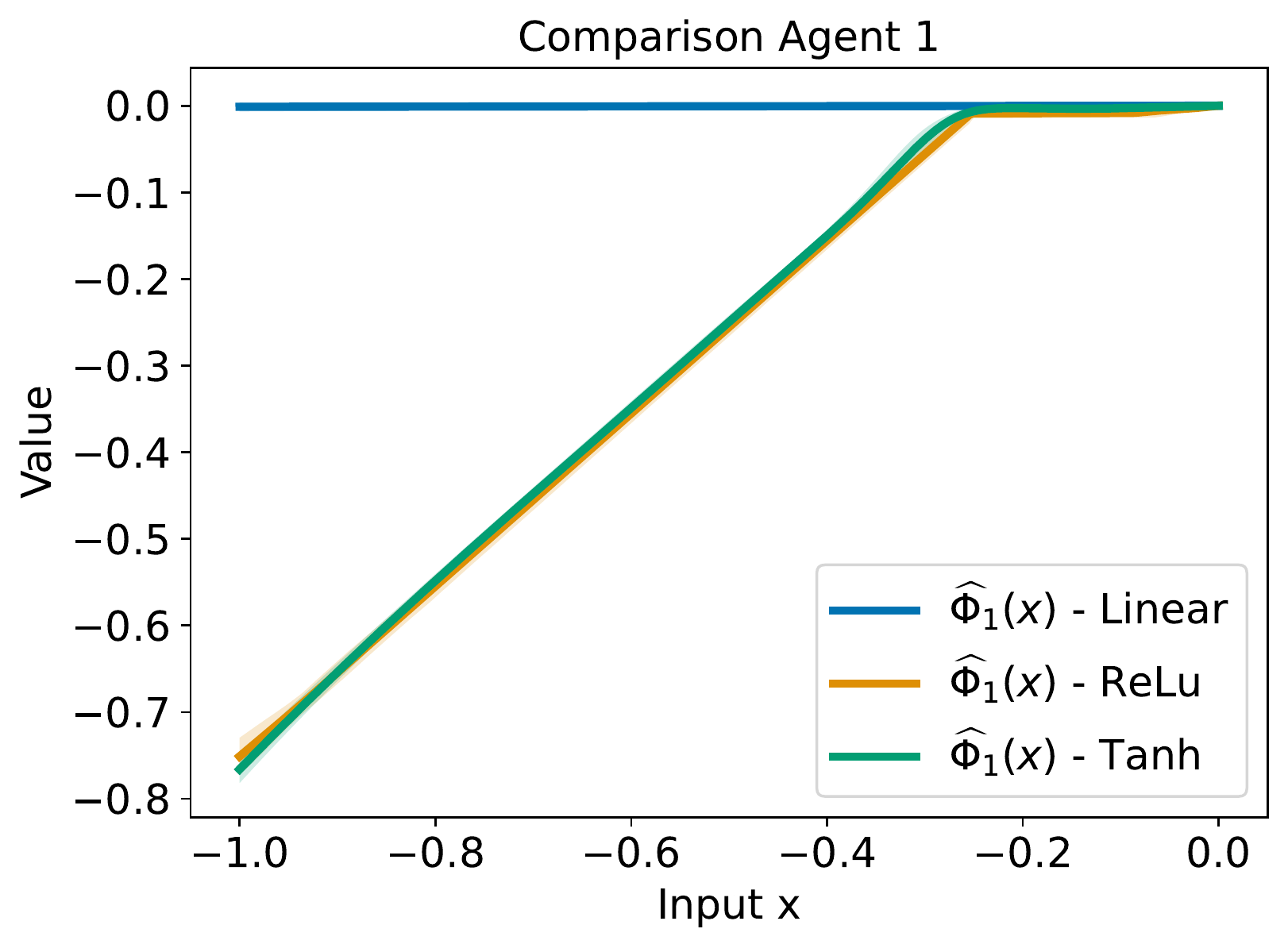}}}
			\subfigure{
				\resizebox*{8cm}{!}{\includegraphics{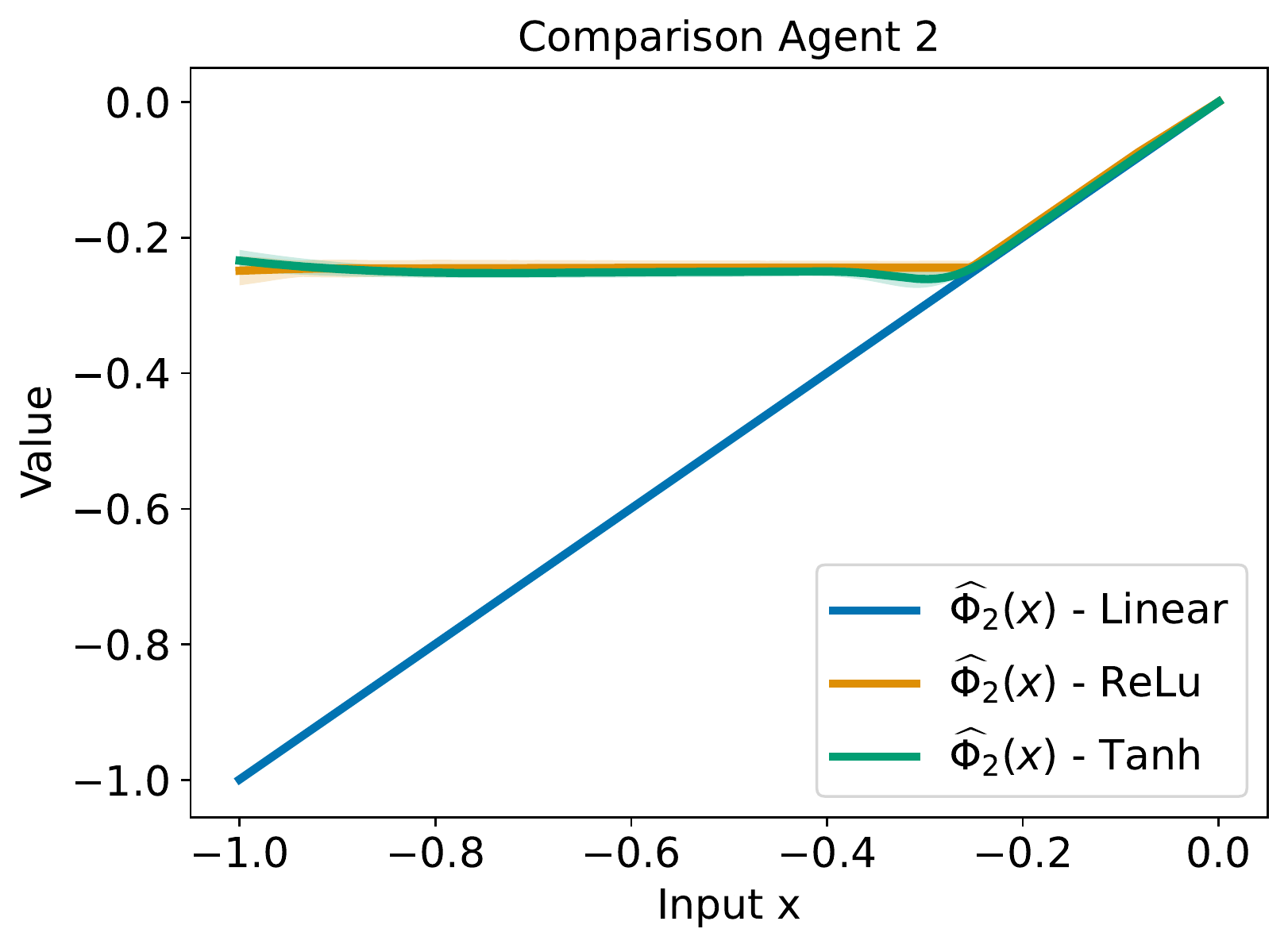}}}
			\caption{Distortion Measures - Beta distribution - Predicted allocations.  We train all models over 3 trials and plot the average Predicted Allocation along with the $\pm 3$ standard deviations.
				\label{fig:Phi_ESES_Beta}}
		\end{minipage}
	\end{center}
\end{figure}
\begin{figure}
	\begin{center}
		\begin{minipage}{160mm}
			\subfigure[Average training loss along with $\pm 3$ standard deviation.]{
				\resizebox*{8cm}{!}{\includegraphics{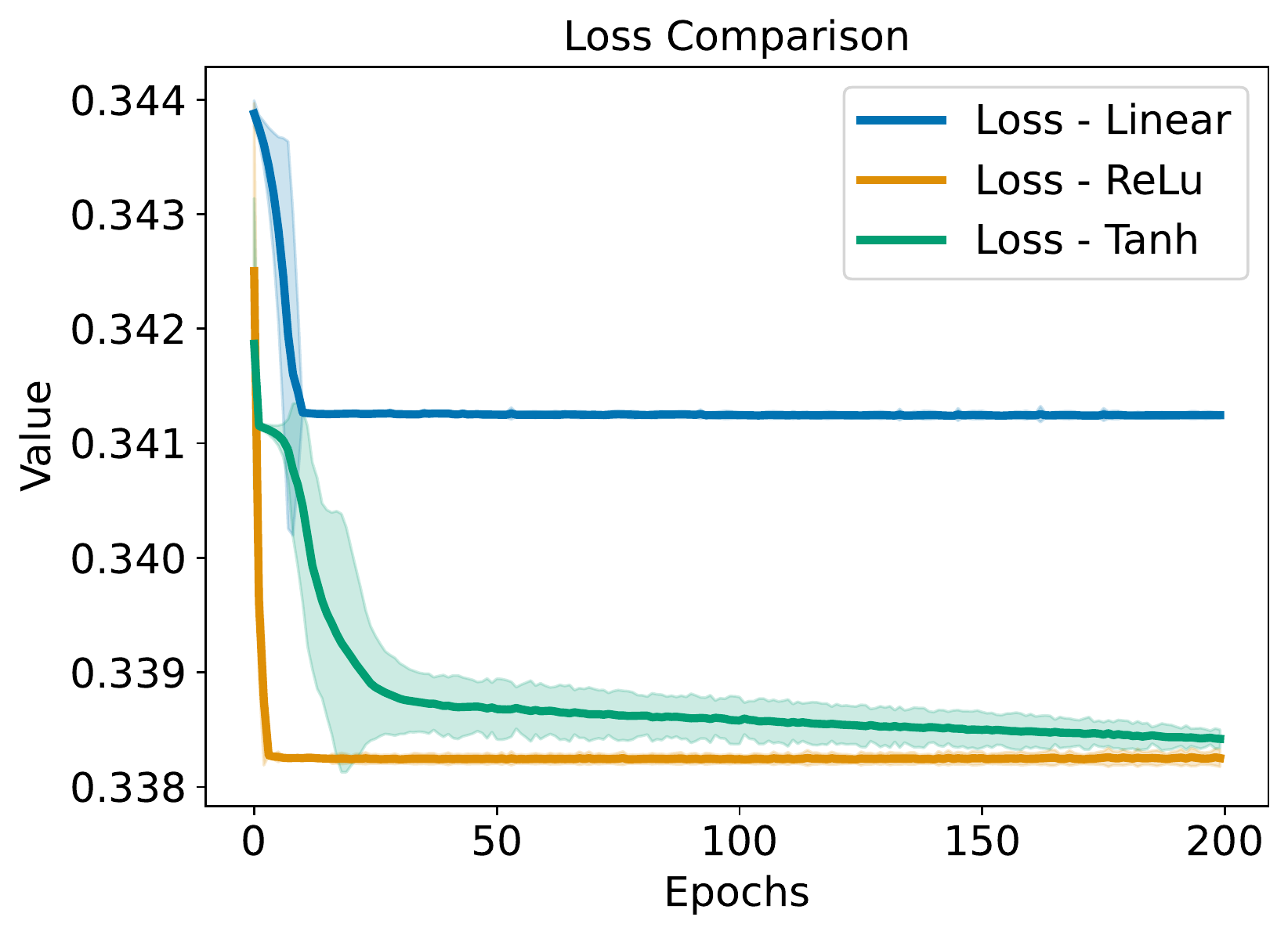}}\label{fig:Loss_Std_ESES_Beta_a}}
			\subfigure[$\hat{\Phi}_1(x)$ for ReLu-activated DNN as a weighted sum of ReLu activations.]{
				\resizebox*{8cm}{!}{\includegraphics{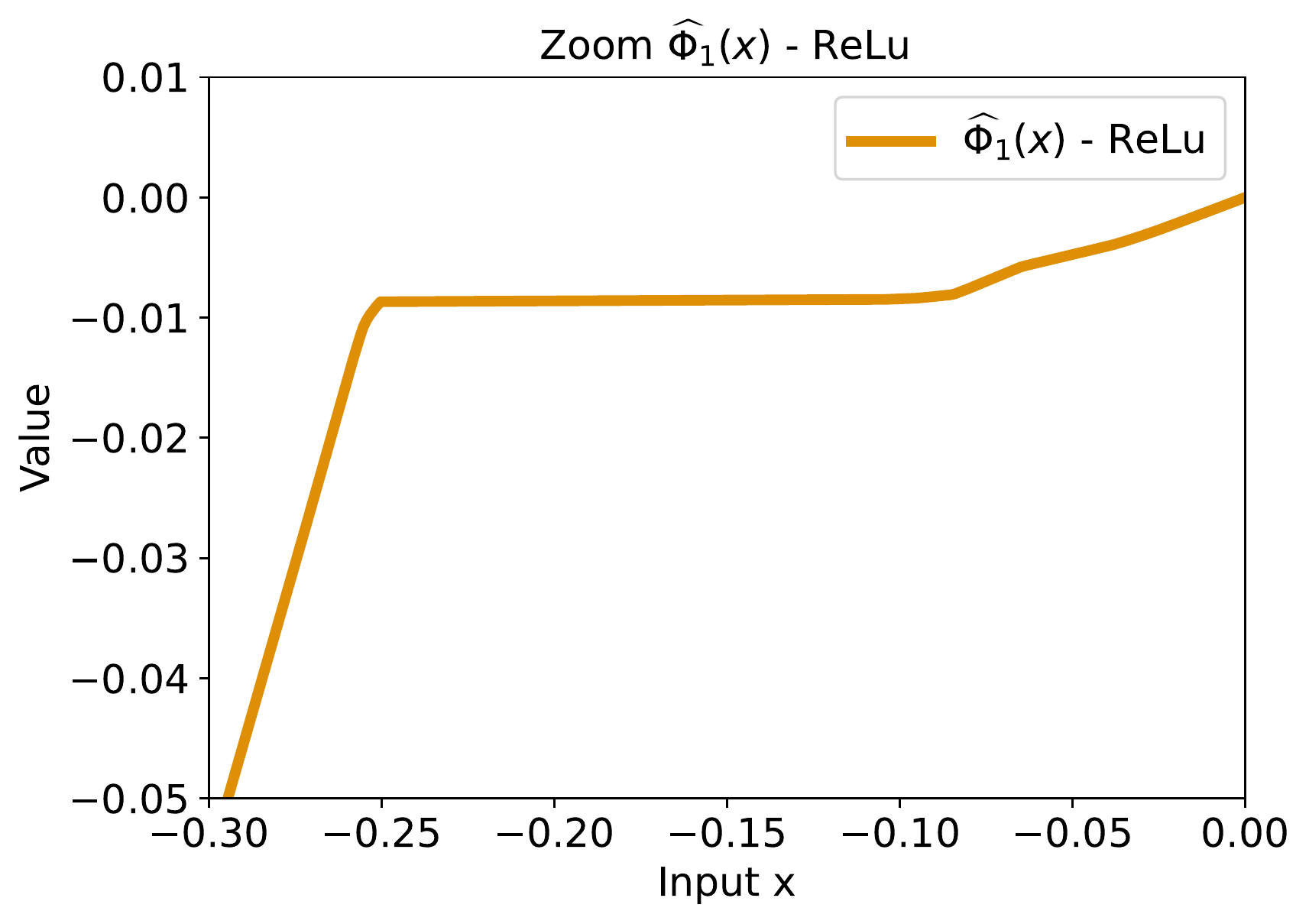}}\label{fig:Loss_Std_ESES_Beta_b}}
			\caption{Distortion Measures - Beta distribution - Convergence Analysis and optimal ReLu. 
				\label{fig:Loss_Std_ESES_Beta}}
		\end{minipage}
	\end{center}
\end{figure}

As anticipated before, we expect the optimal allocations to be linear combinations of ReLu activations. Consistently with the theory, if we look at the average predicted $\widehat{\Phi}_1(x)$ for the case of ReLu, as in Figure \ref{fig:Loss_Std_ESES_Beta_b}, we observe that such expected behavior is captured.
Figure \ref{fig:Loss_Std_ESES_Beta_a} shows the average loss functions as a function of the training epochs. First of all, we notice that the linear NN achieves a loss level that is sensibly larger than those achieved by the ReLu and the Tanh DNNs, confirming the expectations of it poor performance. Secondly, we notice that the loss decreases with decreasing variance, indicating a stable convergence with low uncertainty in all three cases.  

Finally, in Table \ref{table:ESES_full}, we present the average achieved losses, together with the uncertainty of their estimates, for all activation functions and for all distributions. 
In line with the theoretical predictions, the DNN activated with a ReLu function is the one performing best in terms of average loss: all three loss values are, by construction, greater or equal to the theoretical infimum, and the best performance is understood in the sense of achieving the lowest value. The Tanh-activated DNN is comparably reliable. From Table \ref{table:ESES_full}, we can see that in some cases the linear-activated DNN shows the most stable convergence, namely the lowest standard deviation of losses. However, it converges to a loss value that is significantly higher than the other two. This is not unexpected since, by design, the linear-activated DNN is unable to represent a nonlinear function and, therefore, exhibits poorer performances.

\begin{table}
	\begin{center}
		\begin{tabular}{ |c|c|c| } 
\multicolumn{3}{c}{Distortion Measure - $\mathcal{U}[-1,1])$  }  \\
   \hline
 & Avg. Loss & Std. Loss  \\
 \hline
Linear & $ 0.220786 $ & $ \mathbf{1.79364 \cdot 10^{-7}} $\\
\textbf{ReLu}&  $ \mathbf{0.210493} $ & $ 1.43078 \cdot 10^{-6}$  \\
Tanh &  $ 0.210722 $ & $ 4.81990 \cdot 10^{-5}$  \\
\hline
\multicolumn{3}{c}{}\\
\multicolumn{3}{c}{Distortion risk measures - $\mathcal{N}(0,1)$  }  \\
   \hline
 & Avg. Loss & Std. Loss  \\
 \hline
Linear & $ 0.371297 $ & $ \mathbf{1.12391 \cdot 10^{-7}} $\\
\textbf{ReLu} &  $ \mathbf{0.355218} $ & $ 2.53271 \cdot 10^{-7}$ \\
Tanh &  $ 0.355505 $ & $ 3.12838 \cdot 10^{-5}$   \\
\hline
\multicolumn{3}{c}{}\\
\multicolumn{3}{c}{Distortion Measure - $-Beta(2,5)$}  \\
   \hline
 & Avg. Loss & Std. Loss  \\
 \hline
Linear & $ 0.341245 $ & $ 9.27940 \cdot 10^{-6} $\\
\textbf{ReLu} &  $ \mathbf{0.338251} $ & $ \mathbf{2.52160 \cdot 10^{-6}}$ \\
Tanh &  $ 0.338419 $ & $ 2.66384 \cdot 10^{-6}$  \\
\hline
\end{tabular}
\caption{Average Loss of the achieved training losses together with its standard deviation.}
\label{table:ESES_full}
\end{center}
\end{table}

\subsection{Heterogeneous agents}
\label{sec:hetero}
In our last experiments, we consider two heterogeneous agents, in the sense that one adopts an entropic risk measure while the other one opts for a distortion-type risk measure. In the first of such experiments, the risk measures are 
\begin{equation}
\label{Esconentr1}
    \rho_1(X)= \ES_{0.9}(X), \quad \rho_2(X)= \entr_{0.3}(X).
\end{equation}

From \citep{JST07} Proposition 3.2 or \citep{Ruschendorf13} Theorem 11.22, the optimal allocation is known to be induced by $(f,\ide-f)=(-(x-k)^{-}, \max(x,k))$ for some (non-explicit) constant $k$.
In line with the previous subsections, we show an example of the average predicted $\hat{\Phi}_1$ and $\hat{\Phi}_2$. In Figure \ref{fig:Phi_ESEntr_Beta_Big}, we plot the predicted allocations for the $Beta$ distribution, for the three different activation functions. Once again, we expect the solution to be non-linear and we can observe that the optimal allocations found by ReLu and Tanh-activated DNNs are comparable, whereas the one found by the linear-activated DNN differs significantly. In Figure \ref{fig:Loss_Std_ESEntr_Beta_Big_b} we isolated the allocation $\hat{\Phi}_1$ found by the ReLu DNN which, as we will see below, is the one that performed best. We notice that the desired behavior of the optimal allocations is well-captured.

\begin{figure}
	\begin{center}
		\begin{minipage}{160mm}
			\subfigure{
				\resizebox*{8cm}{!}{\includegraphics{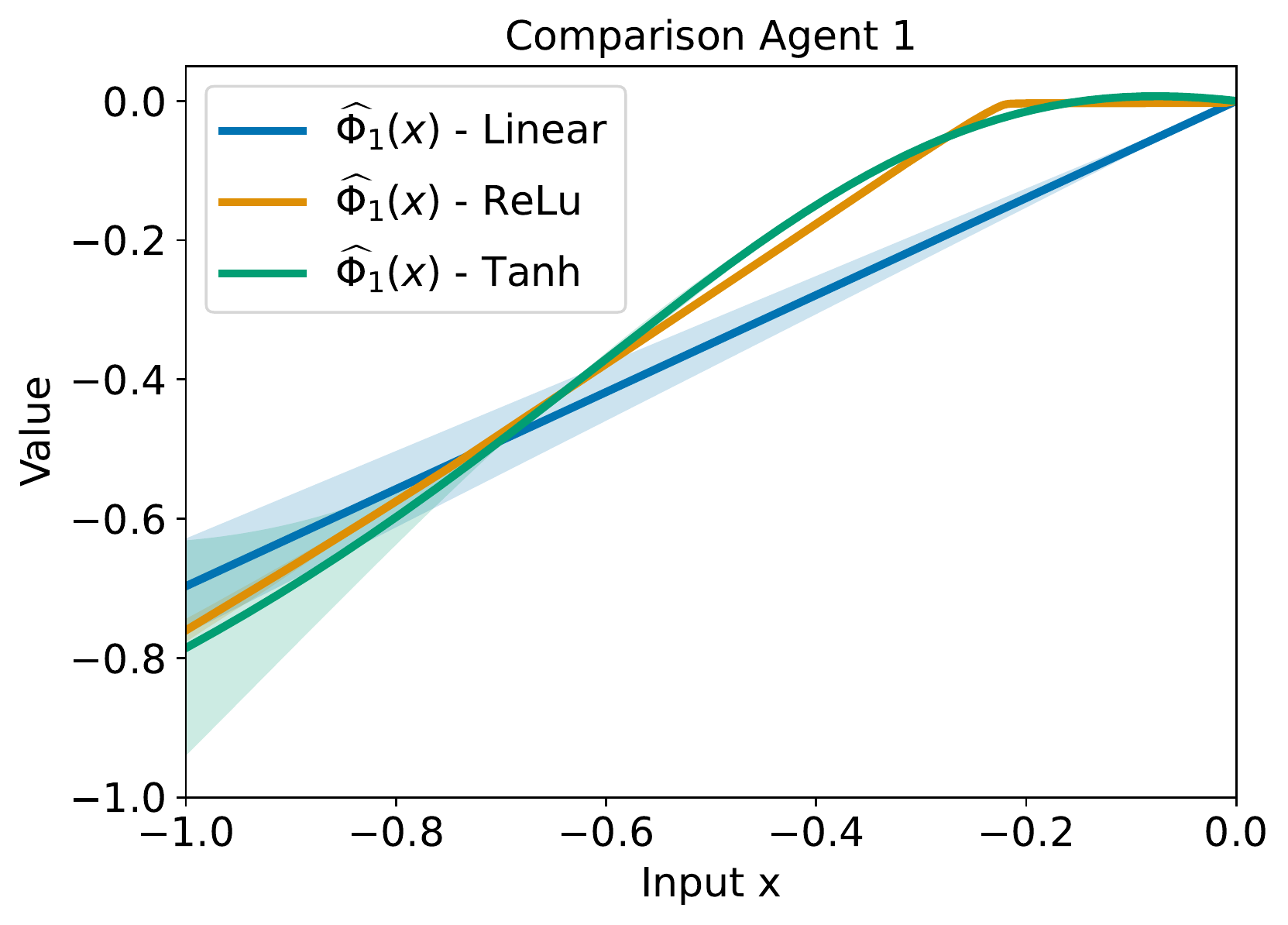}}}
			\subfigure{
				\resizebox*{8cm}{!}{\includegraphics{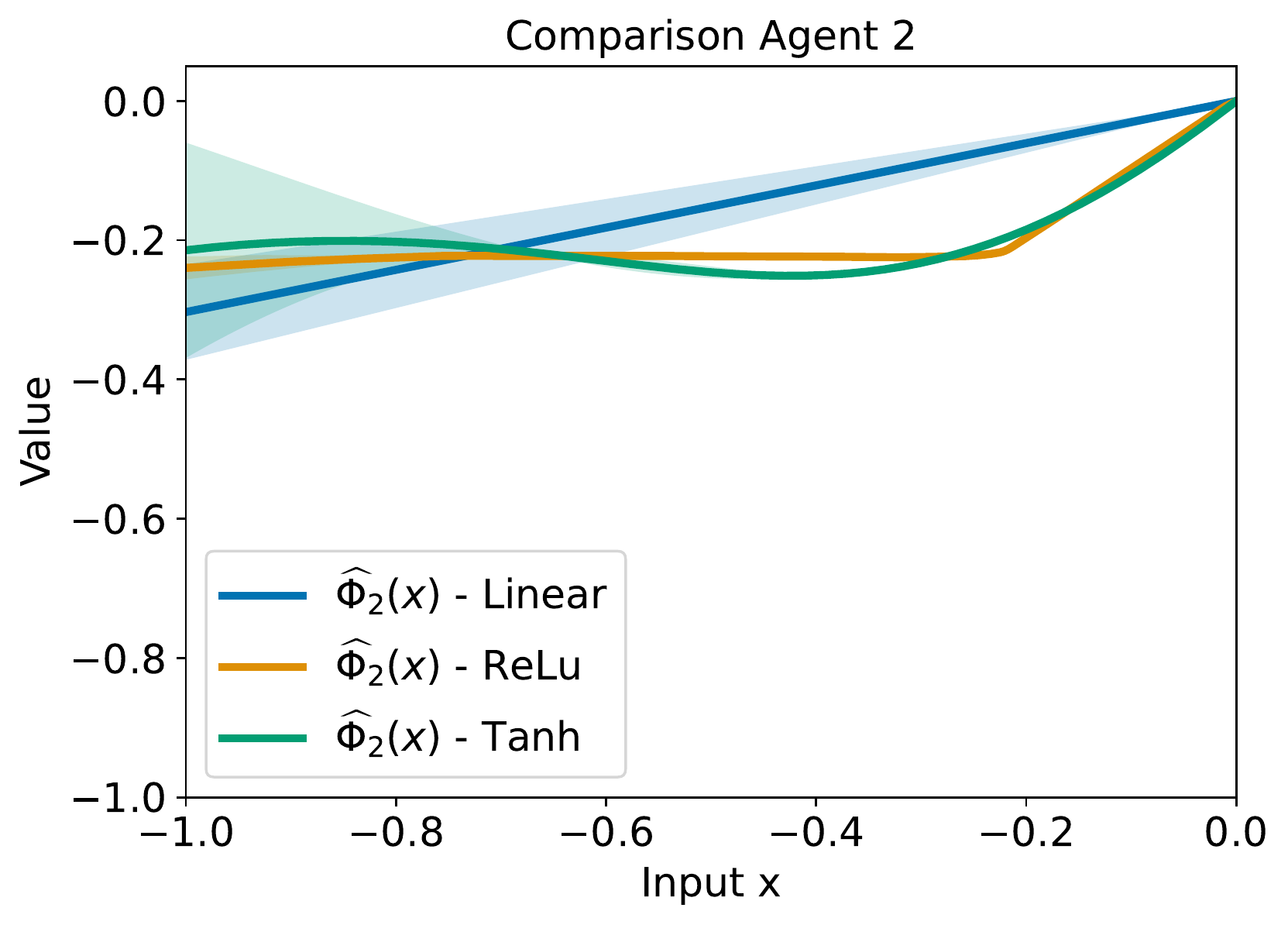}}}
			\caption{Heterogeneous Agents - Case Eq. \eqref{Esconentr1} - Beta distribution - Predicted allocations.  We train all models over 3 trials and plot the average Predicted Allocation along with the $\pm 3$ standard deviations.
				\label{fig:Phi_ESEntr_Beta_Big}}
		\end{minipage}
	\end{center}
\end{figure}

\begin{figure}
	\begin{center}
		\begin{minipage}{160mm}
			\subfigure[Average training loss along with $\pm 3$ standard deviation.]{
				\resizebox*{8cm}{!}{\includegraphics{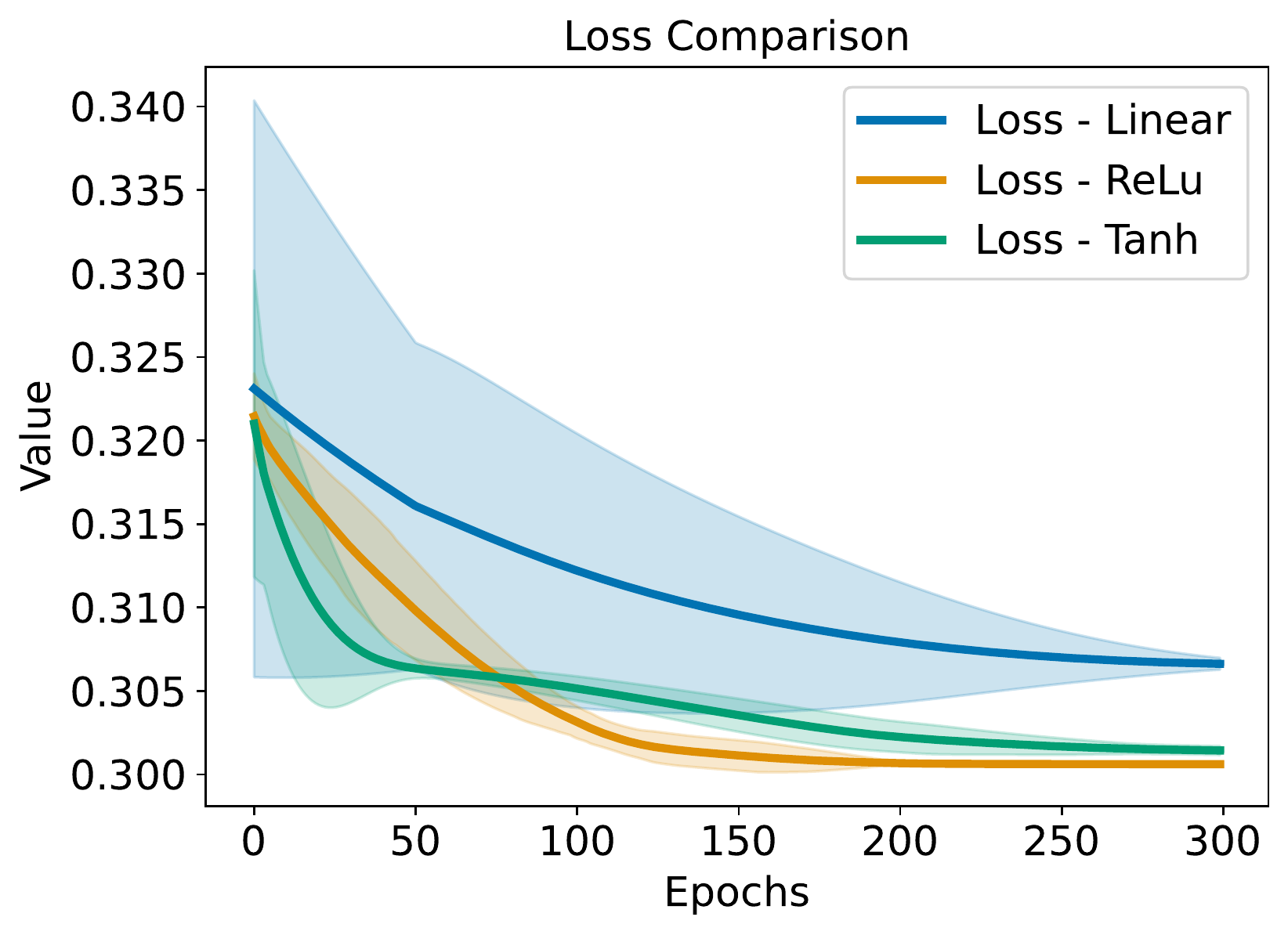}}\label{fig:Loss_Std_ESEntr_Beta_Big_a}}
			\subfigure[$\hat{\Phi}_1(x)$ for ReLu-activated DNN is (almost) as per theoretical prediction]{
				\resizebox*{8cm}{!}{\includegraphics{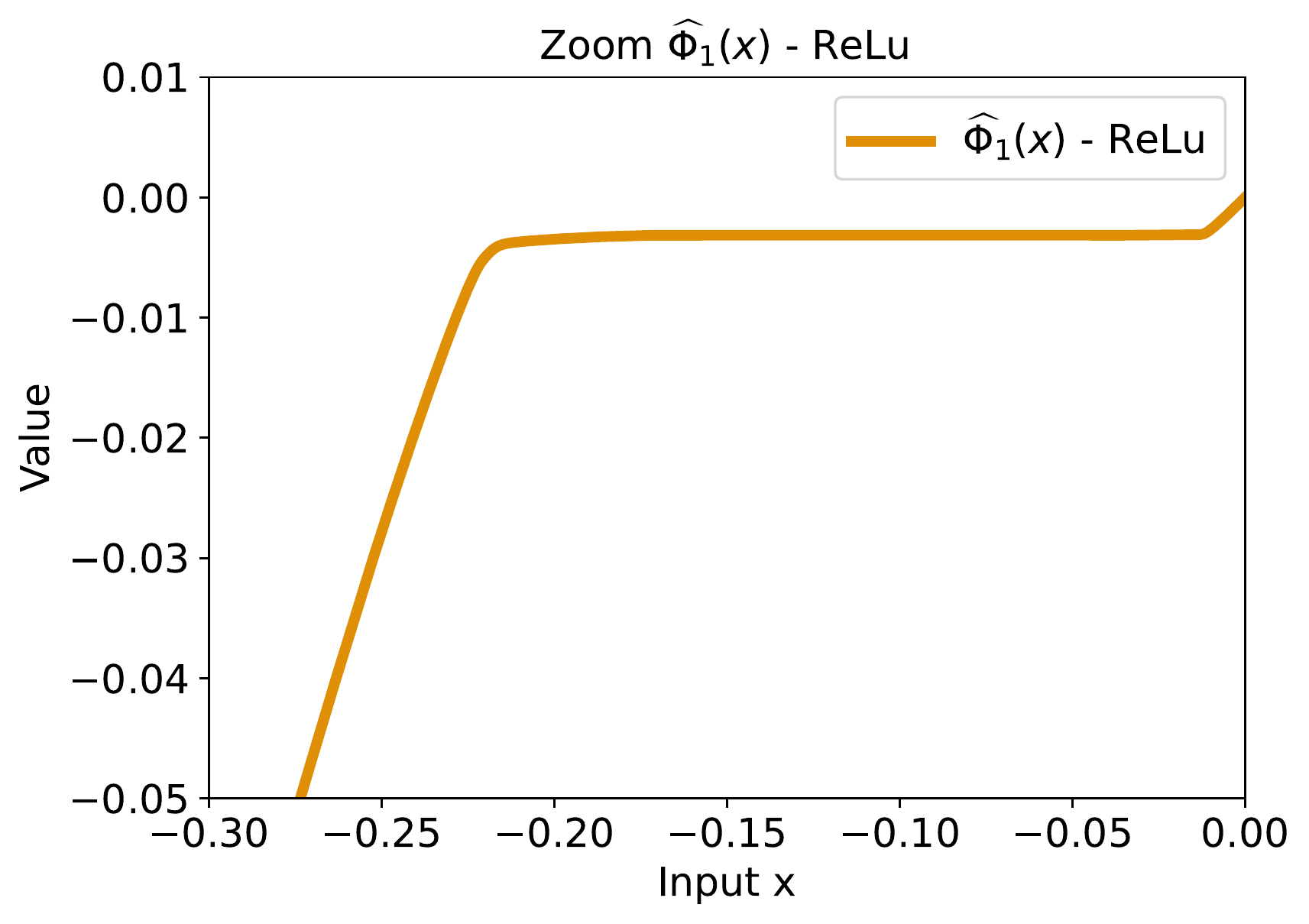}}\label{fig:Loss_Std_ESEntr_Beta_Big_b}}
			\caption{Heterogeneous Measures - Case Eq. \eqref{Esconentr1} - Beta distribution - Convergence Analysis and optimal ReLu. 
				\label{fig:Loss_Std_ESEntr_Beta_Big}}
		\end{minipage}
	\end{center}
\end{figure}

 Figure \ref{fig:Loss_Std_ESEntr_Beta_Big_a} shows the average loss functions as a function of the training epochs, together with their uncertainty-shaded bands. We notice that all networks exhibit stable convergence. However, the linear-activated network achieves a loss level that is sensibly larger than those achieved by the ReLu and Tanh-activated ones. From the picture, it is already clear that Relu is the one performing best in this case. This is confirmed by the data that we collect in Table \ref{table:ESEntr_full}, namely, the average achieved loss together with the uncertainty of their estimates. Nevertheless, we note that while the Tanh NN underperforms with respect to the ReLu one, it provides comparable performances.

\begin{table}
	\begin{center}
		\begin{tabular}{ |c|c|c| } 
\multicolumn{3}{c}{Case Eq. \eqref{Esconentr1} - $\mathcal{U}[-1,1])$  }  \\
   \hline
 & Avg. Loss & Std. Loss  \\
 \hline
Linear & $ 0.0962926 $ & $ \mathbf{1.62093 \cdot 10^{-6}} $\\
\textbf{ReLu}&  $ \mathbf{0.0837376} $ & $ 7.15505 \cdot 10^{-6}$  \\
Tanh &  $ 0.085397 $ & $ 7.64700 \cdot 10^{-5}$ \\
\hline
\multicolumn{3}{c}{}\\
\multicolumn{3}{c}{Case Eq. \eqref{Esconentr1} - $\mathcal{N}(0,1)$  }  \\
   \hline
 & Avg. Loss & Std. Loss  \\
 \hline
Linear &$ 0.185575 $ & $ \mathbf{1.65563 \cdot 10^{-6}} $ \\
\textbf{ReLu} &  $ \mathbf{0.166919} $ & $ 2.50091 \cdot 10^{-4}$ \\
Tanh &  $ 0.169095 $ & $ 9.12596 \cdot 10^{-6}$   \\
\hline
\multicolumn{3}{c}{}\\
\multicolumn{3}{c}{Case Eq. \eqref{Esconentr1} - $-Beta(2,5)$}  \\
   \hline
 & Avg. Loss & Std. Loss  \\
 \hline
Linear & $ 0.306628 $ & $ 1.16732 \cdot 10^{-4} $\\
\textbf{ReLu} &  $ \mathbf{0.300616} $ & $ \mathbf{2.28480 \cdot 10^{-6}}$ \\
Tanh &  $ 0.301437 $ & $ 8.87485 \cdot 10^{-5}$   \\
\hline
\end{tabular}
\caption{Average Loss of the achieved training losses together with its standard deviation.}
\label{table:ESEntr_full}
\end{center}
\end{table}

In our last experiment, we consider a case where, to the best of our knowledge, no theoretical information is available.
Again, we consider two heterogeneous agents, the first one opting for a distortion risk measure, and the second one adopting an entropic risk measure. More precisely, the risk measures are 
\begin{equation}
\label{Entrconmixedes}
  \rho_1(X)= 0.7\ES_{0.8}(X) + 0.3\ES_{0.7}(X), \quad \rho_2(X)= \entr_{0.3}(X).  
\end{equation}

\begin{figure}
	\begin{center}
		\begin{minipage}{160mm}
			\subfigure{
				\resizebox*{8cm}{!}{\includegraphics{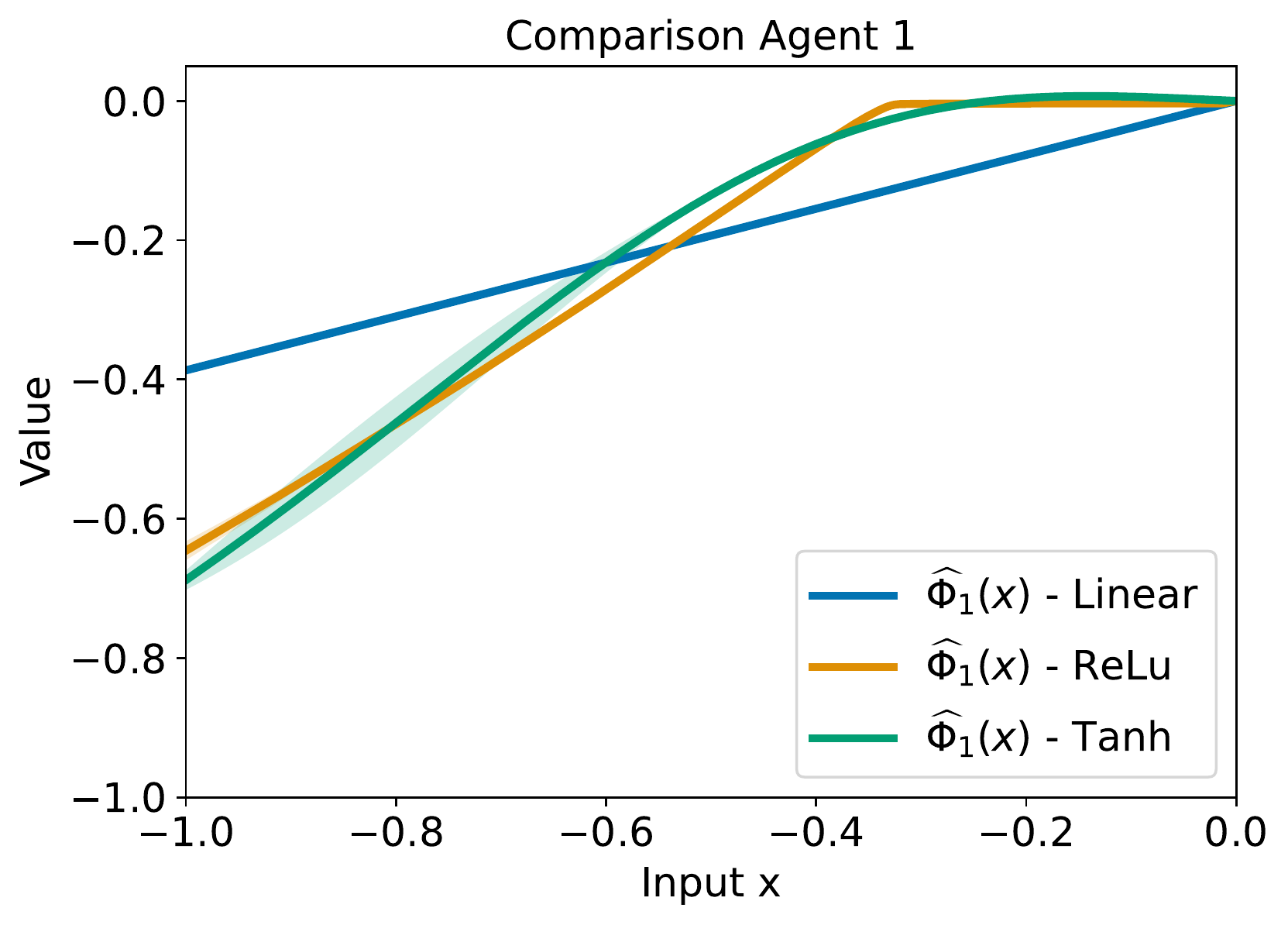}}}
			\subfigure{
				\resizebox*{8cm}{!}{\includegraphics{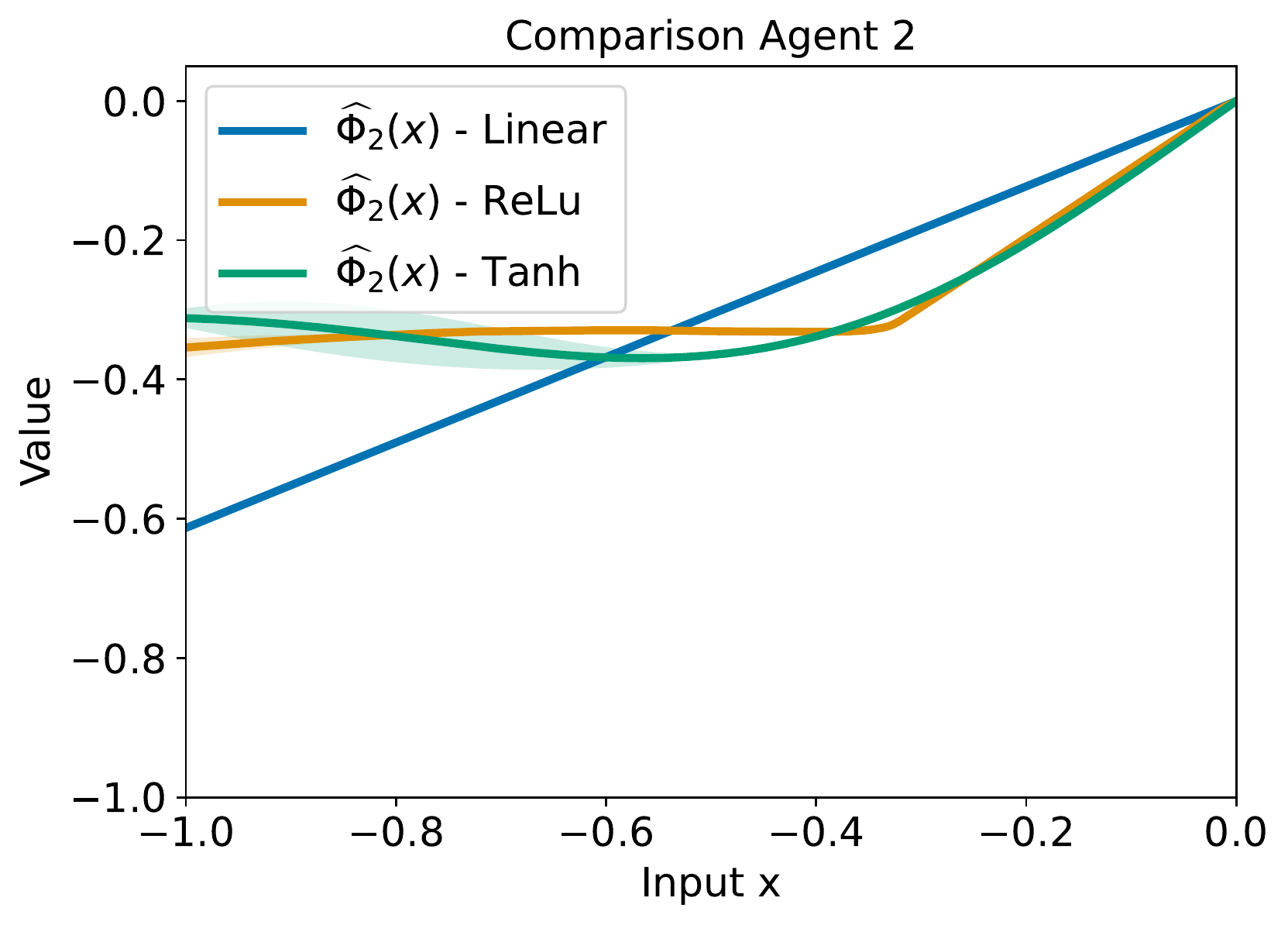}}}
			\caption{Heterogeneous Agents - Case Eq. \eqref{Entrconmixedes} - Beta distribution - Predicted allocations.  We train all models over 3 trials and plot the average Predicted Allocation along with the $\pm 3$ standard deviations.
				\label{fig:Phi_ESESEntr_Beta_Big}}
		\end{minipage}
	\end{center}
\end{figure}

\begin{figure}
	\begin{center}
		\begin{minipage}{160mm}
			\subfigure[Average training loss along with $\pm 3$ standard deviation.]{
				\resizebox*{8cm}{!}{\includegraphics{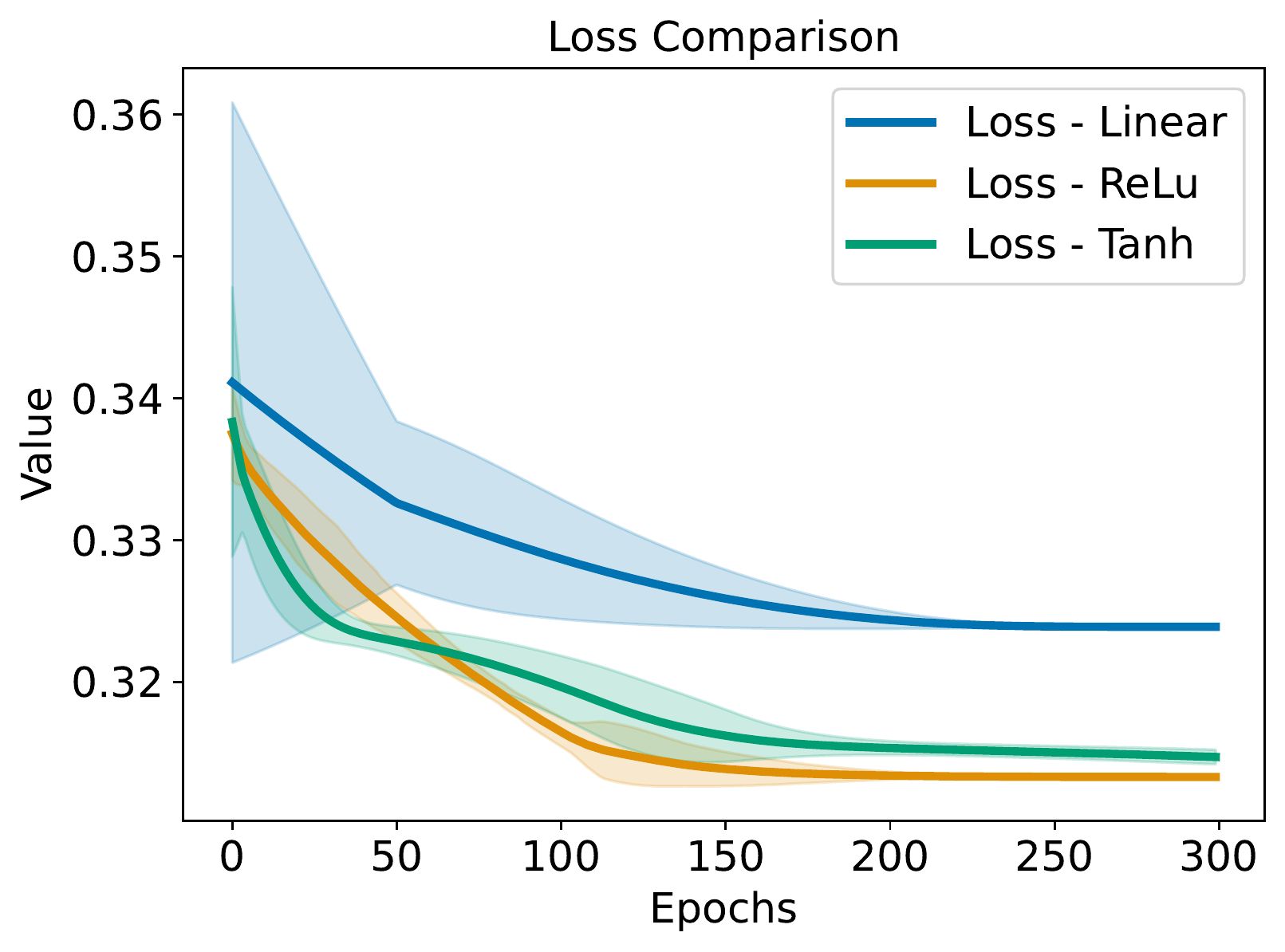}}\label{fig:Loss_Std_ESESEntr_Beta_Big_a}}
			\subfigure[$\hat{\Phi}_1(x)$ for ReLu-activated DNN as a weighted sum of ReLu functions.]{
				\resizebox*{8cm}{!}{\includegraphics{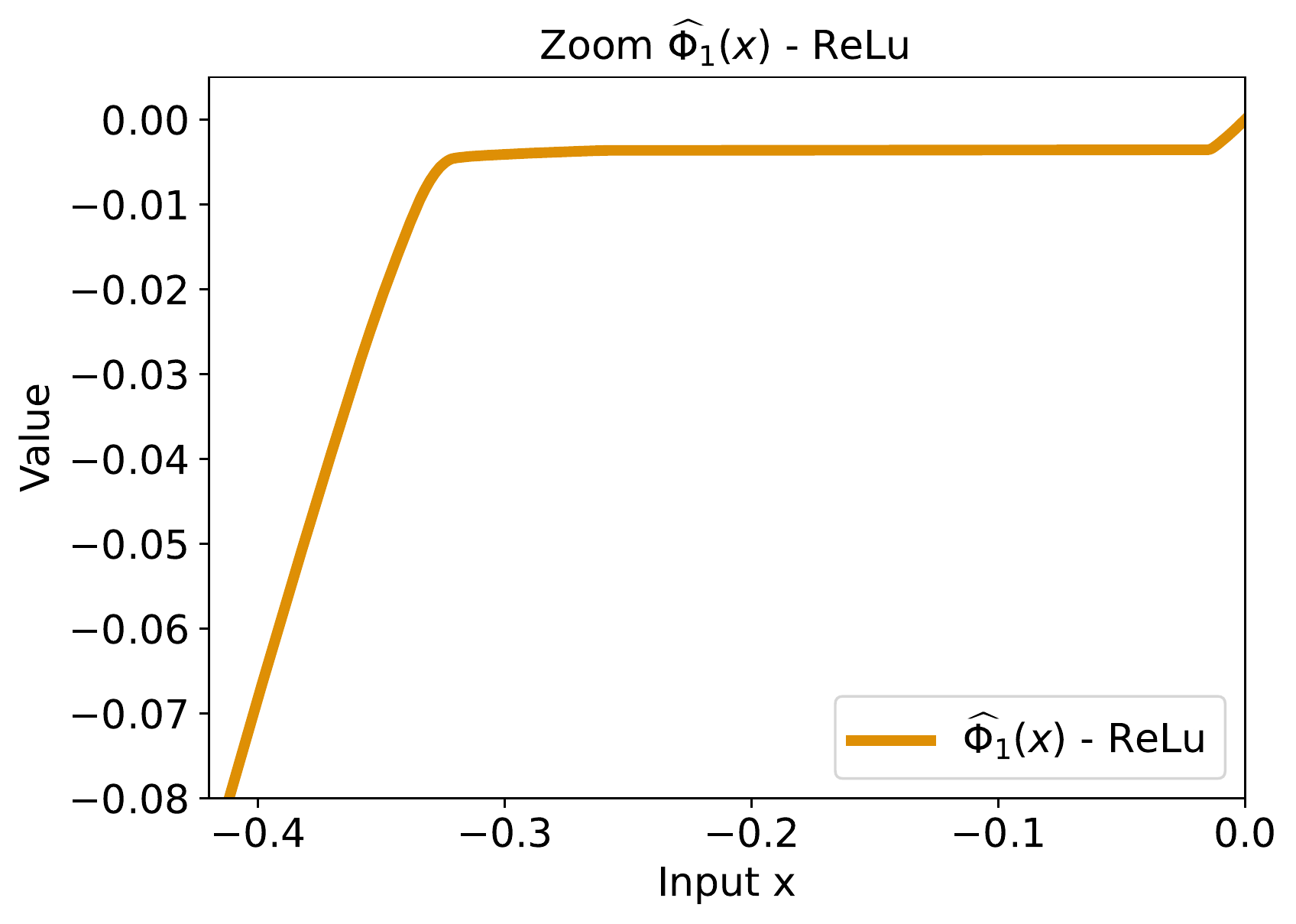}}\label{fig:Loss_Std_ESESEntr_Beta_Big_b}}
			\caption{Heterogeneous Measures - Case Eq. \eqref{Entrconmixedes} - Beta distribution - Convergence Analysis and optimal ReLu. 
				\label{fig:Loss_Std_ESESEntr_Beta_Big}}
		\end{minipage}
	\end{center}
\end{figure}

In Figure \ref{fig:Phi_ESESEntr_Beta_Big} we plot the  average predicted $\hat{\Phi}_1$ and $\hat{\Phi}_2$ for the beta distribution, for the three different activation functions. As in the cases in Section \ref{secsemiexpl} and in the previous heterogeneous case, we anticipate a non-linear behavior, which translates into linear activated DNNs underperforming significantly. We can observe that the optimal allocations found by ReLu and Tanh-activated DNNs are comparable. In Figure \ref{fig:Loss_Std_ESESEntr_Beta_Big} we isolated the allocation $\hat{\Phi}_1$ found by the ReLu DNN which, as in the previous heterogeneous case of Section \ref{sec:hetero}, is the one that performed best, which is confirmed by Table \ref{table:ESESEntr_full}.

\begin{table}
	\begin{center}
		\begin{tabular}{ |c|c|c| } 
\multicolumn{3}{c}{Case Eq. \eqref{Entrconmixedes} - $\mathcal{U}[-1,1])$  }  \\
   \hline
 & Avg. Loss & Std. Loss  \\
 \hline
Linear & $ 0.2064738 $ & $ \mathbf{8.26712 \cdot 10^{-7}} $\\
\textbf{ReLu}&  $ \mathbf{0.1786352} $ & $ 1.22513 \cdot 10^{-6}$  \\
Tanh &  $ 0.1807351 $ & $ 4.32114 \cdot 10^{-5}$ \\
\hline
\multicolumn{3}{c}{}\\
\multicolumn{3}{c}{Case Eq. \eqref{Entrconmixedes} - $\mathcal{N}(0,1)$  }  \\
   \hline
 & Avg. Loss & Std. Loss  \\
 \hline
Linear &$ 0.3662076 $ & $ \mathbf{9.0044 \cdot 10^{-7}} $ \\
\textbf{ReLu} &  $ \mathbf{0.32997002} $ & $ 2.92006 \cdot 10^{-5}$ \\
Tanh &  $ 0.3324212 $ & $ 1.65775 \cdot 10^{-4}$   \\
\hline
\multicolumn{3}{c}{}\\
\multicolumn{3}{c}{Case Eq. \eqref{Entrconmixedes} - $-Beta(2,5)$}  \\
   \hline
 & Avg. Loss & Std. Loss  \\
 \hline
Linear & $ 0.3238797 $ & $ \mathbf{1.00014 \cdot 10^{-6}} $\\
\textbf{ReLu} &  $ \mathbf{0.3132883} $ & $ 1.25900 \cdot 10^{-6}$ \\
Tanh &  $ 0.3147001 $ & $ 1.70660 \cdot 10^{-4}$   \\
\hline
\end{tabular}
\caption{Average Loss of the achieved training losses together with its standard deviation.}
\label{table:ESESEntr_full}
\end{center}
\end{table}

All networks exhibit stable convergence. Still, as expected, the linear-activated DNN achieves a far larger loss level. The Tanh NN underperforms with respect to the ReLu one, yet still provides comparable performances.

\bibliographystyle{BibliographyStyle}
\bibliography{Bibliography}

\begin{appendices}
\section{Implementation details and additional experimental results}
All code is implemented in Python and the Deep Learning library used is PyTorch. In each experiment, the dataset $\Tilde{X}$ is of size $N=100000$, while the \texttt{batch size} is $b=1000$. All the neural networks have $3$ hidden layers of $100$ neurons each and have been optimized with \texttt{Adam}. More precisely, the \texttt{learning rate} is $10^{-6}$ while all other settings of Adam are as per default setting. \label{OnSimplicity} We remind that it is a very well-known result of convex optimization that the learning rate has to be smaller than twice the inverse of the largest eigenvalue of the loss function for Gradient Descent to converge. In practice, this is a valuable indication also in nonconvex optimization. Even if our choice for the learning rate might seem unusual, such a low value was necessary for our experiments, as we observed that higher ones would lead to instability in the optimization process. This is oftentimes an indication that the optimization problem at hand is rather nonlinear and the loss landscape is irregular, together with its derivatives. To make the convergence even more stable, we used the \textit{ReduceLROnPlateau} scheduler for the learning rate, with \texttt{patience} equal to $1000$ and \texttt{threshold} equal to $10^{-6}$, while all other parameters are as per default specification. Finally, all experiments have been run for a number of \texttt{epochs} equal to $300$, apart from those for the Distortion Measures where the number of \texttt{epochs} is $200$. The optimal hyperparameters are the result of fine-tuning via extensive grid search.

We finally complete the exposition of the numerical results for the entropic risk measure and expected shortfall experiments. Table \ref{table:Entr_full} and \ref{table:ES_full} contain the average relative error with respect to the theoretical infimum, together with its standard deviation, and the $L^2$ error of $\hat{\Phi}_1$ with respect to the theoretical $\hphi$ for all distributions and activation functions.

In our experiments, we observe that both ReLu and Tanh activation functions performed well in all cases, even when the solution was known to be linear. ReLu seemed to perform better in most of the cases. This is due to the fact that in some cases the semi-explicit solution has a piecewise linear behavior.

\subsection{Possible Enhancements}\label{sub:PossEnh}

The deep learning literature offers a variety of architectural and methodological enhancements that could be used to further push the results that we obtained. 

\noindent One could include several other activation functions, such as GELU \citep{hendrycks2016gaussian} or ELU which obtained recent success in NLP \citep{brown2020language} and Image Classification \citep{clevert2015fast}, respectively. Similarly, one could try different optimizers which may converge to more stable regions of the loss landscape. For example, recent optimizers that found great success in NLP and Computer Vision are SAM and its variants. As detailed in \citep{wen2022does} and \citep{compagnoni2023sde}, this class of optimizers drives the dynamics towards flatter regions of the landscapes which result in provenly more stable DNNs. Other possibilities include standard techniques such as Batch Normalization and Residual Connections which are proven to stabilize the optimization process.

\label{onmonot} Finally, since the functions we are learning are monotonic, an interesting approach, suggested by an anonymous referee, would be to enforce the monotonicity of the approximating functions. This could be attained by leveraging specific network structures such as in \citep{daniels2010monotone} or suitable penalization terms \citep{liu2020certified}. While all our experiments reached convergence without the need of imposing monotonicity, this might be necessary in other cases where convergence is more elusive.
As a side note, we remark that not enforcing a priori monotonicity allows for a further sanity check in the experiments, as we can check if the monotone behavior of the optima is learned without any external enforcement.

It is worth noting that, for all the architectural changes that would alter the DNNs, one should of course provide the proof of suitable versions of the Theorems \ref{thm:approx1}, \ref{thm:approx2}  and Theorem \ref{thm:main_conv} for this very specific class of NNs.
Since our experiments already achieved satisfactory results, there is no compelling reason to do so at the moment, and we leave these for future research.

\begin{table}
	\begin{center}
		\begin{tabular}{ |c|c|c|c| } 
\multicolumn{4}{c}{Entropic case - $\mathcal{U}[-1,1]$  - Infimum = $0.03423$ }  \\
   \hline
 & Avg. Rel. Error & Std. Rel. Error & Avg. $L^{2}$ Error $\widehat{\Phi}_1$ \\
 \hline
Linear & $\mathbf{\% 2.466 \cdot 10^{-4}} $ & $\mathbf{\% 2.888 \cdot 10^{-4}}$ & $\mathbf{1.207 \cdot 10^{-8}}$\\
ReLu & $\% 9.067 \cdot 10^{-4}$ & $\% 4.682 \cdot 10^{-4}$ & $1.802\cdot 10^{-4}$ \\
Tanh & $\% 1.135\cdot 10^{-3}$ & $\% 3.498 \cdot 10^{-4}$ & $1.077 \cdot 10^{-4}$  \\
\hline
\multicolumn{4}{c}{}\\
\multicolumn{4}{c}{Entropic case - $\mathcal{N}(0,1)$  - Infimum = $0.09656$ }  \\
   \hline
 & Avg. Rel. Error & Std. Rel. Error & Avg. $L^{2}$ Error $\widehat{\Phi}_1$ \\
 \hline
Linear & $\% 1.800 \cdot 10^{-5} $ & $\% 1.172 \cdot 10^{-5}$ & $\mathbf{1.788 \cdot 10^{-7}}$\\
ReLu & $\% \mathbf{5.143 \cdot 10^{-6}}$ & $\% 6.341 \cdot 10^{-5}$ & $4.096 \cdot 10^{-5}$ \\
Tanh & $\% 5.955\cdot 10^{-6}$ & $\% \mathbf{1.814 \cdot 10^{-6}}$ & $5.813 \cdot 10^{-5}$ \\
\hline
\multicolumn{4}{c}{}\\
\multicolumn{4}{c}{Entropic case  - $Beta(2,5)$  - Infimum = $0.2876$  }  \\
   \hline
 & Avg. Rel. Error & Std. Rel. Error & Avg. $L^{2}$ Error $\widehat{\Phi}_1$ \\
 \hline
Linear & $\% 8.979 \cdot 10^{-5} $ & $\mathbf{\% 2.128 \cdot 10^{-5}}$ & $\mathbf{3.759 \cdot 10^{-8}}$\\
ReLu & $\mathbf{\% 7.943 \cdot 10^{-5}}$ & $\% 5.504 \cdot 10^{-5}$ & $3.146\cdot 10^{-4}$ \\
Tanh & $\% 1.001\cdot 10^{-3}$ & $\% 3.202 \cdot 10^{-5}$ & $1.299 \cdot 10^{-4}$\\
\hline
\end{tabular}
\caption{Average relative errors between the losses and the theoretical infimum, their standard deviation, and the Average $L^{2}$ Error between $\widehat{\Phi}_1$ and $\Phi_1$.}
\label{table:Entr_full}
\end{center}
\end{table}

\begin{table}
	\begin{center}
		\begin{tabular}{ |c|c|c|c| } 
\multicolumn{4}{c}{Expected Shortfall  - $\mathcal{U}[-1,1])$  - Infimum = $0.2006$ }  \\
   \hline
 & Avg. Rel. Error & Std. Rel. Error & Avg. $L^{2}$ Error $\widehat{\Phi}_1$ \\
 \hline
Linear & $\mathbf{\% 2.846 \cdot 10^{-4}} $ & $\mathbf{\% 2.270 \cdot 10^{-3}}$ & $\mathbf{9.701 \cdot 10^{-6}}$\\
ReLu & $\% 1.379 \cdot 10^{-2} $ & $\% 8.171 \cdot 10^{-3}$ & $1.817 \cdot 10^{-2}$ \\
Tanh & $\% 2.742 \cdot 10^{-2}$ & $\% 3.301 \cdot 10^{-3}$ & $2.521 \cdot 10^{-3}$ \\
\hline
\multicolumn{4}{c}{}\\
\multicolumn{4}{c}{Expected Shortfall - $\mathcal{N}(0,1)$  - Infimum = $0.3459$ }  \\
   \hline
 & Avg. Rel. Error & Std. Rel. Error & Avg. $L^{2}$ Error $\widehat{\Phi}_1$ \\
 \hline
Linear & $\mathbf{\% 2.961 \cdot 10^{-3}} $ & $\mathbf{\% 1.070 \cdot 10^{-3}}$ & $\mathbf{3.050 \cdot 10^{-5}}$\\
ReLu & $\% 1.910 \cdot 10^{-2}$ & $\% 1.036 \cdot 10^{-2}$ & $3.325\cdot 10^{-2}$  \\
Tanh & $\% 1.530\cdot 10^{-1}$ & $\% 1.867 \cdot 10^{-2}$ & $2.511 \cdot 10^{-2}$  \\
\hline
\multicolumn{4}{c}{}\\
\multicolumn{4}{c}{Expected Shortfall - $Beta(2,5)$  - Infimum = $0.3343$ }  \\
   \hline
 & Avg. Rel. Error & Std. Rel. Error & Avg. $L^{2}$ Error $\widehat{\Phi}_1$ \\
 \hline
Linear & $\mathbf{\% 6.031 \cdot 10^{-4}} $ & $\mathbf{\% 5.253 \cdot 10^{-4}}$ & $\mathbf{1.813 \cdot 10^{-5}}$\\
ReLu  & $\% 1.455 \cdot 10^{-3} $ & $\% 5.449 \cdot 10^{-4}$ & $3.969 \cdot 10^{-2}$ \\
Tanh & $\% 1.017\cdot 10^{-1}$ & $\% 1.506 \cdot 10^{-2}$ & $2.284 \cdot 10^{-3}$\\
\hline
\end{tabular}
\caption{Average relative errors between the losses and the theoretical infimum, their standard deviation, and the Average $L^{2}$ Error between $\widehat{\Phi}_1$ and $\Phi_1$.}
\label{table:ES_full}
\end{center}
\end{table}

\section{Modeling Alternatives} \label{app:AltMod}
As suggested by an anonymous referee, there might be other possible ways to successfully model the functions $f$ and $\ide - f$, for example, using a basis-based approach, such as Random Feature Models, \citep{rahimi2008uniform} or using Kernel functions \citep{scholkopf2018learning}.

In the basis-based approach, it is required to fix (or randomly generate) a number of representations of the input and then to linearly combine them to fit the output via a linear layer. These techniques have proven to be effective and computationally cheap in many fields \citep{NIPS2007_013a006f}. However, the key to their success is a careful design and selection of the (possibly random) features, an operation which is not always straightforward \citep{Compagnoni2022OnTE}. Much differently, DNNs are able to learn and adapt the features during the optimization procedure.

The second approach is based on Reproducing Kernel Hilbert Space (RKHS), also known in the Machine Learning community as \textit{kernel methods}. This is a very powerful set of techniques that maps the input data into a higher (possibly infinite) dimensional space, in which it is easier to separate data points respect to their native space. These methods found success in many applications \citep{scholkopf2018learning} such as in Classification, Signal Detection \citep{kailath1975rkhs}, and Function Emulation \citep{scholkopf2015computing}.
However, we find that the \textit{kernel trick} \citep{theodoridis2006pattern} at the basis of these methods does not allow us to find a closed-form solution for our problem. Therefore, while this would allow us to face a convex optimization problem, we would still have to rely on an optimizer such as SGD (or Adam) to actually find the unique solution. In this regard, we recall that using RKHS requires calculating the Gramian matrix, which has a complexity of $\mathcal{O}(N^2)$, where $N$ is the number of data points. Therefore, even just evaluating the loss function in Eq. \eqref{eq:loss} has a complexity of $\mathcal{O}(N^2)$, for each training epoch. This cost is additional to the computation of gradients and the update of the parameters in the optimization step, therefore, we expect a much higher computational cost and less scalability of RKHS-based techniques compared to that of DNN. From a theoretical point of view, consistency results for RKHS-based techniques in the literature are only available for the Supervised Learning case and it is not clear if they would be easily adapted to our Unsupervised Learning setting.

To conclude, while many alternatives are present, many of them present criticalities such as higher computational cost and design challenges, that DNNs easily avoid.

\end{appendices}

\vspace{0.5cm}

\textbf{Disclosure statement}: The authors report there are no competing interests to declare. 

\vspace{0.5cm}

\textbf{Acknowledgements}: The authors thank two anonymous referees for precious comments, and F.-B. Liebrich for addressing them to the reference \citep{Shapiro13} and for pointing out the delicate point of the standardness requirements on the underlying probability space.

 \end{document}